\theoremstyle{plain}
\newtheorem{proposition}{Proposition}
\newtheorem{lemma}{Lemma}
\newtheorem{theorem}{Theorem}
\newtheorem{assumption}{Assumption}
\newtheorem*{main}{Theorem}
\newtheorem{definition}{Definition}
\newtheorem{remark}{Remark}
\def\bmg{{\bm g}}
\def\bmF{{\bm F}}
\def\bmL{{\bm L}}
\def\bmM{{\bm M}}
\def\bmQ{{\bm Q}}
\def\bmT{{\bm T}}
\def\bmUpsilon{{\bm \Upsilon}}
\def\bmLambda{{\bm \Lambda}}
\def\fraka{\mathfrak{a}}
\def\frakb{\mathfrak{b}}
\def\frakc{\mathfrak{c}}
\def\frakd{\mathfrak{d}}
\def\frake{\mathfrak{e}}
\newcounter{mnotecount}
\newcommand{\mnotex}[1]
{\protect{\stepcounter{mnotecount}}$^{\mbox{\footnotesize $\bullet$\themnotecount}}$ 
\marginpar{
\raggedright\tiny\em
$\!\!\!\!\!\!\,\bullet$\themnotecount: #1} }
\begin{document}

\title{\textbf{Conformal wave equations for the Einstein-tracefree
    matter system}}

\author[1]{Diego A. Carranza \footnote{E-mail address:{\tt d.a.carranzaortiz@qmul.ac.uk}}}
\author[1]{Adem E. Hursit \footnote{E-mail address:{\tt a.e.hursit@qmul.ac.uk}}}
\author[1]{Juan A. Valiente Kroon \footnote{E-mail address:{\tt j.a.valiente-kroon@qmul.ac.uk}}}
\affil[1]{School of Mathematical Sciences, Queen Mary, University of London,
Mile End Road, London E1 4NS, United Kingdom.}

\medskip

\maketitle

\begin{abstract}
Inspired by a similar analysis for the vacuum conformal Einstein field
equations by Paetz [Ann. H. Poincar\'e \textbf{16}, 2059 (2015)], in this
article we show how to construct a system of quasilinear wave equations for the
geometric fields associated to the conformal Einstein field equations coupled
to matter models whose energy-momentum tensor has vanishing trace. In this
case, the equation of conservation for the energy-momentum tensor is
conformally invariant. Our analysis includes the construction of a subsidiary
evolution which allows to prove the propagation of the constraints. We discuss
how the underlying structure behind these systems of equations is the set of
integrability conditions satisfied by the conformal field equations. The main
result of our analysis is that both the evolution and subsidiary equations for
the geometric part of the conformal Einstein-tracefree matter field equations
close without the need of any further assumption on the matter models other
that the vanishing of the trace of the energy-momentum tensor. Our work is
supplemented by an analysis of the evolution and subsidiary equations
associated to three basic tracefree matter models: the conformally invariant
scalar field, the Maxwell field and the Yang-Mills field. As an application we
provide a global existence and stability result for de Sitter-like spacetimes.
In particular, the result for the conformally invariant scalar field is new in
the literature. 
\end{abstract}

\section{Introduction}
The \emph{conformal Einstein field equations} are a conformal representation of
the Einstein field equations which permit us to study the global properties of the
solutions to equations of General Relativity by means of Penrose's procedure of
conformal compactification --- see e.g. \cite{Fri15,CFEBook} for an entry point
to the literature on the subject. Crucially, a solution to the conformal
Einstein field equations implies a solution to the Einstein field equations away
from the conformal boundary. 

\medskip
A key step in the analysis involving the conformal Einstein field equations is
the so-called \emph{procedure of hyperbolic reduction}, in which a subset of
the field equations is cast in the form of a hyperbolic evolution system (the
\emph{evolution system}) for which known techniques of the theory of partial
differential equations allow us to establish well-posedness. An important
ingredient in the hyperbolic reduction is the choice of a gauge, which in the
case of the conformal Einstein field equations involves not only fixing
coordinates (the \emph{coordinate gauge}) but also the representative of the
conformal class of the spacetime metric (the so-called \emph{unphysical
metric}) to be considered (the \emph{conformal gauge}). Naturally, gauge
choices should bring to the fore the physical and geometric features of the
setting under consideration. In order to make contact with the Einstein field
equations, the procedure of hyperbolic reduction has to be supplemented by an
argument concerning the \emph{propagation of the constraints}, by means of
which one identifies the conditions under which one can guarantee that a
solution to the evolution system implies a solution to the full system of
conformal equations, independently of the gauge choice. The propagation of the
constraints involves the construction of a \emph{subsidiary evolution system}
describing the evolution of the conformal field equations and of the conditions
representing the gauge. The construction of the subsidiary system requires
lengthy manipulations of the equations which are underpinned by integrability
conditions inherent to the field equations.

\medskip

Most of the results concerning the conformal Einstein field equations available
in the literature make use of hyperbolic reductions leading to first order
\emph{symmetric hyperbolic evolution systems}. This approach works best for the
frame and spinorial versions of the conformal equations. Arguably, the simplest
variant of the conformal Einstein field equations is given by the so-called
\emph{metric conformal Einstein field equations} in which the field equations are
presented in tensorial form and the unphysical metric is determined by means of
an \emph{unphysical Einstein field equation} relating the Ricci tensor of the
unphysical metric to the various geometric fields entering in the conformal
equations --- these can be thought of as corresponding to some fictitious
unphysical matter. Remarkably, until recently, there was no suitable hyperbolic
reduction procedure available for this version of the conformal field
equations. In \cite{Pae15} Paetz has obtained a satisfactory hyperbolic
procedure for the metric vacuum Einstein field equations which is based on the
construction of second order wave equations. To round up his analysis, Paetz
then proceeds to construct a system of subsidiary wave equations for tensorial
fields encoding the conformal Einstein field equations (the so-called
\emph{geometric zero-quantities}) showing, in this way, the propagation of the
constraints. The motivation behind Paetz's approach is that the use of second
order hyperbolic equations gives access to a different part of the theory of
partial differential equations which complements the results available for
first order symmetric hyperbolic systems --- see e.g.
\cite{ChrPae13b,CarVal18b}. Paetz's construction of an evolution system
consisting of wave equations has been adapted to the case of the spinorial
conformal Einstein field equations in \cite{GasVal15}. In addition to its
interest in analytic considerations, the construction of wave equations for the
metric conformal Einstein field equations is also of relevance in numerical
studies, as the gauge fixing procedure and the particular form of the equations
is more amenable to implementation in current mainstream numerical codes than
other formulations of the conformal equations.

\medskip

The purpose of the present article is twofold: first, it generalises Paetz's
construction of a system of wave equations for the conformal Einstein field
equations to the case of matter models whose energy-momentum tensor has a
vanishing trace --- i.e. so-called \emph{tracefree matter}. The case of
tracefree matter is of particular interest since the equation of conservation
satisfied by the energy-momentum is conformally invariant; moreover, the
associated equations of motion for the matter fields can, usually, be shown to
possess good conformal properties --- see \cite{CFEBook}, Chapter 9. Second, it
clarifies the inner structure of Paetz's original construction by identifying
the integrability conditions underlying the mechanism of the propagation of the
constraints. The motivation behind this analysis is to extend the recent
analysis of the construction of vacuum anti-de Sitter-like spacetimes in
\cite{CarVal18b} to the case of tracefree matter.  However, we believe that the
analysis we present has an interest on its own right as it brings to the fore
the subtle structure of the metric conformal Einstein field equations.

\medskip
The main results of this article can be summarised as follows:

\begin{main}
The geometric fields in the metric conformal Einstein field equations
coupled to a tracefree matter field satisfy a system of wave
equations which is regular up to and beyond the conformal boundary of
a spacetime admitting a conformal extension. Moreover, the associated
geometric zero-quantities satisfy a (subsidiary) system of homogeneous wave
equations independently of the matter model. The subsidiary system is
also regular on the conformal boundary.
\end{main}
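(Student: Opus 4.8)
The plan is to establish the two halves of the statement — the existence of a closed system of wave equations for the geometric fields, and the existence of a homogeneous subsidiary system for the zero-quantities — by systematically differentiating the conformal Einstein field equations and repeatedly substituting the integrability conditions they satisfy. First I would fix notation: let $g$ denote the unphysical metric, $\Omega$ the conformal factor, $s=\tfrac14(\nabla_a\nabla^a\Omega + \tfrac16 R\,\Omega)$ the Friedrich scalar, $L_{ab}$ the Schouten tensor of $g$, and $d_{abcd}=\Omega^{-1}C_{abcd}$ the rescaled Weyl tensor; on the matter side write $T_{ab}$ for the (conformally weighted) tracefree energy-momentum tensor. The conformal Einstein-tracefree matter field equations are a set of first-order (and one zero-order) relations among $\{\Omega, s, L_{ab}, d_{abcd}\}$ together with the conformally invariant conservation law $\nabla^a T_{ab}=0$. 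The strategy in each case is: take a suitable first-order conformal equation, apply $\nabla^a$ or $\nabla^2$, commute derivatives picking up curvature terms via the Ricci identity, and then use the remaining conformal equations (and the contracted second Bianchi identity for $d_{abcd}$) to trade the unwanted second derivatives for algebraic expressions. For $d_{abcd}$ one uses that the second Bianchi identity gives $\nabla^a d_{abcd} = \Omega^{-1}(\text{matter current terms})$, and differentiating once more yields a wave equation $\Box d_{abcd} = \Omega\cdot(\text{quadratic in } d) + (\text{curvature})\cdot d + (\text{matter})$; crucially the explicit $\Omega^{-1}$ cancels against the Bianchi divergence so the right-hand side is regular where $\Omega=0$.

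Concretely I would proceed field by field. (i) For $\Omega$: the defining equation $\nabla_a\nabla_b\Omega = -\Omega L_{ab} + s\, g_{ab}$ already contains $\Box\Omega$ on taking the trace, giving $\Box\Omega = 4s - \tfrac16 R\,\Omega$; this is a wave equation once $R$ is a prescribed gauge function (the conformal gauge source). (ii) For $s$: differentiate $\nabla_a s = -L_{ab}\nabla^b\Omega + \tfrac{1}{12}(\nabla_a R)\Omega + \ldots$ and contract, using the Cotton-type equation for $\nabla_{[a}L_{b]c}$ (which in the tracefree-matter case picks up a $d_{abcd}\nabla^d\Omega$ term plus a matter term), to get $\Box s = (\text{lower order})$. (iii) For $L_{ab}$: start from the Cotton equation $\nabla_a L_{bc} - \nabla_b L_{ac} = \nabla_d\Omega\, d^{d}{}_{cab} + (\text{matter})$, differentiate, commute, and use the Bianchi identity for $d$ to eliminate $\nabla d$; the trace-free part of the matter contribution is controlled precisely because $T$ is tracefree, which is where that hypothesis enters. (iv) For $d_{abcd}$: as sketched above. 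The upshot is a closed quasilinear wave system in $\{\Omega,s,L_{ab},d_{abcd}\}$ with coefficients depending on the gauge sources $R$ (conformal) and the coordinate gauge, regular across $\Omega=0$; the matter fields enter only through $T_{ab}$ and its derivatives, which by the model-specific analysis (conformally invariant scalar, Maxwell, Yang-Mills, treated in later sections) themselves satisfy good conformal wave equations, so the full system closes.

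For the subsidiary system I would define the geometric zero-quantities — schematically $Z_{ab}$ (the conformal Einstein/unphysical-Ricci equation), $\Delta_{ab}$ (the $\nabla^2\Omega$ equation), $\Xi$ (the $\nabla s$ equation), the Cotton zero-quantity $\Lambda_{abc}$, and the Bianchi zero-quantity $\delta_{abc}=\nabla^a d_{abcd} - (\text{matter})$ — as the differences between the left and right sides of the corresponding conformal equations, so that a solution of the wave system solves the full conformal system iff all zero-quantities vanish. The key structural fact, which is the real content of "clarifying the inner structure" promised in the introduction, is that these zero-quantities are not independent: the Bianchi identities and the commutation relations force a web of algebraic and first-order differential relations among them (for instance $\nabla^{[a}\Lambda^{bc]}{}_d$ is algebraic in $\delta$ and $Z$, and $\nabla^a\delta_{abc}$ is algebraic in $\Lambda$ and $Z$, etc.). Exploiting these, one differentiates each zero-quantity's own evolution equation (obtained by subtracting the derived wave equation from the undifferentiated conformal equation) and substitutes the integrability relations to obtain a closed \emph{homogeneous} linear wave system $\Box\mathbf{Z} = \mathbf{A}\cdot\nabla\mathbf{Z} + \mathbf{B}\cdot\mathbf{Z}$ for the vector $\mathbf{Z}$ of all zero-quantities, with smooth coefficients regular at $\Omega=0$; homogeneity plus vanishing Cauchy data (guaranteed by the constraint equations on the initial slice) then gives $\mathbf{Z}\equiv 0$ by uniqueness for linear wave equations. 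I expect the main obstacle to be bookkeeping in this last step: verifying that every second-derivative and every explicit $\Omega^{-1}$ arising when one differentiates $\delta_{abc}$ and $\Lambda_{abc}$ is absorbed by an integrability identity with no residual term, and in particular checking that the matter contributions to the subsidiary equations assemble — using only $T^a{}_a=0$ and $\nabla^aT_{ab}=0$ — into terms proportional to the geometric zero-quantities themselves, so that no independent "matter zero-quantity" is needed for the geometric subsidiary system to close. This is exactly the point where the tracefree condition is indispensable and where Paetz's vacuum computation has to be genuinely extended rather than merely quoted.
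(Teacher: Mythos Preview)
Your proposal is correct and follows essentially the same route as the paper: derive geometric wave equations for $\Xi,s,L_{ab},d_{abcd}$ by differentiating the first-order conformal equations and invoking Bianchi/commutator identities (the paper's Lemma~\ref{Lemma:EvolutionSystem}), then obtain a homogeneous subsidiary wave system for the zero-quantities from their integrability conditions (Lemma~\ref{Lemma:SubsidiarySystem} via Proposition~\ref{Proposition:IntegrabilityConditions}). Two technical points where the paper sharpens your sketch: (i) for $d_{abcd}$ the paper does \emph{not} differentiate the divergence equation $\nabla_e d^e{}_{abc}=T_{bca}$ but rather its Hodge-dual reformulation $3\nabla_{[e}d_{ab]cd}+\epsilon_{eabf}\,{}^*T_{cd}{}^f=0$, and since the rescaled Cotton tensor $T_{abc}$ is already polynomial in $\Xi$ there is no $\Omega^{-1}$ to cancel --- regularity is manifest from the outset; (ii) the subsidiary wave equation for the Bianchi zero-quantity $\Lambda_{abc}$ is the delicate step you anticipate, and the paper handles it by introducing an auxiliary five-index zero-quantity $\Lambda_{abcde}=3\nabla_{[a}d_{bc]de}+\epsilon_{abcf}\,{}^*T_{de}{}^f$, whose total antisymmetry in $abc$ is precisely what kills the problematic second-derivative term $\nabla^c\nabla^e\Lambda_{gce[mn]}$.
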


The precise statements concerning the above main result are contents
of Lemmas \ref{Lemma:EvolutionSystem} and \ref{Lemma:SubsidiarySystem}. 

\begin{remark}
{\em A remarkable property of our analysis is that it renders suitable
  evolution equations for the conformal fields and the zero-quantities without
  having to make any assumptions on the matter model except that it satisfies
  \emph{good} evolution equations in the conformally rescaled
  spacetime. Thus, our discussion can be regarded as a 
  \emph{once-for-all} analysis of the evolution equations associated
  to the geometric part of the metric conformal field equations valid for a
  wide class of coordinate gauges prescribed in terms of the
  coordinate gauge source function appearing in the \emph{generalised
  wave coordinate condition}.}
\end{remark}

\begin{remark}
{\em The homogeneity of the subsidiary system on the geometric
  zero-quantities is the key structural property required to ensure
  the propagation of the constraints by exploiting the uniqueness of
  solutions to a system of wave equations.}
\end{remark}

The approach followed to obtain our main result is based on the identification
of a family of integrability conditions associated to the metric conformal
Einstein field equations. To the best of our knowledge, these integrability
conditions have not appeared elsewhere in the literature. In our opinion this
approach brings better to the fore the structural properties of the conformal
Einstein field equations and, in particular, it makes the construction of the
subsidiary evolution system more transparent than the \emph{brute force}
approach adopted in \cite{Pae15}. A similar strategy is also adopted to study
the propagation of the gauge. In particular, by setting the matter fields to
zero, our analysis provides an alternative version of the main results of
\cite{Pae15} --- the initial conditions on the gauge required in the present
analysis differ from those in \cite{Pae15} though. Despite offering a more
sleek approach to the construction of an evolution system for the conformal
Einstein field equations, our analysis still requires heavy computations which
are best carried out in a computer algebra system. In the present case we have
made systematic use of the suite {\tt xAct} for the manipulation of tensorial
expressions in {\tt Mathematica} --- see \cite{xAct}. 

\medskip
We supplement our general analysis of the metric conformal Einstein
field equations with an analysis of the evolution and subsidiary
evolution equations of some of the tracefree matter models more
commonly used in the literature: the Maxwell field, the Yang-Mills
field and the conformally invariant scalar field. For each of these
fields we construct suitably second order wave equations for the
matter fields and the associated \emph{matter zero-quantities}. For
the case of the Yang-Mills field, our analysis makes no assumptions on
the gauge group. 

\medskip
As an application of our analysis, in the final section of this article
we present stability results for the de Sitter spacetime for
perturbations which include the Maxwell, Yang-Mills or conformally
invariant scalar field. Proofs of this result for the Maxwell and
Yang-Mills fields have been obtained in \cite{Fri91} using the spinorial
version of the conformal equations and a first order hyperbolic
reduction. The stability result for the conformally invariant scalar
field is, to the best of our knowledge, new.

\subsection*{Overview of the article}

In Section \ref{Section:MatterCFE} we briefly summarise the key properties of
the metric conformal Einstein field equations coupled to tracefree matter and
their relation to the Einstein field equations. Section \ref{Section:
WEGeomFields} provides the derivation of the geometric wave equations for the
geometric fields appearing in the conformal Einstein field equations. Section
\ref{Section: ZQandIC} introduces the key notion of geometric zero-quantity
and discusses the identities and integrability conditions associated to objects
of this type. Section \ref{Section:SubsidiarySystem} provides the construction
of the subsidiary evolution system for the geometric zero-quantities used in
the argument of the propagation of the constraints. This is, in principle, the
most calculationally intensive part of our analysis. However, using the
integrability conditions of Section \ref{Section: ZQandIC} we provide a
streamlined presentation thereof.  In Section \ref{Section:GaugeConsiderations}
we discuss the gauge freedom inherent in the geometric evolution systems
obtained in Sections \ref{Section: WEGeomFields} and
\ref{Section:SubsidiarySystem} and how this freedom can be used to complete the
hyperbolic reduction of the equations.  Section \ref{Section:PropOfGauge} 
establishes the consistency of the gauge introduced in the previous section,
independently of the particular tracefree matter model.  Section
\ref{Section:MatterFields} provides a case-by-case analysis of three
prototypical tracefree matter models --- the conformally invariant scalar field
(Subsection \ref{Subsection:ScalarField}), the Maxwell field (Subsection
\ref{Subsection:MaxwellField}) and the Yang-Mills field (Subsection
\ref{Subsection:YangMillsField}). The discussion for each of these matter
models includes the construction of suitable wave evolution equations and
subsidiary evolution equations. Finally, Section \ref{Section:Applications}
provides an application of the analysis developed in this article to the global
existence and stability of de Sitter-like spacetimes.

\subsection*{Conventions}
In what follows, $(\tilde{\mathcal{M}},\tilde{g}_{ab})$ will denote a spacetime
satisfying the Einstein equations with matter --- later we will make the
further assumption that the energy-momentum tensor is tracefree. The signature
of the spacetime metric is $(-,+,+,+)$. The lowercase Latin letters $a,\, b,\,
c, \ldots$ are used as abstract spacetime indices, while Greek letters $\mu, \,
\nu, \, \lambda,\ldots$ will be used as spacetime coordinate indices.  Our
conventions for the curvature are
\[
\nabla_c \nabla_d u^a -\nabla_d \nabla_c u^a = R^a{}_{bcd} u^b.
\]

\section{The metric conformal Einstein field equations with tracefree matter}
\label{Section:MatterCFE}
The purpose of this section is to provide a brief overview of the basic
properties of the conformal Einstein field equations with tracefree matter. A
more extended discussion of the properties of these equations, as well as their
derivation, can be found in Chapter 8 of \cite{CFEBook}. 

\subsection{Basic relations}
In what follows let $(\tilde{\mathcal{M}},\tilde{g}_{ab})$ denote a spacetime
satisfying the \emph{Einstein field equations with matter}
\begin{equation}
\tilde{R}_{ab} -\tfrac{1}{2}\tilde{R}\tilde{g}_{ab} + \lambda
\tilde{g}_{ab}= \tilde{T}_{ab},
\label{EFEMatter}
\end{equation}
where $\tilde{R}_{ab}$ and $\tilde{R}$ denote, respectively, the Ricci
tensor and Ricci scalar of the metric $\tilde{g}_{ab}$, $\lambda$ is
the Cosmological constant and $\tilde{T}_{ab}$ is the energy-momentum
tensor. As a consequence of the contracted Bianchi identity one
obtains the conservation law
\begin{equation}
\tilde{\nabla}^a \tilde{T}_{ab}=0.
\label{DivergencePhysicalEnergyMomentum}
\end{equation}
Here $\tilde{\nabla}_a$ denotes the Levi-Civita covariant derivative of the
metric $\tilde{g}_{ab}$. Now, let $(\mathcal{M},g_{ab})$ denote a spacetime
related to $(\tilde{\mathcal{M}},\tilde{g}_{ab})$ via a conformal embedding
\[
\tilde{\mathcal{M}}\stackrel{\varphi}{\hookrightarrow}\mathcal{M},
\qquad \tilde{g}_{ab} \stackrel{\varphi}{\mapsto} g_{ab}\equiv
\Xi^2 \big( \varphi^{-1})^*\tilde{g}_{ab}, \qquad
\Xi|_{\varphi(\tilde{\mathcal{M}})}>0,
\]
where is $\Xi$ a smooth scalar field --- the so-called \emph{conformal factor}.
With a slight abuse of notation we write
\begin{equation}
g_{ab} = \Xi^2 \tilde{g}_{ab}.
\label{Definition:ConformalRescaling}
\end{equation}
\begin{remark}
{\em Following the standard usage, we refer to
  $(\tilde{\mathcal{M}},\tilde{g}_{ab})$ as the \emph{physical
    spacetime} while  $(\mathcal{M},g_{ab})$ will be called the
  \emph{unphysical spacetime}. }
\end{remark}

\subsubsection{The unphysical energy-momentum tensor}
Since equation
\eqref{Definition:ConformalRescaling} does not determine the way
$\tilde{T}_{ab}$ transforms, it will be convenient to define the
\emph{unphysical energy-momentum tensor} as
\[
T_{ab}\equiv \Xi^{-2} \tilde{T}_{ab}.
\]
Using the transformation rules between the Levi-Civita covariant derivatives of
conformally related metrics, equation \eqref{DivergencePhysicalEnergyMomentum}
takes the form
\[
\nabla^a T_{ab} = \Xi^{-1} T \nabla_b \Xi,
\]
with $\nabla_a$ the Levi-Civita covariant derivative of $g_{ab}$
and $T\equiv g^{ab} T_{ab}$. It then follows that
\[
\nabla^a T_{ab}=0 \qquad \mbox{ if and only if} \qquad T=0.
\]

\begin{assumption}
\label{Assumption:AssumptionDivergenceUnphysicalEnergyMomentum}
{\em In the remainder of this article we restrict our attention to matter
models for which $T=0$, so that the corresponding unphysical energy-momentum
tensor $T_{ab}$ is divergence-free, that is,
\begin{equation}
\nabla^a T_{ab}=0.
\label{DivergenceUnphysicalEnergyMomentum}
\end{equation}}
\end{assumption}

\subsection{Basic properties of the conformal Einstein field
  equations}
\label{Section:BasicPropertiesCFE}

The \emph{metric tracefree conformal Einstein field equations} have been first
discussed in \cite{Fri91}. In terms of the notation and conventions used in
this article they are given by
\begin{subequations}
\begin{eqnarray}
&&\nabla_{a}\nabla_{b}\Xi = -\Xi
L_{ab}+sg_{ab}+\tfrac{1}{2}\Xi^{3}T_{ab}, \label{TraceCFE1}
\\
&&\nabla_{a}s=-L_{ab}\nabla^{b}\Xi+\tfrac{1}{2}\Xi^{2}\nabla^{b}\Xi
T_{ab}, \label{TraceCFE2}
\\
&&\nabla_{a}L_{bc}-\nabla_{b}L_{ac}=\nabla_{e}\Xi
d^{e}{}_{cab}+\Xi T_{abc}, \label{TraceCFE3}
\\
&&\nabla_{e}d^{e}{}_{abc}=T_{bca}, \label{TraceCFE4}
\\
&&6\Xi s-3\nabla_{c}\Xi\nabla^{c}\Xi=\lambda, \label{TraceCFE5}\\
&& R^c{}_{dab} = \Xi d^c{}_{dab} + 2( \delta^c{}_{[a}L_{b]d} -
   g_{d[a}L_{b]}{}^c). \label{TraceCFE6}
\end{eqnarray}
\end{subequations}
A detailed derivation of these equations can be found in \cite{CFEBook}.  In
the above expressions $L_{ab}$, $s$, $d^a{}_{bcd}$ and $T_{abc}$ denote,
respectively, the Schouten tensor, the Friedrich scalar, the rescaled Weyl
tensor and the rescaled Cotton tensor. These objects are defined as
\begin{subequations}
\begin{eqnarray}
&& L_{ab}\equiv\tfrac{1}{2}R_{ab}-\tfrac{1}{12}g_{ab}R, \label{SchoutenTensor} \\
&& s\equiv \tfrac{1}{4}\nabla^{c}\nabla_{c}\Xi+\tfrac{1}{24}R\Xi, \label{FriedrichScalarDefinition}
\\
&& d^a{}_{bcd} \equiv \Xi^{-1} C^a{}_{bcd},\\
&& T_{abc} \equiv \Xi \nabla_{[a}T_{b]c} +3 \nabla_{[a} \Xi
T_{b]c} - g_{c[a}T_{b]e} \nabla^e\Xi, \label{RescaledCottonTensor}
\end{eqnarray}
\end{subequations}
where $C^a{}_{bcd}$ is the conformally invariant Weyl tensor. Observe that
$T_{abc}$ has the following symmetries:
\begin{equation}
T_{abc} = T_{[ab]c}, \quad T_{[abc]} = 0.
\label{CottonTensorProperties}
\end{equation}

Relevant for the subsequent discussion is the well-known fact that the
rescaled Weyl tensor has two associated Hodge dual tensors, namely
\[
\,^*d_{abcd} \equiv \tfrac{1}{2} \epsilon_{ab}{}^{ef} d_{efcd}, \quad
d^*_{abcd} \equiv \tfrac{1}{2} \epsilon_{cd}{}^{ef} d_{abef},
\]
where $\epsilon_{abcd}$ is the 4-volume form of the metric $g_{ab}$. One can
check that $\,^*d_{abcd} = d^*_{abcd}$. Similarly, we also define the Hodge
dual of $T_{abc}$ as
\begin{equation}
\,^*T_{abc} \equiv \tfrac{1}{2} \epsilon_{ab}{}^{de} T_{dec}.
\label{DefinitionCottonDual}
\end{equation}
Moreover, if \Cref{Assumption:AssumptionDivergenceUnphysicalEnergyMomentum} and
equation \eqref{TraceCFE1} are taken into account, one obtains some additional
relations, namely
\begin{subequations}
\begin{eqnarray}
&& \nabla_c T_{ab}{}^c  = 0, \\
&& \nabla_c \,^*T_{ab}{}^c = 0, \\
&& \nabla_c T_a{}^c{}_b = \nabla_c T_{(a}{}^c{}_{b)}.
\end{eqnarray}
\end{subequations}

\begin{remark}
{\em Equations \eqref{TraceCFE1}-\eqref{TraceCFE4} will be regarded as
a set of differential conditions for the fields $\Xi$, $s$, $L_{ab}$
and $d^a{}_{bcd}$. Equation \eqref{TraceCFE5} can be shown to play the
role of a constraint which only needs to be verified at a single point
--- see e.g. \cite{CFEBook}, Lemma 8.1. Equation \eqref{TraceCFE6},
providing the link between the conformal fields $d^c{}_{dab}$,
$L_{ab}$ and the irreducible decomposition of the Riemann tensor, 
allows us to deduce a differential condition for the components of the
unphysical metric $g_{ab}$ --- see Section 
\ref{Section:GeneralisedWaveCoordinates}}.
\end{remark}

\begin{remark}
{\em By a solution to the metric tracefree conformal Einstein field equations
it will be understood a collection of fields
$(g_{ab},\Xi,s,L_{ab},d^a{}_{bcd},T_{ab})$ satisfying equations
\eqref{DivergenceUnphysicalEnergyMomentum} and
\eqref{TraceCFE1}-\eqref{TraceCFE6}.}
\end{remark}

\medskip
The relation between the metric tracefree conformal Einstein field equations
and the Einstein field equations \eqref{EFEMatter} is given in the following
proposition --- see \cite{CFEBook}, Proposition 8.1. 

\begin{proposition}
Let $(g_{ab},\Xi,s,L_{ab},d^a{}_{bcd},T_{ab})$ denote a solution to the
metric tracefree conformal Einstein field equations such that $\Xi\neq 0$ on an open
set $\mathcal{U}\subset \mathcal{M}$. Then the metric $\tilde{g}_{ab} =\Xi^{-2}
g_{ab}$ is a solution to the Einstein field equations \eqref{EFEMatter} with
energy momentum tensor given by $\tilde{T}_{ab} = \Xi^2 T_{ab}$ on
$\mathcal{U}$. 
\end{proposition}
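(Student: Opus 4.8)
The plan is to work throughout on the open set $\mathcal{U}$ where $\Xi\neq 0$, so that negative powers of $\Xi$ are well defined, and to recover \eqref{EFEMatter} for $\tilde g_{ab}=\Xi^{-2}g_{ab}$ by tracking the behaviour of the Schouten tensor under the conformal rescaling. Since $L_{ab}$ is, by \eqref{SchoutenTensor}, precisely the Schouten tensor of the unphysical metric $g_{ab}$, and since $\tilde g_{ab}=\Xi^{-2}g_{ab}$ is conformal to $g_{ab}$, the physical Schouten tensor $\tilde L_{ab}\equiv\tfrac12\tilde R_{ab}-\tfrac1{12}\tilde R\,\tilde g_{ab}$ is related to $L_{ab}$ by the standard transformation identity
\[
\tilde L_{ab}=L_{ab}+\Xi^{-1}\nabla_a\nabla_b\Xi-\tfrac12\Xi^{-2}\nabla_c\Xi\nabla^c\Xi\,g_{ab},
\]
with $\nabla$ the Levi-Civita connection of $g_{ab}$ and indices raised with $g^{ab}$. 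I would either recall the short derivation of this formula --- it is a rewriting of the transformation law of the Ricci tensor under $g_{ab}=\Xi^2\tilde g_{ab}$ --- or cite it from \cite{CFEBook}; being careful about the signs and about which metric raises indices here is really the only delicate point in the whole argument.

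The next step is to eliminate the Hessian of $\Xi$ using the first conformal field equation \eqref{TraceCFE1}, namely $\nabla_a\nabla_b\Xi=-\Xi L_{ab}+sg_{ab}+\tfrac12\Xi^3T_{ab}$. Substituting, the terms in $L_{ab}$ cancel and one obtains
\[
\tilde L_{ab}=\big(\Xi^{-1}s-\tfrac12\Xi^{-2}\nabla_c\Xi\nabla^c\Xi\big)g_{ab}+\tfrac12\Xi^2T_{ab}=\big(\Xi s-\tfrac12\nabla_c\Xi\nabla^c\Xi\big)\tilde g_{ab}+\tfrac12\Xi^2T_{ab},
\]
where in the last equality I used $g_{ab}=\Xi^2\tilde g_{ab}$. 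The scalar coefficient is now fixed by the constraint equation \eqref{TraceCFE5}, which gives $\Xi s-\tfrac12\nabla_c\Xi\nabla^c\Xi=\tfrac{\lambda}{6}$, and hence
\[
\tilde L_{ab}=\tfrac{\lambda}{6}\tilde g_{ab}+\tfrac12\Xi^2T_{ab}.
\]

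Finally I would pass from the Schouten tensor back to the Ricci tensor. Taking the $\tilde g$-trace of the last relation and using \Cref{Assumption:AssumptionDivergenceUnphysicalEnergyMomentum} (so that $\tilde g^{ab}T_{ab}=\Xi^2T=0$) gives $\tilde g^{ab}\tilde L_{ab}=\tfrac{2\lambda}{3}$, whence $\tilde R=4\lambda$; inverting $\tilde L_{ab}=\tfrac12\tilde R_{ab}-\tfrac1{12}\tilde R\,\tilde g_{ab}$ then yields $\tilde R_{ab}=\lambda\tilde g_{ab}+\Xi^2T_{ab}$. Rearranging, $\tilde R_{ab}-\tfrac12\tilde R\,\tilde g_{ab}+\lambda\tilde g_{ab}=\Xi^2T_{ab}$, which is exactly \eqref{EFEMatter} with $\tilde T_{ab}=\Xi^2T_{ab}$, as claimed; the conservation law \eqref{DivergencePhysicalEnergyMomentum} is then automatic from the contracted Bianchi identity and is consistent with $T_{ab}=\Xi^{-2}\tilde T_{ab}$ and \Cref{Assumption:AssumptionDivergenceUnphysicalEnergyMomentum}. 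I expect no genuine obstacle here: apart from getting the conformal transformation law of the Schouten tensor right, the proof is a short chain of substitutions of \eqref{TraceCFE1}, \eqref{TraceCFE5} and the tracefree condition, together with a trace and an algebraic inversion. Note in particular that equations \eqref{TraceCFE2}--\eqref{TraceCFE4} and \eqref{TraceCFE6} play no role in this argument.
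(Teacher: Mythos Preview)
Your argument is correct and follows the same route as the paper's proof, which in turn defers to Proposition~8.1 in \cite{CFEBook}: use the conformal transformation law for the Schouten tensor, substitute \eqref{TraceCFE1} and \eqref{TraceCFE5}, and unwind to recover \eqref{EFEMatter}. There is one point to correct, however, and it is exactly the point the paper's own proof singles out. In the hypotheses of the proposition, $L_{ab}$ is an \emph{a priori} independent member of the solution tuple $(g_{ab},\Xi,s,L_{ab},d^a{}_{bcd},T_{ab})$; equation \eqref{SchoutenTensor} records the motivation for introducing this field, not a standing identity you may invoke. What actually forces $L_{ab}=\tfrac12 R_{ab}-\tfrac{1}{12}Rg_{ab}$ is the trace of \eqref{TraceCFE6} (equivalently \eqref{EquationMetric}). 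So your closing remark that \eqref{TraceCFE6} ``plays no role'' is backwards: it is precisely \eqref{TraceCFE6} that justifies your opening sentence, after which the rest of your computation goes through verbatim.
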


\begin{proof}
The proof given in \cite{CFEBook} omits equation \eqref{TraceCFE6} and
implicitly assumes that the field $L_{ab}$ can be identified with the
Schouten tensor of the metric $g_{ab}$. With equation  \eqref{TraceCFE6}
at hand, one is allowed to make this identification. From here
onwards one can apply the argument in Proposition 8.1 in \cite{CFEBook}.
\end{proof}

\subsubsection{An alternative equation for $d^a{}_{bcd}$}

For our purposes, it will be convenient to consider an alternative version of
the conformal field equation for the rescaled Weyl tensor. This can be obtained
as follows: multiplying \eqref{TraceCFE4} by $\epsilon_{fg}{}^{bc}$ and
exploiting the identity $\,^*d_{abcd} = d^*_{abcd}$ results in
\[
2\nabla_a{}^*d_{fgc}{}^a = 2\nabla_a{}d^*_{fgc}{}^a = -2 {}^*T_{fgc}.
\]
From here it follows that
\begin{equation}
3 \nabla_{[e}d_{ab]cd} + \epsilon_{eabf}{}^*T_{cd}{}^f = 0. \label{TraceCFE4Alt}
\end{equation}

\begin{remark}
\label{Remark: TraceCFE4Alt}
\em {This last equation is equivalent to \eqref{TraceCFE4} and will be
essential in sections \ref{Section: WEGeomFields} and \ref{Section: ZQandIC}
where a system of wave equations for the geometric fields and the zero-quantities
associated to the equations \eqref{TraceCFE1}-\eqref{TraceCFE6} is discussed.}
\end{remark}

\subsection{An equation for the components of the metric $g_{ab}$}

Taking the natural trace in equation \eqref{TraceCFE6} leads to the relation
\begin{equation}
R_{ab} = 2L_{ab} + \tfrac16 R g_{ab}. \label{EquationMetric}
\end{equation}
Here, $R_{ab}$ and $L_{ab}$ are considered as independent
variables. In particular, the Ricci tensor $R_{ab}$ is assumed to be
expressed in terms of first and second derivatives of the components of the
metric whilst $L_{ab}$ is a field satisfying equations
\eqref{TraceCFE1}-\eqref{TraceCFE5}. This will be further discussed in
\Cref{Section:GaugeConsiderations} where a suitable wave equation for the
components of the metric is constructed. 

\begin{remark}
{\em As pointed out in \cite{Fri03a}, equation \eqref{EquationMetric}
  can be regarded as an Einstein field equation for the unphysical
  metric $g_{ab}$. From this point of view, the geometric fields $\Xi$,
$s$, $L_{ab}$ and $d_{abcd}$ can be regarded as unphysical matter
fields. Accordingly, in the following we refer to
equation \eqref{EquationMetric} as the \emph{unphysical Einstein
  equation}. This approach should allow to adapt well-tested
numerical methods for the Einstein field equations to the case of the
conformal field equations.}
\end{remark}

\section{The evolution system for the geometric fields}
\label{Section: WEGeomFields}
In this section we show how to construct an evolution system for the
geometric fields appearing in the conformal Einstein field equations,
equations \eqref{TraceCFE1}-\eqref{TraceCFE6}. These evolution
equations take the form of \emph{geometric wave equations}
--- that is, their principal part involves the D'Alambertian $\square
\equiv \nabla_a \nabla^a$ associated to the conformal metric $g_{ab}$. 

\medskip
In \cite{Pae15}, Paetz has obtained a system of geometric wave equations for
the set of conformal fields $(\Xi, s, L_{ab}, d^a{}_{bcd})$ in the vacuum case.
This can be generalised to include a tracefree matter component. The next statement
summarises this result:
\begin{lemma}
\label{Lemma:EvolutionSystem}
The metric tracefree conformal Einstein field equations
\eqref{TraceCFE1}-\eqref{TraceCFE6} imply the following system of geometric
wave equations for the conformal fields:
\begin{subequations}
\begin{eqnarray}
&& \square\Xi = 4s - \tfrac16 \Xi R, \label{CWE1} \\
&& \square s = - \tfrac{1}{6} s R + \Xi L_{ab} L^{ab} - \tfrac{1}{6} \nabla_{a}R \nabla^{a}\Xi
+ \tfrac{1}{4} \Xi^5 T_{ab} T^{ab} - \Xi^3 L_{ab}T^{ab} 
+ \Xi \nabla^{a}\Xi \nabla^{b}\Xi T_{ab}, \label{CWE2} \\
&& \square L_{ab} = -2 \Xi d_{acbd} L^{cd} + 4 L_{a}{}^{c} L_{bc} - L_{cd} L^{cd} g_{ab} 
+ \tfrac{1}{6} \nabla_{a}\nabla_{b}R + \tfrac{1}{2} \Xi^3 d_{acbd} T^{cd} 
\nonumber \\
&& \hspace{1.3cm} - \Xi \nabla_{c}T_{a}{}^{c}{}_{b} - 2 T_{(a|c|b)}
   \nabla^{c}\Xi, \label{CWE3} \\
&& \square d_{abcd} = - 4 \Xi d_{a}{}^{f}{}_{[c}{}^{e} d_{d]ebf} - 2 \Xi d_{a}{}^{f}{}_{b}{}^{e} d_{cdfe} 
+ \tfrac{1}{2} d_{abcd} R - T_{[a}{}^f \Xi^2 d_{b]fcd} - \Xi^2 T_{[c}{}^f d_{d]fab} \nonumber \\
&& \hspace{1.5cm} - \Xi^2 g_{a[c} d_{d]gbf} T^{fg} + \Xi^2 g_{b[c}  d_{d]gaf} T^{fg} 
+ 2 \nabla_{[a}T_{|cd|b]} + \epsilon_{abef} \nabla^{f}\,^*T_{cd}{}^{e}. \label{CWE4Alternative}
\end{eqnarray}
\end{subequations}
\end{lemma}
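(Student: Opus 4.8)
The plan is to follow the standard route from a first-order system to a system of wave equations: differentiate each of the first-order conformal equations \eqref{TraceCFE1}--\eqref{TraceCFE4} once more, commute covariant derivatives so that a $\square$ appears in the principal part, and then eliminate the unwanted second derivatives by re-substituting the field equations themselves. Throughout, three structural inputs do most of the work: equation \eqref{TraceCFE1} is used to trade any Hessian $\nabla_a\nabla_b\Xi$ for an algebraic expression in $\Xi$, $s$, $L_{ab}$ and $T_{ab}$; equation \eqref{TraceCFE6}, together with its trace \eqref{EquationMetric}, is used to rewrite every Riemann (or Ricci) tensor produced by a commutator in terms of $d^a{}_{bcd}$, $L_{ab}$ and $R$; and equation \eqref{TraceCFE4}, in the original form and in the equivalent form \eqref{TraceCFE4Alt}, supplies the divergences of the rescaled Weyl tensor. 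The matter terms are reorganised using the definition \eqref{RescaledCottonTensor} of $T_{abc}$, its symmetries \eqref{CottonTensorProperties}, the dual identity $\,^*d_{abcd}=d^*_{abcd}$, the definition \eqref{DefinitionCottonDual}, and the auxiliary relations $\nabla_c T_{ab}{}^c=0$, $\nabla_c\,^*T_{ab}{}^c=0$, $\nabla_c T_a{}^c{}_b=\nabla_c T_{(a}{}^c{}_{b)}$ recorded above, all of which rely on \Cref{Assumption:AssumptionDivergenceUnphysicalEnergyMomentum}.

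Concretely, equation \eqref{CWE1} is obtained at once by taking the $g^{ab}$-trace of \eqref{TraceCFE1} and using $g^{ab}L_{ab}=\tfrac16 R$ (from \eqref{SchoutenTensor}) and $T\equiv g^{ab}T_{ab}=0$; no differentiation is needed. For \eqref{CWE2} I apply $\nabla^a$ to \eqref{TraceCFE2}: the terms $L_{ab}\nabla^a\nabla^b\Xi$ and $\tfrac12\Xi^2 T_{ab}\nabla^a\nabla^b\Xi$ are handled with \eqref{TraceCFE1}, while $\nabla^a L_{ab}$ is supplied by contracting \eqref{TraceCFE3} (using that $d^a{}_{bcd}$ is trace-free), which gives $\nabla^a L_{ab}=\tfrac16\nabla_b R+\Xi\,T^c{}_{bc}$; collecting the resulting terms and invoking \eqref{FriedrichScalarDefinition} yields \eqref{CWE2}. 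For \eqref{CWE3} I apply $\nabla^a$ to \eqref{TraceCFE3}; the principal part is $\square L_{bc}-\nabla^a\nabla_b L_{ac}$, and commuting in the second term converts it, via \eqref{TraceCFE6} and \eqref{EquationMetric}, into the quadratic pieces $-2\Xi d_{acbd}L^{cd}+4L_a{}^cL_{bc}-L_{cd}L^{cd}g_{ab}$ and a Hessian of $R$; the right-hand side of \eqref{TraceCFE3}, after using \eqref{TraceCFE1} for $\nabla^a\nabla_e\Xi$, the tracelessness of $d^a{}_{bcd}$, and \eqref{TraceCFE4} for the divergence of $d$ on its last index, contributes exactly the remaining matter terms of \eqref{CWE3} once one symmetrises in $a,b$ (legitimate since $L_{ab}$ is symmetric).

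The last equation, \eqref{CWE4Alternative}, is the genuinely laborious one and is where I expect the main obstacle. Here I start from the second-Bianchi-type identity in the form \eqref{TraceCFE4Alt}, written out as $\nabla_e d_{abcd}+\nabla_a d_{becd}+\nabla_b d_{eacd}=-\epsilon_{eabf}\,^*T_{cd}{}^f$, and apply $\nabla^e$. The first term produces $\square d_{abcd}$; in the second and third I commute $\nabla^e$ past $\nabla_a$ and $\nabla_b$, and then evaluate the surviving divergences $\nabla^e d_{becd}$, $\nabla^e d_{eacd}$ with \eqref{TraceCFE4}. This generates, on the one hand, derivatives of $T_{abc}$ and $\,^*T_{abc}$ which recombine --- using \eqref{CottonTensorProperties}, \eqref{DefinitionCottonDual} and the auxiliary divergence identities --- into $2\nabla_{[a}T_{|cd|b]}+\epsilon_{abef}\nabla^f\,^*T_{cd}{}^e$, and, on the other hand, two double-commutator contributions which, after inserting \eqref{TraceCFE6} for each Riemann factor and exploiting the Weyl symmetries of $d^a{}_{bcd}$ and the first Bianchi identity, must collapse to $-4\Xi d_a{}^f{}_{[c}{}^e d_{d]ebf}-2\Xi d_a{}^f{}_b{}^e d_{cdfe}+\tfrac12 d_{abcd}R$ together with the bilinear $\Xi^2 T\!\cdot\! d$ terms displayed in \eqref{CWE4Alternative}; in particular the $L$-dependent pieces of the Riemann tensor must cancel among themselves. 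The bookkeeping involved --- tracking the many index contractions, checking these cancellations, and pinning down the precise numerical coefficients and symmetrisations --- is what forces this step onto a computer algebra system, and we carry it out with \texttt{xAct}.
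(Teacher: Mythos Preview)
Your treatment of \eqref{CWE1}--\eqref{CWE3} matches the paper's: trace \eqref{TraceCFE1} for the first, differentiate \eqref{TraceCFE2} and \eqref{TraceCFE3} and re-substitute for the other two.

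For \eqref{CWE4Alternative}, however, your expectation about what the commutator contributions produce is not correct, and this is a genuine gap. When you insert \eqref{TraceCFE6} into the commutator terms, the Riemann tensor decomposes only into $\Xi d$-pieces and $L$-pieces; no $T_{ab}$ appears there, so the bilinear $\Xi^2\,T\!\cdot\!d$ terms in \eqref{CWE4Alternative} cannot arise from the commutators. Correspondingly, the $L$-dependent pieces do \emph{not} cancel among themselves. What the direct computation actually gives is the intermediate equation
\begin{align*}
\square d_{abcd} ={}& - 4 \Xi d_{a}{}^{f}{}_{[c}{}^{e} d_{d]ebf} - 2 \Xi d_{a}{}^{f}{}_{b}{}^{e} d_{cdfe}
+ \tfrac{1}{3} d_{abcd} R - 2d_{cdf[a} L_{b]}{}^{f} - 2d_{abf[c} L_{d]}{}^{f} \\
& - 2g_{a[c} d_{d]ebf} L^{fe} + 2 g_{b[c} d_{d]fae} L^{ef}
+ 2\nabla_{[a}T_{|cd|b]} + \epsilon_{abef}\nabla^f\,{}^*T_{cd}{}^e,
\end{align*}
with surviving $L\!\cdot\!d$ terms and a coefficient $\tfrac13 R$ rather than $\tfrac12 R$.

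The passage from this to \eqref{CWE4Alternative} requires an additional, non-obvious algebraic identity that is itself a consequence of the field equations: one multiplies \eqref{TraceCFE4Alt} by $\Xi$, uses \eqref{TraceCFE3} and the second Bianchi identity to obtain
\[
d_{cd[ag}\nabla_{b]}\Xi + d_{de[ag}g_{b]c}\nabla^e\Xi - d_{ce[ag}g_{b]d}\nabla^e\Xi = 0,
\]
and then applies $\nabla^g$ together with \eqref{TraceCFE1} and \eqref{TraceCFE4}. The resulting identity relates exactly the $L\!\cdot\!d$ combination above (plus the extra $\tfrac16 d_{abcd}R$) to the $\Xi^2\,T\!\cdot\!d$ terms you want. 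Without this step your derivation stalls at the intermediate equation and cannot reach \eqref{CWE4Alternative}; your plan should build in this identity rather than relying on a cancellation that does not occur.
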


\begin{proof}
Equation \eqref{CWE1} is a direct consequence of \eqref{TraceCFE1}. Equations
\eqref{CWE2} and \eqref{CWE3} result, respectively, from applying a covariant
derivative to \eqref{TraceCFE2} and \eqref{TraceCFE3}, and using the second
Bianchi identity.  The wave equation for $d^a{}_{bcd}$, on the other hand,
requires to consider the alternative conformal field equation
\eqref{TraceCFE4Alt}.  Applying $\nabla^e$ to the latter and using equation
\eqref{TraceCFE4} along with the first Bianchi identity, a long but
straightforward calculation yields the wave equation 
\begin{eqnarray}
&& \square d_{abcd} = - 4 \Xi d_{a}{}^{f}{}_{[c}{}^{e} d_{d]ebf} - 2 \Xi d_{a}{}^{f}{}_{b}{}^{e} d_{cdfe} 
+ \tfrac{1}{3} d_{abcd} R - 2d_{cdf[a} L_{b]}{}^{f} - 2d_{abf[c} L_{d]}{}^{f} 
 \nonumber \\
 && \hspace{1.5cm}- 2g_{a[c} d_{d]ebf} L^{fe} + 2 g_{b[c} d_{d]fae} L^{ef} 
 + 2\nabla_{[a}T_{|cd|b]} + \epsilon_{abef}\nabla^f {}^*T_{cd}{}^e. \label{CWE4}
\end{eqnarray}
It is possible to eliminate terms containing $L_{ab}$ from the wave
equation \eqref{CWE4} through the generalisation of an identity obtained in \cite{Pae15} to
the case of tracefree matter.  Multiplying equation \eqref{TraceCFE4Alt} by $\Xi$, using
the definitions of $d^a{}_{bcd}$ and $^*T_{abc}$, equation \eqref{TraceCFE3}
and the second Bianchi identity to simplify it, one finds that
\begin{equation}
d_{cd[ag}\nabla_{b]}\Xi + d_{de[ag}g_{b]c}\nabla^e\Xi - d_{ce[ag}g_{b]d}\nabla^e\Xi = 0.
\label{PaetzIdentityInter}
\end{equation}
Applying a further covariant derivative $\nabla^g$ to the last expression and
making use of equations \eqref{TraceCFE1}, \eqref{TraceCFE4} and
\eqref{TraceCFE4Alt} as well as the properties of the rescaled Cotton tensor,
the following identity is obtained:
\begin{eqnarray}
&& 2\Xi d_{cdf[a} L_{b]}{}^{f} + 2\Xi d_{abf[c} L_{d]}{}^{f} 
+ 2 g_{a[c} \Xi d_{d]gbf} L^{fg} - 2\Xi g_{b[c} d_{d]gaf} L^{fg}  
+ \tfrac{1}{6} \Xi d_{abcd} R \nonumber \\
&& - \Xi^3 d_{cdf[a} T_{b]}{}^{f} - \Xi^3 d_{abf[c} T_{d]}{}^{f} 
- \Xi^3 g_{a[c} d_{d]gbf} T^{fg} + \Xi^3 g_{b[c} d_{d]gaf} T^{fg} = 0.
\label{PaetzIdentity}
\end{eqnarray}
By substituting this into expression \eqref{CWE4} we get equation
\eqref{CWE4Alternative}, which does not involve the Schouten tensor.
\end{proof}

\begin{remark}
{\em In concrete applications it may prove useful to express the Schouten
tensor in terms of the tracefree Ricci tensor and the Ricci scalar through the
formula
\begin{equation}
L_{ab} = \Phi_{ab} + \tfrac{1}{24}R g_{ab}.
\label{DecompositionSchouten}
\end{equation}
As will be discussed in Section \ref{Section:ConformalGaugeSourceFunctions},
the Ricci scalar $R$ is associated to the particular choice of conformal gauge.
Thus, the decomposition \eqref{DecompositionSchouten} allows us to split the
field $L_{ab}$ into a gauge part and a part which is determined through the
field equations. Keeping the simplicity of the presentation in mind, we do not
pursue this approach further as it leads to lengthier expressions.}
\end{remark}

\section{Zero-quantities and integrability conditions}
\label{Section: ZQandIC}
In this section we consider a convenient setting for the discussion and
book-keeping of the evolution equations implied by the conformal Einstein field
equations with tracefree matter. Our approach is based on the observation that
the metric conformal Einstein field equations constitute an overdetermined
system of differential conditions for the various conformal fields. Thus, the
equations are related to each other through \emph{integrability conditions} ---
i.e.  necessary conditions for the existence of solutions to the equations. 

\subsection{Definitions and basic properties}

First we proceed to introduce the set of \emph{geometric zero-quantities} (also
called \emph{subsidiary variables}) associated to the system of metric
tracefree conformal Einstein field equations
\eqref{TraceCFE1}-\eqref{TraceCFE5}. These fields are defined as:
\begin{subequations}
\begin{eqnarray}
&& \Upsilon_{ab} \equiv \nabla_{a}\nabla_{b}\Xi +\Xi
L_{ab}+sg_{ab} - \tfrac{1}{2}\Xi^{3}T_{ab}, \label{ZQ1} \\
&& \Theta_a \equiv \nabla_{a}s + L_{ac}\nabla^{c}\Xi - \tfrac{1}{2}\Xi^{2}\nabla^{c}\Xi
T_{ac}, \label{ZQ2} \\
&& \Delta_{abc} \equiv  \nabla_{a}L_{bc}-\nabla_{b}L_{ac} - \nabla_{e}\Xi d^{e}{}_{cab} 
- \Xi T_{abc}, \label{ZQ3} \\
&& \Lambda_{abc} \equiv T_{bca} - \nabla_{e}d^{e}{}_{abc}, \label{ZQ4} \\
&& Z \equiv \lambda - 6\Xi s
   +3\nabla_{c}\Xi\nabla^{c}\Xi, \label{ZQ5}\\
&& P^c{}_{dab}\equiv  R^c{}_{dab} - \Xi d^c{}_{dab} - 2( \delta^c{}_{[a}L_{b]d} -
   g_{d[a}L_{b]}{}^c). \label{ZQ6}
\end{eqnarray}
\end{subequations}
In terms of the above, the conformal Einstein field equations
\eqref{TraceCFE1}-\eqref{TraceCFE6} can be expressed as the conditions
\[
\Upsilon_{ab}=0, \qquad  \Theta_a=0, \qquad  \Delta_{abc} = 0, \qquad
\Lambda_{abc}=0, \qquad Z=0,\qquad P^c{}_{dab}=0,
\]
from where these fields take their name.

\subsubsection{Properties of the zero-quantities}
By definition, the zero-quantities possess the following symmetries:
\begin{equation}
\begin{gathered}
\Upsilon_{ab} = \Upsilon_{(ab)}, \quad \Delta_{abc} = \Delta_{[ab]c}, \quad
\Delta_{[abc]} = 0, \quad \Lambda_{abc} = \Lambda_{a[bc]}, \quad \Lambda_{[abc]} =0, \\
\Delta_a{}^b{}_b = 0, \quad \Lambda^b{}_{ab} = 0. 
\label{ZQIdentities}
\end{gathered}
\end{equation}
Moreover, one can check that $\Delta_{abc}$ and $\Lambda_{abc}$ satisfy the
identities
\begin{equation}
\Delta_{abc} = \tfrac23 \Delta_{abc} + \tfrac13\Delta_{acb} - \tfrac13\Delta_{bca}, \qquad
\Lambda_{abc} = \tfrac23 \Lambda_{abc} + \tfrac13\Lambda_{bac} - \tfrac13\Lambda_{cab}, \label{LanczosDeltaLambda}
\end{equation}
which are useful for simplifying certain combinations of zero-quantities. Regarding
$P^a{}_{bcd}$, it inherits the symmetries of the Riemann tensor; in particular,
we can define its Hodge dual tensors
\begin{equation}
^*P_{abcd} \equiv \tfrac12\epsilon_{ab}{}^{ef} P_{efcd},
\quad P^*_{abcd} \equiv \tfrac12\epsilon_{cd}{}^{ef} P_{abef}.
\label{ZQ6Duals}
\end{equation}

In addition, it will result useful to introduce a further auxiliary
zero-quantity associated to equation \eqref{TraceCFE4Alt} --- see
\Cref{Remark: TraceCFE4Alt}:
\begin{eqnarray}
&& \Lambda_{abcde} \equiv 3\nabla_{[a}d_{bc]de} + \epsilon_{abcf}{}^*T_{de}{}^f 
= 3\Lambda_{d[ab} g_{c]e}  - 3\Lambda_{e[ab} g_{c]d}.
\label{Lambda_abcde}
\end{eqnarray}
Here, the second equality has been obtained through a calculation similar to
the one yielding \eqref{TraceCFE4Alt}. From the above definition it follows that
$\Lambda_{ab}{}^d{}_{cd} = \Lambda_{abc}$, as well as
\begin{equation}
\Lambda_{abcde} = \Lambda_{[abc]de}, \qquad \Lambda_{abcde} =
\Lambda_{abc[de]}.
\label{Identity:LambdaFiveIndices}
\end{equation}

\subsubsection{Some consequences of the wave equations}

Key for our subsequent analysis is the observation that assuming the validity
of the geometric wave equations for the conformal fields implies a further set of
relations satisfied by the zero-quantities. These are summarised in the
following lemma:

\begin{lemma}
\label{LemmaZQWE}
Assume that the wave equations \eqref{EquationMetric},
\eqref{CWE1}-\eqref{CWE4Alternative}, and
\Cref{Assumption:AssumptionDivergenceUnphysicalEnergyMomentum} hold. Then the
geometric zero-quantities satisfy the identities
\begin{subequations}
\begin{eqnarray}
&& \Upsilon_a{}^a = 0, \label{ZQWE1} \\
&& P^c{}_{acb} = 0, \label{ZQWE6} \\
&& \nabla_b \Upsilon_a{}^b = 3 \Theta_a, \label{DivergenceZQ1} \\
&& \nabla_a\Theta^a = \Upsilon^{ab}L_{ab} -\tfrac12 \Xi^2 \Upsilon^{ab}T_{ab}, \label{ZQWE2} \\
&& \nabla_c\Delta_a{}^c{}_b = \Upsilon^{cd}d_{acbd} + \Lambda_{abc}\nabla^c\Xi
- L^{cd}P_{acbd}, \label{ZQWE3} \\
&& \nabla_c\Delta_{ab}{}^c  = 2\Xi T_{c[a}\Upsilon_{b]}{}^c -\Lambda_{cab}\nabla^c\Xi, \label{DivergenceDelta1} \\
&& \nabla_c \Lambda^c{}_{ab} = d_{[a}{}^{cde}P_{b]cde} -2T_{c[a}\Upsilon_{b]}{}^c, \label{DivergenceLambda1} \\
&& \nabla_c \Lambda_{[ab]}{}^c = 2d_{[a}{}^{cde}P_{b]dec}, \label{SymmetricDivergenceLambda} \\
&& \nabla_d P_{abc}{}^d = -\Delta_{abc} -\Xi \Lambda_{cab} , \label{DeltaLambdaIdentity} \\
&& \nabla_{c}\Lambda_{eg}{}^{c}{}_{mn} = 2 \nabla_{[e}\Lambda_{g]mn} + 2
d_{[e}{}^{c}{}_{|m|}{}^{h}P_{g]cnh} - 2 d_{[e}{}^{c}{}_{|n|}{}^{h}P_{g]cmh} + 2
d_{mn}{}^{ch}P_{ecgh}. \label{ZQWE4}
\end{eqnarray}
\end{subequations}
\end{lemma}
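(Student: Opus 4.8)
The plan is to derive each identity in Lemma~\ref{LemmaZQWE} directly from the definitions \eqref{ZQ1}--\eqref{ZQ6} and \eqref{Lambda_abcde} of the zero-quantities, substituting the wave equations \eqref{EquationMetric} and \eqref{CWE1}--\eqref{CWE4Alternative} wherever a second-order term (a $\square$ or a trace of second derivatives) appears, and then reducing the resulting tensor expression using the Bianchi identities, the curvature commutator from the Conventions section, and the symmetries \eqref{ZQIdentities}--\eqref{Identity:LambdaFiveIndices}. The algebraic identities \eqref{ZQWE1} and \eqref{ZQWE6} come first and most cheaply: tracing \eqref{ZQ1} and using \eqref{CWE1} together with \eqref{FriedrichScalarDefinition} kills $\Upsilon_a{}^a$, while contracting $P^c{}_{dab}$ on the first and third indices and invoking \eqref{EquationMetric} (the trace of \eqref{TraceCFE6}) together with the tracelessness of $d^a{}_{bcd}$ gives \eqref{ZQWE6}. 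These two trace identities are then available as simplifications throughout the rest of the argument, so it is important to establish them at the outset.

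Next I would treat the divergence identities for $\Upsilon_{ab}$, $\Theta_a$, $\Delta_{abc}$ in that order, since each feeds into the next. For \eqref{DivergenceZQ1} I take $\nabla^b\Upsilon_{ab}$, commute derivatives to convert $\nabla^b\nabla_a\nabla_b\Xi$ into $\square\nabla_a\Xi$ plus a Ricci term $R_{ab}\nabla^b\Xi$, then use \eqref{CWE1} to evaluate $\nabla_a\square\Xi$ and \eqref{EquationMetric} to rewrite the Ricci term via $L_{ab}$; the non-derivative pieces $\nabla^b(\Xi L_{ab})$, $\nabla_a s$ and $\nabla^b(\Xi^3 T_{ab})$ collapse against the definition \eqref{ZQ2} of $\Theta_a$ after using \eqref{DivergenceUnphysicalEnergyMomentum}. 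For \eqref{ZQWE2} I differentiate \eqref{ZQ2}, use \eqref{CWE2} to evaluate $\square s$, and again use \eqref{EquationMetric} for the Ricci scalar/tensor terms and the contracted Bianchi identity $\nabla^b L_{ab}=\tfrac16\nabla_a R$; the matter terms organise themselves against $\tfrac12\Xi^2\Upsilon^{ab}T_{ab}$ using \eqref{ZQ1}. The identities \eqref{ZQWE3}--\eqref{DeltaLambdaIdentity} follow the same template: differentiate the relevant zero-quantity, replace the resulting box-term by the corresponding wave equation \eqref{CWE3} or \eqref{CWE4Alternative}, commute any remaining derivatives to expose curvature (which one then trades for $P^c{}_{dab}$ plus a term built from $L_{ab}$ and $\Xi d^a{}_{bcd}$), and finally recognise the leftover first-order combinations as $\Delta_{abc}$, $\Lambda_{abc}$ or their contractions. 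Identity \eqref{SymmetricDivergenceLambda} is obtained from \eqref{DivergenceLambda1} and \eqref{ZQ4} by antisymmetrising and using the first Bianchi identity for $d^a{}_{bcd}$ to dispose of the $T_{c[a}\Upsilon_{b]}{}^c$ term.

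The final identity \eqref{ZQWE4} for the five-index quantity $\Lambda_{abcde}$ is where the real work lies, and I expect it to be the main obstacle. The strategy is to take $\nabla^c$ of the first expression for $\Lambda_{abcde}$ in \eqref{Lambda_abcde}, namely $3\nabla_{[a}d_{bc]de}$ plus the Cotton-dual term; the antisymmetrised triple-derivative of $d_{bcde}$ must be rewritten using the second Bianchi identity for $d^a{}_{bcd}$ — which is exactly equation \eqref{TraceCFE4Alt}, or rather its zero-quantity version — producing derivatives of $\Lambda_{abc}$ together with curvature commutator terms acting on $d_{bcde}$. Those commutator terms generate Riemann contracted against $d$, which I convert to $P$ contracted against $d$ (the $d$-against-$d$ and $L$-against-$d$ pieces are arranged to cancel by the Paetz-type identity \eqref{PaetzIdentity}), leaving precisely the three $d\!\cdot\!P$ terms on the right-hand side of \eqref{ZQWE4}. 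The bookkeeping here is heavy — several index pairs, the $\epsilon$-tensor identities needed to handle $\nabla^f\,{}^*T_{cd}{}^f$, and careful use of \eqref{Identity:LambdaFiveIndices} and the second equality in \eqref{Lambda_abcde} to keep the symmetry type under control — and this is the step best delegated to \texttt{xAct}. Throughout, I would consistently use \eqref{EquationMetric} to eliminate $R_{ab}$ in favour of $L_{ab}$ and $R$ whenever a Ricci tensor surfaces from a commutator, since that is what makes the $P$-terms appear cleanly rather than bare curvature.
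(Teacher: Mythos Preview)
Your strategy matches the paper's: definitions of the zero-quantities, substitution of the wave equations \eqref{EquationMetric} and \eqref{CWE1}--\eqref{CWE4Alternative}, commutation of derivatives, the second Bianchi identity, and the Cotton-tensor properties \eqref{CottonTensorProperties}. The paper's own proof is in fact just this one-sentence sketch, so there is no disagreement at the level of method.

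One practical point where the paper is more specific than your plan: for the five-index identity \eqref{ZQWE4} the authors explicitly recommend substituting the \emph{unsimplified} wave equation \eqref{CWE4} (the version still containing the Schouten terms) rather than \eqref{CWE4Alternative}. The reason is that the $L\!\cdot\!d$ terms produced by the curvature commutators acting on $d_{abcd}$ cancel directly against the $L\!\cdot\!d$ terms already present in \eqref{CWE4}; if you instead use \eqref{CWE4Alternative} you must reinsert precisely those terms via the Paetz-type identity \eqref{PaetzIdentity}, which is the extra step you describe. The two routes are equivalent (since \eqref{CWE4} and \eqref{CWE4Alternative} differ by $\Xi^{-1}$ times \eqref{PaetzIdentity}), but the paper's route is shorter and avoids the $\epsilon$-gymnastics you anticipate. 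Also, your remark that \eqref{SymmetricDivergenceLambda} follows from \eqref{DivergenceLambda1} by antisymmetrisation needs the Lanczos-type identity \eqref{LanczosDeltaLambda} (since $\Lambda_{[ab]}{}^c$ and $\Lambda^c{}_{ab}$ are different contractions), not the first Bianchi identity for $d^a{}_{bcd}$; the $T_{c[a}\Upsilon_{b]}{}^c$ term is handled by that index reshuffling, not by a Weyl symmetry.
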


\begin{proof}
The result follows directly from the definitions of the zero-quantities with
the aid of the wave equations for the conformal fields \eqref{EquationMetric}
and \eqref{CWE1}-\eqref{CWE4Alternative}, the second Bianchi identity and the
properties of the rescaled Cotton tensor. It is worth mentioning that
\eqref{ZQWE4} is obtained by using \eqref{CWE4} instead of
\eqref{CWE4Alternative} as it considerably simplifies the calculation.
\end{proof}

\subsection{Integrability conditions}

The zero-quantities are not independent of each other but they are related via
a set of identities, the so-called \emph{integrability conditions}.  These
relations are key for the computation of a suitable (subsidiary) system of wave
equations for the zero-quantities. The procedure to obtain these relations is
to compute suitable antisymmetrised covariant derivatives of the
zero-quantities which, in turn, are expressed in terms of lower order objects.
Following this general strategy we obtain the following:

\begin{proposition}
\label{Proposition:IntegrabilityConditions}
The geometric zero-quantities defined in \eqref{ZQ1}-\eqref{ZQ3} and
\eqref{ZQ5}-\eqref{ZQ6} satisfy the identities
\begin{subequations}
\begin{eqnarray}
&& 2\nabla_{[a}\Upsilon_{c]b}  = 2g_{b[a} \Theta_{c]} + \Xi
   \Delta_{acb} + P_{acbd}\nabla^d\Xi, \label{IC1}\\
&& 2\nabla_{[a}\Theta_{b]}  = -  2L_{[a}{}^{c} \Upsilon_{b]c} +
   \Delta_{abc} \nabla^{c}\Xi + \Xi^2
   T_{c[a}\Upsilon_{b]}{}^c, \label{IC2}\\
&& 3\nabla_{[d}\Delta_{ab]c} =
\Lambda_{abdce} \nabla^{e}\Xi + 3 \Upsilon_{[a}{}^{e}d_{bd]ce} + 3
L_{[a}{}^{e}P_{bd]ce} -  \tfrac{3}{2} \Xi^2 P_{[ab|c|}{}^{e}T_{d]e} + 2
\Xi \Upsilon_{[a}{}^{e}g_{b|c|}T_{d]e} \nonumber \\ 
&& \hspace{2cm} + \Xi \Upsilon_{[a}{}^{e}g_{|c|b}T_{d]e}, \label{IC3} \\
&& \nabla_a Z = -6\Xi \Theta_a + 6 \Upsilon_{ab}\nabla^b\Xi, \label{IC5} \\
&& 3 \nabla_{[e}P_{gh]mn} = \Xi \Lambda_{eghnm} - 3 \Delta_{[eg|m|}g_{h]n} + 3 \Delta_{[eg|n|}g_{h]m}.
\label{IC6}
\end{eqnarray}
\end{subequations}
\end{proposition}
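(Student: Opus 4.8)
The plan is to carry out, for each of the five identities, the procedure announced just before the statement: apply a suitably antisymmetrised covariant derivative to the zero-quantity (or pair of zero-quantities) on the left, commute covariant derivatives with the Ricci identity, and re-express everything so produced in terms of the zero-quantities themselves. Three manoeuvres do the work. First, whenever a Riemann tensor is generated it is immediately traded for conformal fields through the definition \eqref{ZQ6}, i.e. $R^c{}_{dab} = \Xi d^c{}_{dab} + 2(\delta^c{}_{[a}L_{b]d} - g_{d[a}L_{b]}{}^c) + P^c{}_{dab}$. Second, the definitions \eqref{ZQ1}, \eqref{ZQ2}, \eqref{ZQ3} and the auxiliary definition \eqref{Lambda_abcde} are used to trade $\nabla_a\nabla_b\Xi$, $\nabla_a s$, $2\nabla_{[a}L_{b]c}$ and $3\nabla_{[a}d_{bc]de}$ for $\Upsilon_{ab}$, $\Theta_a$, $\Delta_{abc}$ and $\Lambda_{abcde}$, modulo conformal-factor and rescaled-Cotton-tensor contributions. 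Third, the algebraic relations \eqref{ZQIdentities}, \eqref{CottonTensorProperties}, \eqref{Identity:LambdaFiveIndices}, the definition \eqref{RescaledCottonTensor} of $T_{abc}$, the divergence relations for $T_{abc}$ and $\,^*T_{abc}$ recorded after \eqref{DefinitionCottonDual}, and the identity \eqref{PaetzIdentity} are used to reorganise and cancel the remaining conformal-factor and matter terms.

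I would dispatch \eqref{IC5}, \eqref{IC1} and \eqref{IC2} first, as they are the lightest. Identity \eqref{IC5} needs no Ricci identity: one differentiates $Z$ from \eqref{ZQ5} and substitutes \eqref{ZQ1}, \eqref{ZQ2}, whereupon the $s\nabla_a\Xi$, $\Xi L_{ab}\nabla^b\Xi$ and $\Xi^3 T_{ab}\nabla^b\Xi$ contributions cancel in pairs and the stated right-hand side remains. Identity \eqref{IC1} uses the Ricci identity once: in $2\nabla_{[a}\Upsilon_{c]b}$ the only surviving commutator is $2\nabla_{[a}\nabla_{c]}\nabla_b\Xi = -R^d{}_{bac}\nabla_d\Xi$; inserting \eqref{ZQ6} isolates the $P_{acbd}\nabla^d\Xi$ term, the $\Xi d^d{}_{bac}\nabla_d\Xi$ piece joins with the $2\Xi\nabla_{[a}L_{c]b}$ terms to build $\Xi\Delta_{acb}$ (via \eqref{ZQ3} and \eqref{RescaledCottonTensor}), the $\nabla_{[a}s$ terms build $2g_{b[a}\Theta_{c]}$ (via \eqref{ZQ2}), and all $\nabla\Xi\,L$- and $\nabla\Xi\,T$-weighted remainders cancel. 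Identity \eqref{IC2} is handled in the same way but entirely without curvature: the scalar commutator $\nabla_{[a}\nabla_{b]}s$ vanishes and the Hessian $\nabla_a\nabla^c\Xi$ is symmetric, so after substituting \eqref{ZQ1} and \eqref{ZQ3} one reads off the right-hand side, the term $\Xi^2 T_{c[a}\Upsilon_{b]}{}^c$ emerging from the interaction of $\nabla_{[a}\nabla^c\Xi$ with the matter piece of $\Theta_a$.

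Identities \eqref{IC6} and \eqref{IC3} are the laborious ones and are best checked in \texttt{xAct}. For \eqref{IC6} one antisymmetrises $3\nabla_{[e}P_{gh]mn}$ and expands with \eqref{ZQ6}: the pure Riemann contribution $3\nabla_{[e}R_{gh]mn}$ drops by the second Bianchi identity; the $\Xi d_{ghmn}$ term splits, via \eqref{Lambda_abcde} and the antisymmetry \eqref{Identity:LambdaFiveIndices}, into the $\Xi\Lambda_{eghnm}$ term plus a $\,^*T$-remainder; the $L_{ab}$-terms produce, through \eqref{ZQ3}, the two $\Delta_{[eg|m|}g_{h]n}$, $\Delta_{[eg|n|}g_{h]m}$ terms; and the leftover $\nabla\Xi$-weighted $d$-terms, $\Xi T_{abc}$-terms and $\,^*T$-term cancel by the $\epsilon$-contraction manipulation underlying \eqref{TraceCFE4Alt}. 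Identity \eqref{IC3} is the most delicate: antisymmetrising $3\nabla_{[d}\Delta_{ab]c}$, the part of $\nabla_d\nabla_a$ symmetric in $d,a$ drops out, and this single structural fact is what guarantees both that the second derivatives of $T_{ab}$ hidden inside $3\nabla_{[d}(\Xi T_{ab]c})$ (after substituting \eqref{RescaledCottonTensor}) never appear, and that the purely quadratic $L_{ab}L_{cd}$ terms generated by \eqref{ZQ6} vanish. The surviving Riemann contraction then supplies $3L_{[a}{}^e P_{bd]ce}$, the term $3\nabla_{[d}d_{ab]ce}$ supplies $\Lambda_{abdce}$ via \eqref{Lambda_abcde}, the Hessian $\nabla_d\nabla_e\Xi$ supplies $3\Upsilon_{[a}{}^e d_{bd]ce}$ and the $\Upsilon$--$T$ terms via \eqref{ZQ1}, while the remaining $\Xi\,d\,L$-, $\Xi\,d\,R$- and $\Xi^3\,d\,T$-type terms are removed wholesale by the identity \eqref{PaetzIdentity} (or, to keep the derivation purely kinematical, by its refinement carrying $\Lambda_{abc}$-corrections, which then reorganise into the $\Lambda_{abdce}\nabla^e\Xi$ term); a last reorganisation using \eqref{CottonTensorProperties} and \eqref{ZQIdentities} yields exactly the right-hand side of \eqref{IC3}. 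The genuine obstacle is therefore not conceptual but combinatorial: tracking the many matter and conformal-factor contributions in \eqref{IC3} (and, less acutely, \eqref{IC6}) and confirming that every non-zero-quantity term cancels, which is precisely why a computer-algebra system is indispensable here.
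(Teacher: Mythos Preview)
Your treatment of \eqref{IC1}, \eqref{IC2}, \eqref{IC3} and \eqref{IC5} follows the same route the paper indicates: direct expansion of the antisymmetrised derivative, commutation via the Ricci identity, substitution of the zero-quantity definitions, and cancellation of the residual conformal-factor and matter terms. The paper's proof of these four is a one-liner (``direct calculations employing the definitions of the zero-quantities, the rescaled Cotton tensor and the first Bianchi identity''), and your more explicit breakdown is compatible with that.

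Your approach to \eqref{IC6}, however, differs from the paper's. You expand $3\nabla_{[e}P_{gh]mn}$ directly from the definition \eqref{ZQ6}, kill the Riemann part with the second Bianchi identity, extract $\Xi\Lambda_{eghnm}$ from the $\Xi d$-part via \eqref{Lambda_abcde}, and build the two $\Delta$-terms from the $L$-part via \eqref{ZQ3}, leaving a $\nabla\Xi$-weighted $d$-residue and a $\,^*T$-residue to be cancelled by the $\epsilon$-manipulation behind \eqref{TraceCFE4Alt}. The paper instead obtains \eqref{IC6} by a Hodge-dual trick analogous to the derivation of \eqref{TraceCFE4Alt}: it starts from the divergence identity $\nabla_d P_{abc}{}^d = -\Delta_{abc} - \Xi\Lambda_{cab}$ of Lemma~\ref{LemmaZQWE} (equation \eqref{DeltaLambdaIdentity}), contracts with $\epsilon_{mn}{}^{cd}$, uses $^*P_{abcd} = P^*_{abcd}$ (which follows from the trace identity \eqref{ZQWE6}), and then undoes the dual to read off \eqref{IC6}. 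Your direct expansion is self-contained and does not appeal to Lemma~\ref{LemmaZQWE}; the paper's route is shorter and more structural but leans on that prior lemma, in particular on the hypothesis \eqref{EquationMetric} implicit in \eqref{ZQWE6}.
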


\begin{proof}
Equations \eqref{IC1}-\eqref{IC5} follow from direct calculations employing the
definitions of the zero-quantities, the rescaled Cotton tensor and the first
Bianchi identity. Equation \eqref{IC6}, on the other hand, can be obtained in a
similar manner as \eqref{TraceCFE4Alt}: multiplying \eqref{DeltaLambdaIdentity}
by $\epsilon_{mn}{}^{cd}$ and exploiting the fact that $^*P_{abcd} =
P^*_{abcd}$ --- which is a consequence of \eqref{ZQWE6} --- yields
\begin{equation}
2\nabla_a{}^*P_{mnb}{}^a = 2 \nabla_a P^*_{mnb}{}^a =  -\epsilon_{mnac}(\Xi\Lambda_b{}^{ac} + \Delta^{ac}{}_b).
\end{equation}
By substituting back the definition of $P^*_{mnab}$, \eqref{IC6} is found after
some simplifications.
\end{proof}

\begin{remark}
{\em  Observe that these relations have right-hand sides consisting of lower
order expressions which are homogeneous in the zero-quantities. This property
will be key when suitable wave equations for these fields are derived in the
next section. Equations \eqref{IC1}-\eqref{IC6} together with \eqref{ZQWE4}
constitute the set of integrability conditions for the geometric zero-quantities
associated to the tracefree conformal Einstein field equations.}
\end{remark}

\begin{remark}
\label{Remark:EquivalenceWaveEquationsWeyl}
{\em The expressions in Lemma \eqref{LemmaZQWE} and Proposition
  \eqref{Proposition:IntegrabilityConditions} allow us to show, in particular, that
  the wave equations \eqref{CWE4Alternative} and \eqref{CWE4} differ
  from each other by a homogeneous combination of
  zero-quantities. Thus, in arguments involving the propagation of
  the constraints, both forms of the evolution equation can be used interchangeably.}
\end{remark}

\section{The subsidiary evolution system for the zero-quantities}
\label{Section:SubsidiarySystem}

An important aspect of any \emph{hyperbolic reduction procedure} for the
(conformal) Einstein field equations is the identification of the conditions
upon which a solution to the (reduced) evolution equations implies a
solution to the full set of field equations --- this type of analysis is
generically known as the \emph{propagation of the constraints}. In
practice, the propagation of the constraints requires the construction
of a suitable system of evolution equations for the zero-quantities
associated to the field equations. 

\subsection{Construction of the subsidiary system}

In this section it is shown how the set of integrability conditions
provides a systematic and direct way to obtain wave
equations for the zero-quantities --- a so-called \emph{subsidiary evolution
system}. The propagation of the constraints then follows from the structural
properties of the subsidiary system as a consequence of the uniqueness
of solutions to systems of wave equations.  

\subsubsection{Equations for $\Upsilon_{ab}$, $\Theta_a$,
  $\Delta_{abc}$, $Z$ and $P_{abcd}$}

Equation \eqref{IC1} serves as the starting point to obtain a wave equation for
$\Upsilon_{ab}$. After applying $\nabla^c$ and commuting derivatives, equation
\eqref{DivergenceZQ1} renders it as a suitable wave equation. Remaining first
order derivatives can be rewritten and simplified via equations
\eqref{LanczosDeltaLambda}, \eqref{ZQWE1}, \eqref{ZQWE2}, \eqref{ZQWE3} and
\eqref{DeltaLambdaIdentity} resulting in:
\begin{align}
\square \Upsilon_{ab} = & \tfrac{1}{6} \Upsilon_{ab} R - 2 \Upsilon^{cd}
L_{cd} g_{ab} + \tfrac{1}{2} \Xi^2 \Upsilon^{cd} g_{ab} T_{cd} + 4
\nabla_{(a}\Upsilon_{b)} - 2 \Xi \Upsilon^{cd}d_{acbd} + 4\Upsilon_{(a}{}^{c}L_{b)c} \nonumber \\
& - 2 \Upsilon^{cd}P_{acbd} + 2 \Xi
L^{cd}P_{acbd} - \tfrac{1}{2} \Xi^3 P_{abcd}T^{cd}.
\label{WEZQ1}
\end{align}

Regarding $\Theta_a$, an analogous calculation using expression \eqref{IC2} in
conjunction with the same equations as in the previous case leads directly to a
wave equation for this field. Exploiting \eqref{TraceCFE3},
\eqref{RescaledCottonTensor} and \eqref{IC1} to simplify it one obtains
\begin{align}
\square\Theta_c = & \ 6 L_{ca} \Theta^{a} - 2 \Upsilon^{ab}
\Delta_{cab} + 2 \Xi L^{ab} \Delta_{cab} -  \Xi^3 \Delta_{c}{}^{ab}
T_{ab} - 2 \Xi^2 \Theta^{a} T_{ca} - 2 \Upsilon^{bd} d_{cbad} \nabla^{a}\Xi \nonumber \\
& + \tfrac{3}{2} \Xi \Upsilon_{c}{}^{b} T_{ab} \nabla^{a}\Xi + \tfrac{1}{2} \Xi^2
P_{cbad} T^{bd} \nabla^{a}\Xi + \tfrac{1}{2} \Xi \Upsilon_{a}{}^{b} T_{cb}
\nabla^{a}\Xi - \tfrac{1}{6} \Upsilon_{ca} \nabla^{a}R -  \tfrac{5}{2} \Xi
\Upsilon^{ab} T_{ab} \nabla_{c}\Xi \nonumber \\
& + 2 \Upsilon^{ab} \nabla_{c}L_{ab} -  \Xi^2 \Upsilon^{ab}\nabla_{c}T_{ab}.
\label{WEZQ2}
\end{align}

A wave equation for $\Lambda_{abc}$ can be obtained by applying $\nabla^d$ to
integrability condition \eqref{IC3}, commuting derivatives and using
\eqref{ZQWE3} to eliminate the second order derivatives. A direct but long
calculation exploiting the same relations used in the previous two cases, along
with \eqref{TraceCFE4} and \eqref{Lambda_abcde}, yields
\begin{align}
\square \Delta_{abc} = & \ 2 \Lambda_{cab} s - \Upsilon_{c}{}^{d} T_{abd} -  \Xi \Lambda_{abdce} L^{de} + 3
d_{abcd} \Theta^{d} + \tfrac{1}{3} R \Delta_{abc} + L_{c}{}^{d} \Delta_{abd} +
\tfrac{1}{2} \Xi^3 \Lambda_{abdce} T^{de} \nonumber \\
& - \Xi P_{abce} T_{d}{}^{e} \nabla^{d}\Xi + \tfrac{1}{6} P_{abcd} \nabla^{d}R + \nabla^{d}\Xi
\nabla_{e}\Lambda_{ab}{}^{e}{}_{cd} + 2 \Upsilon^{de} \nabla_{e}d_{abcd} + L^{de} \nabla_{e}P_{abcd} \nonumber \\
&  - \tfrac{1}{2} \Xi^2 T^{de} \nabla_{e}P_{abcd} + 2 \Upsilon_{[a}{}^{d}T_{b]cd} 
- \Xi \Upsilon_{[a}{}^{d}\nabla_{|c|}T_{b]d} 
- 2 \Xi d_{[a}{}^{d}{}_{b]}{}^{e}\Delta_{dec} + 2 \Xi d_{[a}{}^{d}{}_{|c|}{}^{e}\Delta_{b]de} \nonumber \\
& + 2 d_{[a}{}^{d}{}_{|c|}{}^{e}\nabla_{b]}\Upsilon_{de} - 2d_{[a}{}^{d}{}_{|c}{}^{e}\nabla_{d|}\Upsilon_{b]e} 
- 2 L_{[a}{}^{d}\Delta_{b]dc} + 2L^{de}\nabla_{[a}P_{b]dce} - 2 P_{[a}{}^{d}{}_{b]}{}^{e}\Delta_{dec} \nonumber \\ 
& + 2 P_{[a}{}^{d}{}_{|c|}{}^{e}\Delta_{b]de} - 2 P_{[a}{}^{d}{}_{|c|}{}^{e}\nabla_{b]}L_{de} 
- 2 P_{[a}{}^{d}{}_{|c}{}^{e}\nabla_{d|}L_{b]e} + \Xi^2 P_{[a}{}^{d}{}_{|c}{}^{e}\nabla_{d|}T_{b]e} 
- \Xi^2 \Delta_{c}{}^{d}{}_{[a}T_{b]d} \nonumber \\
&  + \Xi T_{[a}{}^{d}\nabla_{|c|}\Upsilon_{b]d} - 2 \nabla^{d}\Xi\nabla_{[a}\Lambda_{b]cd} 
+ 2 \Upsilon^{de}T_{[a|de|}g_{b]c} 
+ \Xi \Upsilon^{de}g_{[a|c}\nabla_{d|}T_{b]e} - \Upsilon_{[a}{}^{d}T_{b]d}\nabla_{c}\Xi \nonumber \\
& - 2 L^{de}\Delta_{[a|de|}g_{b]c} + 3 \Xi \Upsilon^{d}g_{[a|c|}T_{b]d} 
+ 2 \Xi P_{[a}{}^{d}{}_{|c|}{}^{e}T_{b]e}\nabla_{d}\Xi 
- \Xi g_{[a|c}T^{de}\nabla_{d|}\Upsilon_{b]e} \nonumber \\
& + \Upsilon_{[a}{}^{d}g_{b]c}T_{d}{}^{e}\nabla_{e}\Xi + \Upsilon^{de}g_{[a|c|}T_{b]d}\nabla_{e}\Xi.
\label{WEZQ3}
\end{align}

A wave equation for $Z$ is readily found by simply applying $\nabla^a$ to
equation \eqref{IC5}:
\begin{equation}
\square Z = 6 \Upsilon_{ab} \Upsilon^{ab} - 12 \Xi \Upsilon^{ab} L_{ab} 
+ 6 \Xi^3 \Upsilon^{ab} T_{ab} + 12 \Theta^{a} \nabla_{a}\Xi.
\label{WEZQ5}
\end{equation}

In the case of $P_{abcd}$, application of $\nabla^h$ together with equations
\eqref{ZQWE6}, \eqref{ZQWE3}, \eqref{DeltaLambdaIdentity}, as well as the
various symmetries of $\Lambda_{abc}$ and $P^a{}_{bcd}$ results, after a rather
direct calculation in:
\begin{align}
\square P_{egmn} = & \tfrac{1}{3} R P_{egmn} - 2L_{[m}{}^{h} P_{n]heg} +
2\Lambda_{[n|eg|} \nabla_{m]}\Xi + 2\Xi \nabla_{[m}\Lambda_{n]eg} +
2\nabla_{[m}\Delta_{|eg|n]} + 2 \Xi \nabla_{[e}\Lambda_{g]mn} \nonumber \\ 
& + 2 \nabla_{[e}\Delta_{|mn|g]} - 2
\Lambda_{[e|mn|}\nabla_{g]}\Xi - 2 \Xi d_{[e}{}^{h}{}_{g]}{}^{a}P_{mnha}
- 2 \Xi d_{[e}{}^{h}{}_{|m|}{}^{a}P_{g]hna} + 2 \Xi d_{[e}{}^{h}{}_{|n|}{}^{a}P_{g]hma} \nonumber \\
& - 2 L_{[e}{}^{h}P_{g]hmn} - 2 P_{[e}{}^{h}{}_{g]}{}^{a}P_{mnha} 
- 4 P_{[e}{}^{h}{}_{|m|}{}^{a}P_{g]hna} + 2 \Xi g_{[e|m}\nabla^{h}\Lambda_{n|g]h} 
- 2 \Xi g_{[e|n}\nabla^{h}\Lambda_{m|g]h} \nonumber \\
& + 2 \Upsilon^{ha}d_{[e|hma|}g_{g]n} - 2\Upsilon^{ha}d_{[e|hna|}g_{g]m} 
+ 2\Lambda_{[g|nh|}g_{e]m} + 2 \Lambda_{n[g|h|}g_{e]m} + 2\Lambda_{m[e|h|}g_{g]n} \nonumber \\
& + 2\Lambda_{[e|mh|}g_{g]n} - 4 L^{ha}P_{[e|hma|}g_{g]n} + 4L^{ha}P_{[e|hna|}g_{g]m}.
\label{WEZQ6}
\end{align}

\subsubsection{Equation for $\Lambda_{abc}$} 

Notice that the integrability condition for $\Lambda_{abc}$, equation
\eqref{ZQWE4}, contains derivatives of zero-quantities on both sides of the
equation. This feature seems to hinder our standard approach for the
construction of a subsidiary equation. Then, in order to construct a suitable
wave equation it will be necessary to exploit the symmetries of
$\Lambda_{abcde}$.  Applying $\nabla^e$ to the integrability condition
\eqref{ZQWE4} and commuting derivatives leads to
\begin{align}
\square \Lambda_{gmn} = & \
\Lambda^{c}{}_{mn} R_{gc} + \nabla_{g}\nabla_{c}\Lambda^{c}{}_{mn} - 2
P_{g}{}^{ceh} \nabla_{h}d_{mnce} - 2 d_{mn}{}^{ce} \nabla_{h}P_{gce}{}^{h} -
\nabla^{c}\nabla^{e}\Lambda_{gce[mn]} \nonumber \\
& - 2 \Lambda^{c}{}_{[m}{}^{e}R_{|gc|n]e} -
2 d_{[m}{}^{ceh}\nabla_{|e}P_{gh|n]c} - 2
d_{g}{}^{c}{}_{[m}{}^{e}\nabla^{h}P_{n]ech} - 2 P_{[m}{}^{ceh}\nabla_{|e}d_{gh|n]c} \nonumber \\
& - 2 P_{g}{}^{c}{}_{[m}{}^{e}\nabla^{h}d_{n]ech}. \nonumber
\end{align}
Here, the double-derivative terms put at risk the hyperbolicity of the system.
For the second derivative of $\Lambda_{abc}$ one can use
\eqref{DivergenceLambda1}, while the one involving $\Lambda_{abcde}$ can be
eliminated by recalling that this field is antisymmetric under any permutation
of the first three indices --- see \eqref{Identity:LambdaFiveIndices}.  Using
this property and commuting derivatives gives
\begin{align}
\square \Lambda_{gmn} = & - \Xi \Lambda^{c}{}_{g}{}^{e} d_{mnce} + 4 \Lambda^{c}{}_{mn} L_{gc} + 2
  d_{mnce} \Delta_{g}{}^{ce} - 2 P_{g}{}^{ceh} \nabla_{h}d_{mnce} + 2
\Upsilon_{[m}{}^{c}\nabla_{|g|}T_{n]c} \nonumber \\
& - 2 \Xi \Lambda^{c}{}_{[m}{}^{e}d_{|g|n]ce}
- 4 \Xi \Lambda^{c}{}_{[m}{}^{e}d_{|ge|n]c} - 4 \Lambda^{c}{}_{g[m}L_{n]c} +
2 \Lambda_{[m}{}^{ce}P_{|gc|n]e} + 2 \Lambda^{c}{}_{g}{}^{e}P_{[m|c|n]e} \nonumber \\
& - 2 T_{[m}{}^{ce}P_{|ge|n]c} + 2 d_{g}{}^{c}{}_{[m}{}^{e}\Delta_{n]ec} - 2
d_{[m}{}^{ceh}\nabla_{|e}P_{gh|n]c} - 2 P_{[m}{}^{ceh}\nabla_{|e}d_{gh|n]c} - 2
T_{[m}{}^{c}\nabla_{|g|}\Upsilon_{n]c} \nonumber \\
& - \Xi \Lambda^{ceh}d_{[m|ceh}g_{g|n]} - 4
\Lambda^{c}{}_{[m}{}^{e}L_{|ce}g_{g|n]} -  \Lambda^{ceh}P_{[m|ceh}g_{g|n]}.
\label{WEZQ4}
\end{align}

The results of this section can be summarised in the following lemma:

\begin{lemma}
\label{Lemma:SubsidiarySystem}
Assume that the conformal fields satisfy equations \eqref{EquationMetric} and
\eqref{CWE1}-\eqref{CWE4Alternative}. Then, the geometric zero-quantities
\eqref{ZQ1}-\eqref{ZQ6} satisfy the homogeneous system of geometric wave
equations \eqref{WEZQ1}-\eqref{WEZQ4}.
\end{lemma}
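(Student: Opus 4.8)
The plan is to establish the six wave equations one zero-quantity at a time, using in every case the same recipe supplied by the two preceding statements. For a zero-quantity $Q$ one starts from the integrability condition of \Cref{Proposition:IntegrabilityConditions} in which an antisymmetrised first derivative of $Q$ is expressed through lower-order terms, takes a suitable divergence $\nabla^{c}$, and then commutes covariant derivatives so that the leading term becomes the D'Alembertian $\square Q$ accompanied by curvature contractions of $Q$. The remaining first-order terms are of two kinds: divergences of zero-quantities, which are removed by the identities of \Cref{LemmaZQWE}; and zero-quantities multiplying derivatives of the geometric fields $\Xi$, $s$, $L_{ab}$, $d_{abcd}$, which are rewritten with \eqref{TraceCFE3}, \eqref{TraceCFE4}, \eqref{RescaledCottonTensor} and the defining relations \eqref{ZQ1}--\eqref{ZQ6}. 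Every curvature tensor produced by a commutator is decomposed through \eqref{EquationMetric} and \eqref{ZQ6}, so that it contributes only terms of the schematic form (geometric field) $\times$ (zero-quantity) together with $P_{abcd}\times$(zero-quantity); the key structural fact, to be verified by the bookkeeping, is that all contributions then organise into expressions homogeneous of degree one in the zero-quantities, so that no inhomogeneous source survives and the subsidiary system is genuinely homogeneous.

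Carrying this out: applying $\nabla^{c}$ to \eqref{IC1} and using \eqref{DivergenceZQ1} together with \eqref{ZQWE1}, \eqref{ZQWE2}, \eqref{ZQWE3}, \eqref{DeltaLambdaIdentity} and the algebraic identities \eqref{ZQIdentities}, \eqref{LanczosDeltaLambda} yields \eqref{WEZQ1}; the same scheme applied to \eqref{IC2} (with \eqref{ZQWE2} for the divergence, and \eqref{IC1}, \eqref{TraceCFE3}, \eqref{RescaledCottonTensor} for the simplifications) gives \eqref{WEZQ2}; applied to \eqref{IC3} (using \eqref{ZQWE3} to eliminate the second derivatives, and \eqref{TraceCFE4}, \eqref{Lambda_abcde}) gives the equation \eqref{WEZQ3} for $\Delta_{abc}$; applied to \eqref{IC6} (using \eqref{ZQWE6}, \eqref{ZQWE3}, \eqref{DeltaLambdaIdentity} and the symmetries of $\Lambda_{abc}$ and $P^{a}{}_{bcd}$) gives \eqref{WEZQ6}; and for $Z$ one simply differentiates \eqref{IC5} once, reading off \eqref{WEZQ5} with no further work.

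The one genuinely delicate case is $\Lambda_{abc}$. Its integrability condition \eqref{ZQWE4} already carries derivatives of zero-quantities on both sides and is formulated through $\Lambda_{abcde}$, so a naive $\nabla^{e}$ produces two second-derivative terms that threaten hyperbolicity, namely $\nabla_{g}\nabla_{c}\Lambda^{c}{}_{mn}$ and $\nabla^{c}\nabla^{e}\Lambda_{gce[mn]}$. The plan is to trade the first for lower order via the divergence identity \eqref{DivergenceLambda1}, and to dispose of the second by invoking the total antisymmetry of $\Lambda_{abcde}$ in its first three indices \eqref{Identity:LambdaFiveIndices}: commuting the two derivatives turns that term into a curvature contraction of $\Lambda_{abcde}$ plus lower-order pieces, after which \eqref{Lambda_abcde} re-expresses everything in terms of $\Lambda_{abc}$. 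One must also remember to feed in the Weyl wave equation in its form \eqref{CWE4} rather than \eqref{CWE4Alternative}, exactly as in \Cref{LemmaZQWE}, which is legitimate by \Cref{Remark:EquivalenceWaveEquationsWeyl}; collecting terms then produces \eqref{WEZQ4}. The main obstacle throughout is not conceptual but the sheer length of the simplifications --- tracking index symmetrisations and repeatedly reducing curvature contractions to homogeneous zero-quantity expressions --- which is handled most safely with a computer algebra system such as {\tt xAct}.
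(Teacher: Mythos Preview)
Your proposal is correct and follows essentially the same approach as the paper: each wave equation is obtained by taking a divergence of the corresponding integrability condition from \Cref{Proposition:IntegrabilityConditions}, commuting derivatives, and reducing with the identities of \Cref{LemmaZQWE}; your treatment of the delicate case $\Lambda_{abc}$ via \eqref{ZQWE4}, the divergence identity \eqref{DivergenceLambda1}, and the antisymmetry \eqref{Identity:LambdaFiveIndices} of $\Lambda_{abcde}$ matches the paper's handling exactly, including the remark about the length of the computations and the use of {\tt xAct}.
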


\subsection{Propagation of the constraints}

As it will be discussed in detail in Section \ref{Section:GaugeConsiderations},
the system of geometric wave equations \eqref{WEZQ1}-\eqref{WEZQ4} implies, in
turn, a system of proper (hyperbolic) wave equations for which a theory of the
existence and uniqueness of solutions is readily available --- see e.g.
\cite{HugKatMar77}.  From the latter one directly obtains the following result:

\begin{proposition}
\label{Proposition:PropagationOfTheConstraints}
Assume that the geometric zero-quantities and their first derivatives vanish on a
fiduciary spacelike hypersurface $\mathcal{S}_\star$ of an unphysical spacetime
$(\mathcal{M},g_{ab})$. Then, the geometric zero-quantities vanish on the domain of
dependence $D(\mathcal{S}_\star)$ of $\mathcal{S}_\star$.
\end{proposition}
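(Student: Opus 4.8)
The plan is to derive Proposition~\ref{Proposition:PropagationOfTheConstraints} as a direct consequence of Lemma~\ref{Lemma:SubsidiarySystem} together with the uniqueness part of the standard existence theory for linear wave equations. Concretely, I would first observe that the collection of zero-quantities $(\Upsilon_{ab},\Theta_a,\Delta_{abc},\Lambda_{abc},Z,P^c{}_{dab})$ can be assembled into a single vector-valued unknown $\mathbf{u}$, and that the system \eqref{WEZQ1}--\eqref{WEZQ4} has the schematic form $\square \mathbf{u} = \mathcal{A}(x)\, \mathbf{u} + \mathcal{B}(x)\, \nabla\mathbf{u}$, where the coefficients $\mathcal{A}$ and $\mathcal{B}$ are built polynomially out of the conformal fields $\Xi$, $s$, $L_{ab}$, $d^a{}_{bcd}$, $R$, $T_{ab}$ and the metric, all of which are already known (smooth, by the hypotheses ensuring a conformal extension) on a neighbourhood of $\mathcal{S}_\star$. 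The crucial structural feature, emphasised in the second Remark after the Main Theorem and verified term-by-term in Section~\ref{Section:SubsidiarySystem}, is that every right-hand side is \emph{homogeneous} in the zero-quantities: there is no inhomogeneous source, so $\mathbf{u}\equiv 0$ is a solution.

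Next I would promote the geometric wave system to a genuinely hyperbolic one. As announced in the paragraph preceding the proposition (and to be carried out in detail in Section~\ref{Section:GaugeConsiderations}), the operator $\square=\nabla_a\nabla^a$ associated to $g_{ab}$, once the gauge reduction is in place, has principal part equal to $g^{\mu\nu}\partial_\mu\partial_\nu$ acting diagonally on the components of $\mathbf{u}$, so the system is a second-order linear system that is hyperbolic with respect to the spacelike hypersurface $\mathcal{S}_\star$. Standard results for such systems --- e.g. the existence-and-uniqueness theory cited in \cite{HugKatMar77}, or equivalently energy estimates on lens-shaped domains --- then give: given Cauchy data $\mathbf{u}|_{\mathcal{S}_\star}$ and $(\mathcal{L}_n\mathbf{u})|_{\mathcal{S}_\star}$ (equivalently, the first derivatives of $\mathbf{u}$ on $\mathcal{S}_\star$), there is a unique solution on the domain of dependence $D(\mathcal{S}_\star)$ depending continuously on the data.

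The proof is then immediate: by hypothesis all geometric zero-quantities and their first derivatives vanish on $\mathcal{S}_\star$, so $\mathbf{u}|_{\mathcal{S}_\star}=0$ and $\nabla\mathbf{u}|_{\mathcal{S}_\star}=0$; since $\mathbf{u}\equiv 0$ solves the homogeneous system with exactly this trivial data, uniqueness forces $\mathbf{u}\equiv 0$ throughout $D(\mathcal{S}_\star)$, which is precisely the assertion that all geometric zero-quantities vanish there. One subtlety to flag explicitly is that Lemma~\ref{Lemma:SubsidiarySystem} lists wave equations for $\Upsilon_{ab}$, $\Theta_a$, $\Delta_{abc}$, $\Lambda_{abc}$ and $P_{abcd}$, but $Z$ is handled by \eqref{WEZQ5}; one should note that $Z$ is coupled into the same closed homogeneous system, and indeed the integrability conditions \eqref{IC5} together with \eqref{WEZQ5} show $Z$ propagates trivially given vanishing data.

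The main obstacle is not in this proof but in the hypotheses it rests on: the honest work is the verification --- deferred to Sections~\ref{Section:SubsidiarySystem} and \ref{Section:GaugeConsiderations} --- that (i) the subsidiary system genuinely \emph{closes}, i.e. no zero-quantity or derivative thereof appears on a right-hand side without being one of the unknowns of the system (this is exactly where the integrability conditions of Proposition~\ref{Proposition:IntegrabilityConditions} and Lemma~\ref{LemmaZQWE} do the heavy lifting, allowing the elimination of the troublesome double-derivative terms in the equation for $\Lambda_{gmn}$), and (ii) after gauge fixing the principal symbol is that of $\square$ with no lower-order obstruction to hyperbolicity. Granting those --- as we may, since they are established earlier --- the propagation statement is a one-line corollary of uniqueness for linear hyperbolic systems, and I would present it as such.
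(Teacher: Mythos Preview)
Your proposal is correct and follows essentially the same approach as the paper: the paper states the proposition as a direct consequence of Lemma~\ref{Lemma:SubsidiarySystem} together with the gauge reduction of Section~\ref{Section:GaugeConsiderations} and the existence--uniqueness theory of \cite{HugKatMar77}, without further elaboration. Your write-up is in fact more detailed than the paper's treatment, correctly flagging the inclusion of $Z$ via \eqref{WEZQ5} and the closure issues handled in Section~\ref{Section:SubsidiarySystem}.
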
 

\begin{remark}
{\em Working, for example, with coordinates adapted to the hypersurface
$\mathcal{S}_\star$, it can be readily checked that the completely spatial parts
of the zero-quantities $\Upsilon_{ab}$, $\Theta_a$, $\Lambda_{abc}$,
$\Delta_{abc}$, $Z$ and $P^a{}_{bcd}$ encode the same information as the
conformal Einstein constraint equations --- see e.g. \cite{CFEBook}, Chapter
11. Similarly, projections with a transversal (i.e. timelike) component can be
read as a first order evolution system for the geometric conformal fields ---
we ignore null components as these can be obtained as linear combinations of
transversal and intrinsic components. Thus, in order to ensure the vanishing of
the zero-quantities on the initial hypersurface $\mathcal{S}_\star$, one needs,
firstly, to produce a solution to the conformal constraint equations; this
ensures the vanishing of the spatial part of the zero-quantities. Secondly, one
reads the transversal components of the zero-quantities as definitions for the
normal derivatives of the conformal fields which can be readily computed from
the solution to the conformal constraints. In this way, the transversal
components of the zero-quantities vanish \emph{a fortiori}.}
\end{remark}

\section{Gauge considerations}
\label{Section:GaugeConsiderations}

This section provides a brief overview of the gauge freedom inherent to the
conformal Einstein field equations and the associated evolution equations. This
gauge freedom is of two types: \emph{conformal} and \emph{coordinate}. The
discussion in this section follows closely Section 2.3 in \cite{CarVal18b} and
is provided for completeness and to ease the reading of the article. 

\subsection{Conformal gauge source functions}
\label{Section:ConformalGaugeSourceFunctions}
An important feature of the conformal Einstein field equations is that the
Ricci scalar $R$ of the metric $g_{ab}$ can be regarded as a \emph{conformal
gauge source} specifying the representative in the conformal class
$[\tilde{\bmg}]$ of the (conformal) unphysical metric. Accordingly, one can
always find (locally) a conformal rescaling such that the metric $g'_{ab}$ has
a prescribed Ricci scalar $R'$. 

\begin{remark}
\label{Remark:ConformalGaugeSourceFunction}
{\em Based on the previous discussion, in what follows the Ricci
  scalar of the metric $g_{ab}$ is regarded as a prescribed function $\mathcal{R}(x)$
  of the coordinates, so one writes
\[
R=\mathcal{R}(x).
\] }
\end{remark}

\subsection{Generalised harmonic coordinates and the reduced Ricci
  operator}
\label{Section:GeneralisedWaveCoordinates}

The components of the Ricci
tensor $R_{ab}$ can be explicitly written in terms of the components
of the metric tensor $g_{ab}$ in general coordinates $x=(x^\mu)$ as 
\[
R_{\mu\nu} =
  -\frac{1}{2}g^{\lambda\rho} \partial_\lambda \partial_\rho
  g_{\mu\nu} + g_{\sigma(\mu}\nabla_{\nu)} \Gamma^\sigma + g_{\lambda\rho}
  g^{\sigma\tau} \Gamma^\lambda{}_{\sigma\mu} \Gamma^\rho{}_{\tau\nu}
  + 2 \Gamma^\sigma{}_{\lambda\rho} g^{\lambda\tau} g_{\sigma(\mu} \Gamma^\rho{}_{\nu)\tau},
\]
with
\[
\Gamma^\nu{}_{\mu\lambda} \equiv \frac{1}{2}g^{\nu\rho} ( \partial_\mu
g_{\rho\lambda} + \partial_\lambda g_{\mu\rho} - \partial_\rho g_{\mu\lambda}),
\]
where we have defined the \emph{contracted Christoffel symbols} as
$\Gamma^\nu \equiv g^{\mu\lambda} \Gamma^\nu{}_{\mu\lambda}$. A direct
computation then gives $\square x^\mu = - \Gamma^\mu$. Following the
well-known procedure for the hyperbolic reduction of the Einstein
field equations, we introduce \emph{coordinate gauge source functions}
$\mathcal{F}^\mu(x)$ to prescribe the value of the contracted
Christoffel symbols via the condition $\Gamma^\mu = \mathcal{F}^\mu(x)$.
This means that the coordinates $x=(x^\mu)$ satisfy the \emph{generalised
  wave coordinate condition}
\begin{equation}
\square x^\mu = -\mathcal{F}^\mu(x)
\label{GeneralisedWaveCoordinates}
\end{equation}
--- see e.g. \cite{Cho08,Ren08,CFEBook}.  Associated to the latter, it
is convenient to define the \emph{reduced Ricci operator}
$\mathscr{R}_{\mu\nu}[\bmg]$ as
\begin{equation}
\mathscr{R}_{\mu\nu}[\bmg] \equiv R_{\mu\nu} - g_{\sigma(\mu}\nabla_{\nu)}\Gamma^\sigma +
g_{\sigma(\mu}\nabla_{\nu)} \mathcal{F}^\sigma(x).
\label{DefinitionReducedRicci}
\end{equation}
More explicitly, one has that
\[
\mathscr{R}_{\mu\nu}[\bmg] =
  -\frac{1}{2}g^{\lambda\rho} \partial_\lambda \partial_\rho
  g_{\mu\nu} - g_{\sigma(\mu}\nabla_{\nu)} \mathcal{F}^\sigma(x) + g_{\lambda\rho}
  g^{\sigma\tau} \Gamma^\lambda{}_{\sigma\mu} \Gamma^\rho{}_{\tau\nu}
  + 2 \Gamma^\sigma{}_{\lambda\rho} g^{\lambda\tau} g_{\sigma(\mu} \Gamma^\rho{}_{\nu)\tau}.
\]
Thus, by choosing coordinates satisfying the generalised wave coordinates
condition \eqref{GeneralisedWaveCoordinates}, the unphysical Einstein equation
\eqref{EquationMetric} takes the form 
\begin{equation*}
\mathscr{R}_{\mu\nu}[\bmg] = 2 L_{\mu\nu} +\frac{1}{6}\mathcal{R}(x) g_{\mu\nu}.
\end{equation*}
Assuming that the components $L_{\mu\nu}$ are known, the latter is a quasilinear wave
equation for the components of the metric tensor. 

\subsubsection{The reduced wave operator}
\label{Section:ReducedWaveOperator}

The geometric wave operator $\square$ acting on tensorial fields contains
derivatives of the Christoffel symbols which, in turn, contain second order
derivatives of the components of the metric tensor. The presence of these
second order derivative terms is problematic as they destroy, in principle, the
hyperbolicity of the evolution equations \eqref{CWE3} and
\eqref{CWE4Alternative} since they enter in the principal part of the system.
However, as discussed in e.g. \cite{Pae15,CarVal18b}, the generalised wave
coordinate condition \eqref{GeneralisedWaveCoordinates} can be used to reduce
the geometric wave operator $\square$ to a proper second order hyperbolic
operator. 

\begin{definition}
\label{Definition:DefinitionBlackBoxOperator}
The reduced wave operator $\blacksquare$ acting on a covariant tensor field
$T_{\lambda\cdots\rho}$ is defined as
\begin{eqnarray*}
&& \blacksquare T_{\lambda \cdots \rho} \equiv \square
T_{\lambda\cdots\rho} +\bigg( (2 L_{\tau\lambda} + \frac{1}{6}\mathcal{R}(x)
   g_{\tau\lambda} - R_{\tau\lambda}) -
  g_{\sigma\tau}\nabla_\lambda (\mathcal{F}^\sigma(x)-\Gamma^\sigma) \bigg)T^\tau{}_{\cdots\rho}
 +\cdots\\
&& \hspace{3cm} \cdots + \bigg( (2 L_{\tau\rho} + \frac{1}{6}\mathcal{R}(x)
   g_{\tau\rho} - R_{\tau\rho}) -
  g_{\sigma\tau}\nabla_\rho (\mathcal{F}^\sigma(x)-\Gamma^\sigma)
  \bigg)T_{\lambda\cdots}{}^\tau,
\end{eqnarray*}
where $\square \equiv g^{\mu\nu}\nabla_\mu\nabla_\nu$. The action of
$\blacksquare$ on a scalar $\phi$ is simply given by
\[
\blacksquare \phi \equiv g^{\mu\nu}\nabla_\mu\nabla_\nu \phi. 
\]
\end{definition}

\begin{remark}
\label{Remark:DefinitionProper}
{\em The operator $\blacksquare$ provides a proper second order hyperbolic
operator for systems which involve the metric as an unknown, in contrast to
$\square$.  Accordingly, when working in generalised harmonic coordinates, all
the second order derivatives of the metric tensor can be removed from the
principal part of geometric wave equations. A system of evolutions equations
expressed in terms of the reduced wave operator $\blacksquare$ (rather than in
terms of the geometric wave operator $\square$) will be said to be
\emph{proper}.}
\end{remark}

\subsection{Summary: gauge reduced evolution equations}

The discussion of the previous sections leads us to consider the
following \emph{gauge reduced} system of evolution equations for the
components of the conformal fields $\Xi$, $s$, $L_{ab}$,
$d_{abcd}$ and $g_{ab}$ with respect to coordinates $x=(x^\mu)$
satisfying the generalised wave coordinate condition \eqref{GeneralisedWaveCoordinates}:
\begin{subequations}
\begin{eqnarray}
&&\hspace{-13mm} \blacksquare \Xi = 4s -\frac{1}{6}\Xi \mathcal{R}(x), \label{ReducedWaveCFE1} \\
&&\hspace{-13mm} \blacksquare s = - \tfrac{1}{6} s \mathcal{R}(x) + \Xi L_{\mu\nu} L^{\mu\nu} 
- \tfrac{1}{6} \nabla_{\mu}\mathcal{R}(x)
\nabla^{\mu}\Xi + \tfrac{1}{4} \Xi^5 T_{\mu\nu}T^{\mu\nu} - \Xi^3 L_{\mu\nu}T^{\mu\nu} 
+ \Xi \nabla^{\mu}\Xi \nabla^{\nu}\Xi T_{\mu\nu},
\label{ReducedWaveCFE2} \\
&&\hspace{-13mm} \blacksquare L_{\mu\nu} = -2 \Xi d_{\mu\rho\nu\lambda} L^{\rho\lambda} 
+ 4 L_{\mu}{}^{\lambda} L_{\nu\lambda} - L_{\lambda\rho} L^{\lambda\rho} g_{\mu\nu} 
+ \tfrac{1}{6} \nabla_{\mu}\nabla_{\nu}\mathcal{R}(x) + \tfrac{1}{2} \Xi^3 d_{\mu\lambda\nu\rho} T^{\lambda\rho} 
\nonumber \\
&& \hspace{0.1cm} - \Xi \nabla_{\lambda}T_{\mu}{}^{\lambda}{}_{\nu} 
- 2 T_{(\mu|\lambda|\nu)} \nabla^{\lambda}\Xi,\label{ReducedWaveCFE3} \\
&&\hspace{-13mm} \blacksquare d_{\mu\nu\lambda\rho} = - 4 \Xi
d_{\mu}{}^{\tau}{}_{[\lambda}{}^{\sigma} d_{\rho]\sigma\nu\tau} - 2 \Xi
d_{\mu}{}^{\tau}{}_{\nu}{}^{\sigma} d_{\lambda\rho\tau\sigma} + \tfrac{1}{2}
d_{\mu\nu\lambda\rho} \mathcal{R}(x) - T_{[\mu}{}^\sigma \Xi^2 d_{\nu]\sigma\lambda\rho} -
\Xi^2 T_{[\lambda}{}^\sigma d_{\rho]\sigma\mu\nu} \nonumber \\
&& \hspace{0.3cm} - \Xi^2 g_{\mu[\lambda} d_{\rho]\sigma\nu\tau} T^{\tau\sigma} 
+ \Xi^2 g_{\nu[\lambda}  d_{\rho]\sigma\mu\tau} T^{\tau\sigma} 
+ 2 \nabla_{[\mu}T_{|\lambda\rho|\nu]} + \epsilon_{\mu\nu\sigma\tau} \nabla^{\tau}\,^*T_{\lambda\rho}{}^{\sigma},
\label{ReducedWaveCFE4}\\
&& \hspace{-13mm}\mathscr{R}_{\mu\nu}[\bmg] = 2 L_{\mu\nu}
   +\frac{1}{6}\mathcal{R}(x) g_{\mu\nu}. \label{ReducedWaveCFE5}
\end{eqnarray}
\end{subequations}

\begin{remark}
\label{Remark:WaveEquationsExplicit}
{\em The reduced system of evolution equations 
  \eqref{ReducedWaveCFE1}-\eqref{ReducedWaveCFE5} is a system
of quasilinear wave equations for the fields $\Xi$, $s$,
$L_{\mu\nu}$, $d_{\mu\nu\lambda\rho}$ and $g_{\mu\nu}$. More
explicitly, one has that
\begin{eqnarray*}
&& g^{\sigma\tau}\partial_\sigma\partial_\tau \Xi = X
\big(\bmg,\partial\bmg, \Xi,s,\mathcal{R}(x)\big), \\
&& g^{\sigma\tau}\partial_\sigma\partial_\tau s =
S\big(\bmg,\partial\bmg,\Xi, \partial\Xi,s,{\bm
  L},\mathcal{R}(x),\partial\mathcal{R}(x), \bmT \big), \\
&& g^{\sigma\tau}\partial_\sigma\partial_\tau L_{\mu\nu}
=F_{\mu\nu}\big(\bmg,\partial\bmg,\Xi,{\bm L},{\bm d},
\mathcal{R}(x),\partial^2\mathcal{R}(x), \bmT, \partial\bmT\big), \\
&&  g^{\sigma\tau}\partial_\sigma\partial_\tau
d_{\mu\nu\lambda\rho}=D_{\mu\nu\lambda\rho}\big(\bmg,\partial\bmg,\Xi,{\bm
  d},\mathcal{R}(x), \partial\bmT\big), \\
&& g^{\sigma\tau}\partial_\sigma\partial_\tau g_{\mu\nu}
=G_{\mu\nu}\big(\bmg,\partial\bmg,{\bmL},\mathcal{R}(x)\big),
\end{eqnarray*}
where $X$, $S$, $F_{\mu\nu}$, $D_{\mu\nu\lambda\rho}$ and $G_{\mu\nu}$ are
polynomial expressions of their arguments. Strictly speaking, the system is a
system of wave equations only if $g_{\mu\nu}$ is known to be Lorentzian. The
basic existence, uniqueness and stability results of systems of the above type
have been given in \cite{HugKatMar77} --- these results are the second order
analogues of the theory developed in \cite{Kat75} for symmetric hyperbolic
systems. The basic theory for initial-boundary value problems can be found in
\cite{CheWah83,DafHru85}.}
\end{remark}

\section{Propagation of the gauge}
\label{Section:PropOfGauge}

This section is devoted to studying the consistency of the conformal and
coordinate gauge introduced in \Cref{Section:GaugeConsiderations} by
constructing a system of homogeneous wave equations for a set of subsidiary
fields.  The coming discussion extends the analysis in \cite{CarVal18b},
Section 5, for the vacuum case which is closely followed --- accordingly, we
mainly focus on the new features arising from the presence of matter. 

\subsection{Basic relations}

Consider a set of coordinates $x=(x^\mu)$. Let $g_{\mu\nu}$ denote the
components of a metric $g_{ab}$ in these coordinates. Similarly, $R_{\mu\nu}$
denotes the components of the associated Ricci tensor $R_{ab}$, while $R$ is
the corresponding Ricci scalar. We now investigate the requirements for $R$ and
$R_{\mu\nu}$ to coincide, respectively, with $\mathcal{R}(x)$ and
$\mathscr{R}_{\mu\nu}$.  In addition, we also need to investigate the
conditions under which $L_{\mu\nu}$ corresponds to the components of the
Schouten tensor. This can be expressed as the vanishing of the following
fields:
\begin{subequations}
\begin{eqnarray}
&& Q \equiv R -\mathcal{R}(x), \label{QScalar} \\
&& Q^\mu \equiv \Gamma^\mu - \mathcal{F}^\mu(x), \label{QVector} \\
&& Q_{\mu\nu} \equiv R_{\mu\nu} - \mathscr{R}_{\mu\nu}. 
\label{QTensor}
\end{eqnarray}
\end{subequations}
We make the following assumption:
\begin{assumption}
\label{Assumption:AssumptionGauge}
Let $T_{\mu\nu}$ and $T_{\mu\nu\lambda}$ be, respectively, the components of a
tracefree energy momentum tensor with vanishing divergence and its associated
rescaled Cotton tensor.  Let $g_{\mu\nu}$ and $L_{\mu\nu}$ be solutions to the
equations:
\begin{subequations}
\begin{eqnarray}
&& \mathscr{R}_{\mu\nu} = 2L_{\mu\nu} + \tfrac16\mathcal{R}(x)g_{\mu\nu}, \label{ReducedRicciGauge} \\
&& \blacksquare L_{\mu\nu} = -2 \Xi d_{\mu\rho\nu\lambda} L^{\rho\lambda} 
+ 4 L_{\mu}{}^{\lambda} L_{\nu\lambda} - L_{\lambda\rho} L^{\lambda\rho} g_{\mu\nu} 
+ \tfrac{1}{6} \nabla_{\mu}\nabla_{\nu}\mathcal{R}(x) + \tfrac{1}{2} \Xi^3 d_{\mu\lambda\nu\rho} T^{\lambda\rho} 
\nonumber \\
&& \hspace{1.3cm} - \Xi \nabla_{\lambda}T_{\mu}{}^{\lambda}{}_{\nu} -
   2 T_{(\mu|\lambda|\nu)} \nabla^{\lambda}\Xi. \label{BlackBoxPhiGauge}
\end{eqnarray}
\end{subequations}
\end{assumption}
As a direct consequence of equation \eqref{ReducedRicciGauge}, one can find that
the gauge zero-quantities \eqref{QScalar}-\eqref{QTensor} are not
independent of each other. Simple calculations yield
\begin{subequations}
\begin{eqnarray}
&& Q_{\mu\nu} = \nabla_{(\mu}Q_{\nu)}, \label{QIdentities1} \\
&& Q = Q_\mu{}^\mu = \nabla_\mu Q^\mu. \label{QIdentities2}
\end{eqnarray}
\end{subequations}
Furthermore, equation \eqref{DefinitionReducedRicci} and
\Cref{Definition:DefinitionBlackBoxOperator} lead to 
\begin{subequations}
\begin{eqnarray}
&& \mathscr{R}_{\mu\nu}[\bmg] = R_{\mu\nu} -\nabla_{(\mu} Q_{\nu)}, \label{ReducedRicci}\\
&& \blacksquare L_{\mu\nu} = \square L_{\mu\nu} - (Q_{\mu\sigma}
-\nabla_\mu Q_\sigma) L^\sigma{}_\nu - (Q_{\nu\sigma} -\nabla_\nu
Q_\sigma)L^\sigma{}_\mu. \label{ReducedWave}
\end{eqnarray}
\end{subequations}
\begin{remark}
\em{ Equations \eqref{QIdentities1}-\eqref{QIdentities2} show that if $Q^\mu =0$
then $Q$ and $Q_{\mu\nu}$ automatically vanish. In this sense, we will consider
$Q^\mu$ as the basic gauge zero-quantity of the system.}
\end{remark}

\subsection{The gauge subsidiary evolution system}

In this subsection we obtain a system of homogeneous wave equations for the
gauge subsidiary variables. This will be achieved by exploiting the properties
of the so-called \emph{Bach tensor} which will play the role of an
integrability condition for the system.

\subsubsection{The Bach tensor}

Let $g_{ab}$ be a 4-dimensional metric. The Bach tensor is defined as:
\begin{equation}
B_{ab} \equiv \nabla^c\nabla_a L_{bc} - \nabla^c\nabla_c L_{ab} - C_{acdb}L^{cd}.
\end{equation}
From this definition it is easy to verify that $B_{ab}$ is symmetric and
tracefree. Additionally, it satisfies the following identity, independently
of the validity of the Einstein field equations:
\begin{equation}
\nabla^aB_{ab} = 0. \label{DivergenceBach}
\end{equation}

\begin{remark}
{\em A straightforward calculation shows that the Bach tensor can be expressed
in terms of the geometric zero-quantities as
\begin{equation*}
B_{ab} = - L^{cd} P_{acbd} -  \tfrac{1}{2} \Xi^3 d_{acbd} T^{cd} + \Xi
\nabla_{c}T_{a}{}^{c}{}_{b} + 2T_{(a|c|b)} \nabla^{c}\Xi.
\end{equation*}
Consequently, if $g_{ab}$ is a solution to the tracefree metric
conformal Einstein field equations then the Bach tensor vanishes if $T_{ab} =
0$.}
\end{remark}

\begin{remark}
{\em In view of the fact that trivial initial conditions for the
zero-quantities imply the vanishing of $P^a{}_{bcd}$ --- see
\Cref{Proposition:PropagationOfTheConstraints} --- throughout the remainder of the
article, and for the sake of simplicity, our calculations will assume that
$P^a{}_{bcd} = 0$.}
\end{remark}

\subsubsection{Wave equations for the gauge subsidiary variables}

The Bach tensor can be conveniently expressed in terms of the gauge
zero-quantities.  Terms containing $R_{\mu\nu}$ and $R$ can be rewritten
according to definitions \eqref{QScalar} and \eqref{QTensor} along with
\eqref{QIdentities1} and \eqref{ReducedRicci}. A procedure similar to
that of Section 5.2 in \cite{CarVal18b} allows us to show that the Bach
tensor can be expressed in the form
\begin{equation}
B_{\mu\nu} = B'_{\mu\nu} +N_{\mu\nu},
\label{BachSplit}
\end{equation}
where $B'_{\mu\nu}$ is an expression homogeneous on $Q$, $Q_\mu$, $Q_{\mu\nu}$
and its derivatives up to fourth order and which is identical to the one found
in \cite{CarVal18b}. Here, the contributions from $T_{\mu\nu}$
have been grouped in the symmetric tensor
\[
N_{\mu\nu} \equiv - \tfrac12\Xi^3 d_{\mu\lambda\nu\rho}T^{\lambda\rho} + 2
T_{(\mu|\lambda|\nu)}\nabla^\lambda\Xi + \Xi \nabla_\lambda
T_\mu{}^\lambda{}_\nu.
\]

Next, we introduce the auxiliary field
\begin{equation}
M_\mu \equiv \square Q_\mu. \label{WEGaugeM}
\end{equation}
Taking the divergence of equation \eqref{BachSplit}, and after some direct manipulations,
equations \eqref{QIdentities1}-\eqref{QIdentities2} and
\eqref{DivergenceBach} imply that
\[
\square M_\mu = H_\mu (\nabla\bmM, \nabla\bmQ, \nabla Q, \bmQ, Q) + 4\nabla^\nu N_{\nu\mu},
\]
where $\bmQ$ stands for $Q_\mu$ and, for simplicity, $H_\nu$ represents a
homogeneous function of its arguments.  On the other hand, we can rewrite the
term $\nabla^\nu N_{\nu\mu}$ in a suitable way by using the symmetries of
$T_{abc}$ along with the help of equations \eqref{QTensor},
\eqref{ReducedRicci} and the geometric zero-quantities. A direct calculation
shows that
\[
\nabla^\nu N_{\nu\mu} = -T_{\mu\nu\lambda}\Upsilon^{\nu\lambda}
- \tfrac12 \Xi^3 T_{\nu\lambda}\Lambda^\nu{}_\mu{}^\lambda,
\]
so the wave equation for $M_\mu$ takes the schematic form
\begin{equation}
\square M_\mu = H_\mu (\nabla\bmM, \nabla\bmQ, \nabla Q, \bmQ, Q, \bmUpsilon, \bmLambda).
\label{WEGaugeQa}
\end{equation}
Lastly, a wave equation for $Q$ is required to close the system. This can be
obtained by direct application of the $\square$ operator on the definition of
$Q$ along with the aid of equations \eqref{QScalar}, \eqref{QIdentities2} and
\eqref{ReducedRicci}, resulting in
\begin{equation}
\square Q = -2L_{\mu\nu}\nabla^\mu Q^\nu - \nabla^\mu Q^\nu \nabla_{(\mu}Q_{\nu)}
- \tfrac12 Q^\mu\nabla_\mu Q - \tfrac12 Q^\mu\nabla_\mu \mathcal{R}(x) - \tfrac16\mathcal{R}(x)Q
+ \nabla^\mu M_\mu.\label{WEGaugeQ}
\end{equation}

\begin{remark}
{\em The gauge subsidiary evolution system, equations
\eqref{WEGaugeM}-\eqref{WEGaugeQ}, is homogeneous in $M_\mu, \ Q_\mu, \ Q, \
\Upsilon_{\mu\nu}$, $\Lambda_{\mu\nu\lambda}$ and their first derivatives. }
\end{remark}

The previous discussion leads to the following result:

\begin{lemma}
\label{Lemma:InitialGaugeConditions}
Assume that the hypotheses of Lemma \ref{Lemma:SubsidiarySystem} hold.
Moreover, let the quantities $M_\mu$, $Q_\mu$, $Q$,
$\Upsilon_{\mu\nu}$ and $\Lambda_{\mu\nu\lambda}$
along with their first covariant derivatives vanish on a fiduciary hypersurface
$\mathcal{S_\star}$. Then the unique solution to the system
\eqref{WEGaugeM}-\eqref{WEGaugeQ} on a small enough slab of
$\mathcal{S}_\star$ corresponds to $Q = 0$, $Q_\mu= 0$ and $M_\mu= 0$, which in
turn implies that $Q_{\mu\nu}=0$.
\end{lemma}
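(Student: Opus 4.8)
The plan is to apply the uniqueness theory for systems of quasilinear wave equations to the subsidiary system derived above. First I would observe that equations \eqref{WEGaugeM}, \eqref{WEGaugeQa} and \eqref{WEGaugeQ} together constitute a closed system of wave equations for the fields $M_\mu$, $Q_\mu$ and $Q$, with right-hand sides that are homogeneous in $M_\mu$, $Q_\mu$, $Q$, $\Upsilon_{\mu\nu}$, $\Lambda_{\mu\nu\lambda}$ and their first derivatives. By the hypothesis that the assumptions of Lemma \ref{Lemma:SubsidiarySystem} hold, Proposition \ref{Proposition:PropagationOfTheConstraints} applies, so on the development of $\mathcal{S}_\star$ all the geometric zero-quantities vanish identically; in particular $\Upsilon_{\mu\nu} = 0$ and $\Lambda_{\mu\nu\lambda} = 0$ on a neighbourhood of $\mathcal{S}_\star$. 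Substituting these into \eqref{WEGaugeM}--\eqref{WEGaugeQ} leaves a genuinely homogeneous linear (in the unknowns and their first derivatives) system of wave equations for the triple $(M_\mu, Q_\mu, Q)$.

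Next I would invoke the standard existence and uniqueness result for second-order hyperbolic systems of the type recorded in Remark \ref{Remark:WaveEquationsExplicit} --- see \cite{HugKatMar77} --- noting that $g_{\mu\nu}$ is Lorentzian on a sufficiently small slab since it solves \eqref{ReducedRicciGauge} with data close to a given background. Because the system is homogeneous and the initial data for $M_\mu$, $Q_\mu$, $Q$ and their first covariant derivatives vanish on $\mathcal{S}_\star$ by hypothesis, the zero solution $M_\mu = 0$, $Q_\mu = 0$, $Q = 0$ is a solution with these data; by uniqueness it is \emph{the} solution on a small enough slab of $\mathcal{S}_\star$. Finally, with $Q_\mu = 0$ established, the algebraic identity \eqref{QIdentities1}, $Q_{\mu\nu} = \nabla_{(\mu}Q_{\nu)}$, immediately gives $Q_{\mu\nu} = 0$, and similarly \eqref{QIdentities2} gives $Q = Q_\mu{}^\mu = 0$ consistently.

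The one point requiring care --- and the main obstacle --- is the coupling of the gauge subsidiary system to the geometric zero-quantities $\Upsilon_{\mu\nu}$ and $\Lambda_{\mu\nu\lambda}$ through the source terms in \eqref{WEGaugeQa}. One must be sure that these are not themselves fed back by the gauge variables in a way that would require solving a larger coupled system; the resolution is that the geometric subsidiary system of Lemma \ref{Lemma:SubsidiarySystem} closes on its own (its derivation used only the geometric wave equations \eqref{EquationMetric} and \eqref{CWE1}--\eqref{CWE4Alternative}, not the gauge conditions), so one may first conclude $\Upsilon_{\mu\nu} = \Lambda_{\mu\nu\lambda} = 0$ via Proposition \ref{Proposition:PropagationOfTheConstraints} and only then treat \eqref{WEGaugeM}--\eqref{WEGaugeQ} as a self-contained homogeneous system. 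A secondary subtlety is the appearance of up to fourth-order derivatives of $Q$, $Q_\mu$ in $B'_{\mu\nu}$: one should verify that, after introducing the auxiliary variable $M_\mu = \square Q_\mu$ and using \eqref{QIdentities1}--\eqref{QIdentities2} to trade derivatives of $Q$ and $Q_{\mu\nu}$ for derivatives of $Q_\mu$, all remaining terms are of the order compatible with a second-order system in $(M_\mu, Q_\mu, Q)$ --- this is exactly what the schematic forms \eqref{WEGaugeQa} and \eqref{WEGaugeQ} assert, and it mirrors the bookkeeping already carried out in \cite{CarVal18b}.
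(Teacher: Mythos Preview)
Your overall strategy---homogeneity of the subsidiary system plus uniqueness for quasilinear wave equations---is exactly the paper's (implicit) argument; the paper does not give a separate proof but states the Lemma as an immediate consequence of the preceding derivation and the Remark noting homogeneity.

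There is, however, a genuine gap in your invocation of Proposition~\ref{Proposition:PropagationOfTheConstraints}. That Proposition requires the initial vanishing on $\mathcal{S}_\star$ of \emph{all} geometric zero-quantities $\Upsilon_{ab}$, $\Theta_a$, $\Delta_{abc}$, $\Lambda_{abc}$, $Z$, $P^a{}_{bcd}$ together with their first derivatives; the hypotheses of the present Lemma supply only the vanishing of $\Upsilon_{\mu\nu}$ and $\Lambda_{\mu\nu\lambda}$. The ``hypotheses of Lemma~\ref{Lemma:SubsidiarySystem}'' that you cite are merely that the conformal fields satisfy the evolution equations \eqref{EquationMetric}, \eqref{CWE1}--\eqref{CWE4Alternative}; they say nothing about initial vanishing of zero-quantities. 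Since the wave equations \eqref{WEZQ1} and \eqref{WEZQ4} for $\Upsilon_{ab}$ and $\Lambda_{abc}$ couple to $\Theta_a$, $\Delta_{abc}$ and $P_{abcd}$, you cannot conclude $\Upsilon=\Lambda=0$ on the slab from the stated hypotheses alone, and hence cannot reduce \eqref{WEGaugeM}--\eqref{WEGaugeQ} to a homogeneous system in $(M_\mu,Q_\mu,Q)$ the way you propose.

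The fix is to regard \eqref{WEGaugeM}--\eqref{WEGaugeQ} as coupled to the \emph{full} geometric subsidiary system of Lemma~\ref{Lemma:SubsidiarySystem} and apply uniqueness to the combined homogeneous system; this requires the initial vanishing of all geometric zero-quantities, not just $\Upsilon$ and $\Lambda$. In the applications (Section~\ref{Section:Applications}) this is always arranged, and the paper evidently intends the Lemma to be read in tandem with Proposition~\ref{Proposition:PropagationOfTheConstraints} under its full hypotheses---but as a standalone proof you should make this explicit rather than claim the Proposition applies as written.
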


\begin{remark}
{\em As discussed in Section 5.2.3 of \cite{CarVal18b} the initial
  gauge conditions in Lemma \ref{Lemma:InitialGaugeConditions} can be
  rephrased in terms of conditions on the lapse and shift (and their
  derivatives) associated to the coordinate gauge source function
  $\mathcal{F}^\mu(x)$. It must be pointed out that these initial
  gauge conditions are not equivalent, in the vacuum case, to those
  considered in \cite{Pae15} which only require the vanishing of the
  gauge zero-quantities and their first derivatives on the initial
  hypersurface. In the present case, the conditions require the
  vanishing of third order derivatives via the definition of $M_\mu$.}
\end{remark}

\section{Evolution equations for the matter fields}
\label{Section:MatterFields}

Having settled the analysis of the \emph{geometric} part of the metric tracefree
conformal Einstein equations, we now proceed to investigate the
evolution of the subsidiary equations associated to a number of matter models
of interest: the conformally invariant scalar field, the Maxwell field and the
Yang-Mills field.

\subsection{The conformally invariant scalar field}
\label{Subsection:ScalarField}

It is well-known that the equation 
\[
\tilde{\nabla}^{a}\tilde{\nabla}_{a}\tilde{\phi}=0,
\]
where $\tilde\phi$ is a scalar field, is not conformally invariant. This
deficiency can be healed by the addition of a term involving the coupling with
the Ricci scalar, namely
\begin{equation}
\tilde{\nabla}^{a}\tilde{\nabla}_{a}\tilde{\phi}-\tfrac{1}{6}\tilde
R\tilde \phi=0.
\label{WaveEqnScalarFieldPhysical}
\end{equation}
Defining the unphysical scalar field
\[
\phi\equiv \Xi^{-1}\tilde\phi,
\]
a direct computation shows that equation \eqref{WaveEqnScalarFieldPhysical} implies
\begin{equation}
\square\phi-\tfrac{1}{6}R\phi=0.
\label{ConformallyInvariantWaveEquation}
\end{equation}
In what follows, for convenience, equation \eqref{WaveEqnScalarFieldPhysical}
will known as the \emph{conformally invariant wave equation} --- or the
\emph{conformally coupled wave equation}. The energy-momentum tensor associated
to this field takes the form
\begin{equation}
T_{ab}=\nabla_{a}\phi\nabla_{b}\phi-\tfrac{1}{4}g_{ab}\nabla_{c}\phi\nabla^{c}\phi-
\tfrac{1}{2}\nabla_{a}\nabla_{b}\phi+\tfrac{1}{2}\phi^{2}L_{ab},
\label{EMTensorScalarField}
\end{equation}
so that $\nabla^a T_{ab}=0$ holds if equation
\eqref{ConformallyInvariantWaveEquation} is satisfied. It can be readily
verified that $T_{ab}$, as given by the expression above, is also tracefree.

\begin{remark}
{\em The second derivatives of $\phi$ in equation \eqref{EMTensorScalarField}
will lead to the appearance of second and third order derivatives of the matter field in
the expression of the rescaled Cotton tensor --- see equation
\eqref{RescaledCottonTensor} --- which may affect the hyperbolicity of the
system \eqref{ReducedWaveCFE1}-\eqref{ReducedWaveCFE5}. Moreover, $T_{ab}$ is
also coupled to the geometric sector via the Schouten tensor. These
difficulties will be addressed in the sequel.}
\end{remark}

\begin{remark}
{\em The conformally invariant scalar field is related to the standard
  scalar field satisfying the wave equation through a transformation
  originally due to Bekenstein \cite{Bek74}. Thus, in principle, the
theory for the conformally invariant scalar field developed in this section
can be rephrased in terms of the standard scalar field.}
\end{remark}

\subsubsection{Auxiliary fields and the evolution equations}

We start the analysis by observing that the third order derivative terms in the
expression of the rescaled Cotton tensor for the conformally invariant scalar
field are of the form $\nabla_{[a} \nabla_{b]} \nabla_c \phi$. Using the
commutator of covariant derivatives, these terms can be transformed into first
order derivative terms according to the formula
\[
\nabla_{[a} \nabla_{b]} \nabla_c \phi = - \tfrac12 R_{abc}{}^d \nabla_d\phi.
\]
Thus, one is left with an expression for the Cotton tensor containing, at most,
second order derivatives. In order to eliminate these derivatives which,
potentially, could destroy the hyperbolic nature of the wave equations
\eqref{ReducedWaveCFE1}-\eqref{ReducedWaveCFE4}, one needs to promote the first
and second derivatives of $\phi$ as further (independent) unknowns.
Accordingly, we define
\begin{equation}
\phi_{a}\equiv\nabla_a\phi,\qquad \phi_{ab}\equiv\nabla_{a}\nabla_{b}\phi.
\label{ScalarFieldSubstitutions}
\end{equation}
Following the previous discussion, and exploiting equation \eqref{TraceCFE3},
one can write the rescaled Cotton tensor for the conformally invariant scalar
field as
\begin{eqnarray}
&& T_{abc} = \bigg(1 - \tfrac14\Xi^2\phi^2 \bigg)^{-1} \bigg(\tfrac32\Xi\phi L_{c[b}\phi_{a]}
+ \tfrac32\Xi \phi_{[b}\phi_{a]c} - \tfrac14\Xi\phi^2 d_{abcd}\nabla^d\Xi
-\tfrac14\Xi^2\phi d_{abcd}\phi^d \nonumber \\
&& \hspace{1.2cm} + \tfrac12\Xi\phi g_{c[b}L_{a]d}
+ \tfrac12 \Xi g_{c[a}\phi_{b]d}\phi^d + g_{c[b}T_{a]d}\nabla^d\Xi
+3 T_{c[b}\nabla_{a]}\Xi \bigg).
\label{RescaledCottonScalarFieldGood}
\end{eqnarray}

\medskip

We now proceed to construct suitable evolution equations for $\phi_a$ and $\phi_{ab}$
by means of a set of integrability conditions for these fields.
Firstly, the identity $\nabla_a\phi_{b} = \nabla_b\phi_{a}$ represents an
integrability condition for $\phi_a$. A wave equation then readily follows
after applying $\nabla^b$ and using equation
\eqref{ConformallyInvariantWaveEquation}:
\begin{equation}
\square\phi_a
= 2\phi^bL_{ab} + \tfrac13 R\phi_a + \tfrac16\phi\nabla_a R.
\label{WaveEqnDPhi}
\end{equation}
On the other hand, an integrability condition for $\phi_{ab}$ can be obtained
directly from its definition:
\[
2\nabla_{[c}\phi_{a]b} = \phi_d R_{cab}{}^d =
-\Xi\phi^d d_{acbd} -2\phi_{[c}L_{a]b} + 2\phi^d g_{b[c}L_{a]d}.
\]
Applying $\nabla^c$ to this relation and using equations
\eqref{TraceCFE3}, \eqref{TraceCFE4}, \eqref{ConformallyInvariantWaveEquation}
and \eqref{WaveEqnDPhi}, a straightforward calculation leads to:
\begin{align}
\square\phi_{ab} = & \tfrac{1}{2} \phi_{ab} R -  \tfrac{1}{3} R \phi L_{ab} - 2 \phi^{cd} L_{cd}
g_{ab} -  \tfrac{1}{6} \phi^{c} g_{ab} \nabla_{c}R + \tfrac{1}{6} \phi
\nabla_{(a}\nabla_{b)}R  - 2 \Xi \phi^{cd}d_{(a|c|b)d} \nonumber \\
& + 8 \phi_{(a}{}^{c}L_{b)c} + 2 \Xi \phi^{c}T_{(a|c|b)} + \tfrac{2}{3}
\phi_{(a}\nabla_{b)}R + 2 \phi^{c}\nabla_{(a}L_{b)c}
- 2 \phi^{c}d_{(acb)}{}^{d}\nabla_{d}\Xi.
\label{WaveEqnDDPhi}
\end{align}

\begin{remark}
{\em In equation \eqref{WaveEqnDDPhi} it is understood that the rescaled Cotton
tensor $T_{bca}$ is expressed in terms of the auxiliary fields $\phi_a$ and
$\phi_{ab}$ according to \eqref{RescaledCottonScalarFieldGood} so does not
contain second or higher derivatives of the fields.}
\end{remark}

\begin{remark}
\label{Remark:BoxOperatorScalarField}
{\em When coupling the wave equations \eqref{ConformallyInvariantWaveEquation},
\eqref{WaveEqnDPhi} and \eqref{WaveEqnDDPhi} to the system
\eqref{ReducedWaveCFE1}-\eqref{ReducedWaveCFE5} satisfied by the geometric
conformal fields, it is understood that the geometric wave operator $\square$
is replaced by its reduced counterpart $\blacksquare$ as discussed in
Section \ref{Section:ReducedWaveOperator}. }
\end{remark}

\subsubsection{Subsidiary equations}
To verify the consistency of our approach in dealing with the higher order
derivative terms in the rescaled Cotton tensor for the conformally invariant
scalar field we introduce the following subsidiary fields:
\begin{subequations}
\begin{eqnarray}
&& Q_{a}\equiv \phi_{a}-\nabla_{a}\phi, \label{DefinitionQVector}\\
&& Q_{ab} \equiv \phi_{ab}-\nabla_{a}\nabla_{b}\phi. \label{DefinitionQTensor}
\end{eqnarray}
\end{subequations}

\medskip
A wave equation for $Q_a$ can be obtained in a straightforward way: applying
$\square$ to definition \eqref{DefinitionQVector} and with the help of
relations \eqref{ConformallyInvariantWaveEquation} and \eqref{WaveEqnDPhi},
a short calculation yields
\begin{equation}
\square Q_{a} = \square \phi_a - \nabla_a\square\phi - R_{ab}\nabla^b\phi
=\tfrac{1}{3}RQ_{a}+2L_a{}^b Q_b \label{WaveEquationQVector}.
\end{equation}
Similarly, applying $\square$ to equation \eqref{DefinitionQTensor}, commuting
covariant derivatives and using the definitions of the geometric
zero-quantities one obtains
\begin{eqnarray}
&& \square Q_{ab}= \tfrac{1}{2} Q_{ab} R - 2 Q^{cd} L_{cd} g_{ab} -
\tfrac{1}{6} Q^{c} g_{ab} \nabla_{c}R + 2 Q^{c} \nabla_{c}L_{ab} 
- 2 \Xi Q^{cd}d_{acbd} + 8 Q_{(a}{}^{c}L_{b)c} \nonumber \\ 
&& \hspace{1.4cm}  - 2 \phi^{c}\Delta_{(a|c|b)} 
+ 4 \Xi Q^{c}T_{(a|c|b)} + 4Q^{c}\Delta_{(a|c|b)} + \tfrac{2}{3} Q_{(a}\nabla_{b)}R
- 4 Q^{c}d_{(a|c|b)}{}^{d}\nabla_{d}\Xi.
\label{WaveEquationQTensor}
\end{eqnarray}

\begin{remark}
{\em The system of wave equations \eqref{WaveEquationQVector} and
\eqref{WaveEquationQTensor} is homogeneous in $Q_a, \ Q_{ab}$ and
$\Delta_{abc}$ Thus, it follows from general uniqueness results for solutions
to wave equations that if these quantities and their derivatives vanish on an
initial hypersurface $\mathcal{S}_\star$, then necessarily $Q_a=0$ and
$Q_{ab}=0$ at least on a small enough slab around $\mathcal{S}_\star$. }
\end{remark}

\subsubsection{Summary}
The analysis of the conformally invariant scalar field can be summarised
in the following manner:

\begin{proposition}
The system of equations \eqref{ReducedWaveCFE1}-\eqref{ReducedWaveCFE5} with
rescaled Cotton tensor given by \eqref{RescaledCottonScalarFieldGood},
together with the conformally invariant wave equation
\eqref{ConformallyInvariantWaveEquation} and the auxiliary system
\eqref{WaveEqnDPhi}-\eqref{WaveEqnDDPhi} written in terms of the reduced wave
operator $\blacksquare$, constitute a \emph{proper} system of quasilinear wave
equations --- see Remark \ref{Remark:DefinitionProper}.
\end{proposition}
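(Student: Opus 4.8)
The plan is to verify directly that, once the substitutions \eqref{ScalarFieldSubstitutions} are made and the third derivatives of $\phi$ are removed via the commutator identity $\nabla_{[a}\nabla_{b]}\nabla_c\phi=-\tfrac12 R_{abc}{}^{d}\nabla_d\phi$ recorded above, every right-hand side of the coupled system is a smooth function of the enlarged collection of unknowns $(g_{\mu\nu},\partial g_{\mu\nu},\Xi,\partial\Xi,s,L_{\mu\nu},d_{\mu\nu\lambda\rho},\phi,\phi_\mu,\phi_{\mu\nu})$, of their first derivatives, and of the gauge source functions $\mathcal{R}(x)$, $\mathcal{F}^{\mu}(x)$ together with their derivatives, with no second derivatives of the unknowns surviving outside the principal parts. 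Since those principal parts are already the reduced operators $\blacksquare$ and $\mathscr{R}_{\mu\nu}[\bmg]$ --- which, in generalised wave coordinates, have symbol $g^{\sigma\tau}\xi_\sigma\xi_\tau$ and carry no leftover second derivatives of $g_{\mu\nu}$ --- this is exactly the assertion that the system is \emph{proper} in the sense of Remark \ref{Remark:DefinitionProper} and of the type displayed in Remark \ref{Remark:WaveEquationsExplicit}.

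The first and main step is to record that the matter source terms are algebraic in the unknowns. The rescaled Cotton tensor built from \eqref{RescaledCottonTensor} with $T_{ab}$ given by \eqref{EMTensorScalarField} contains, a priori, third derivatives of $\phi$ through $\nabla_{[a}\nabla_{b]}\nabla_c\phi$; after the commutator identity these become curvature contracted with $\nabla\phi$, and upon using the conformal field equation \eqref{TraceCFE6} to trade $R^{c}{}_{dab}$ for $\Xi d^{c}{}_{dab}$ plus Schouten terms (together with \eqref{TraceCFE3}), and promoting $\phi_a\equiv\nabla_a\phi$, $\phi_{ab}\equiv\nabla_a\nabla_b\phi$ to independent fields, one arrives at precisely expression \eqref{RescaledCottonScalarFieldGood}. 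This formula involves only $\Xi$, $\nabla\Xi$, $L_{ab}$, $d_{abcd}$, $\phi$, $\phi_a$, $\phi_{ab}$ and $T_{ab}$ --- and $T_{ab}$ itself, via \eqref{EMTensorScalarField} under the same substitution, is algebraic in $(\phi,\phi_a,\phi_{ab},L_{ab},g_{ab})$ --- multiplied by the scalar factor $(1-\tfrac14\Xi^{2}\phi^{2})^{-1}$, which is smooth on the open set $\{\,1-\tfrac14\Xi^{2}\phi^{2}\neq 0\,\}$. Hence $T_{abc}$, and likewise its Hodge dual $\,^{*}T_{abc}$, is a derivative-free smooth function of the unknowns on that set; these algebraic expressions are taken as the definition of the matter source terms entering \eqref{ReducedWaveCFE1}-\eqref{ReducedWaveCFE5}.

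It then remains to inspect the individual equations. In \eqref{ReducedWaveCFE1}, \eqref{ReducedWaveCFE2} and \eqref{ReducedWaveCFE5} only $T_{ab}$ (with no derivatives of it) appears, so these are immediately of the required form; in \eqref{ReducedWaveCFE3} the sole potentially dangerous term is $\nabla_{\lambda}T_{\mu}{}^{\lambda}{}_{\nu}$, and in \eqref{ReducedWaveCFE4} the terms $\nabla_{[\mu}T_{|\lambda\rho|\nu]}$ and $\epsilon_{\mu\nu\sigma\tau}\nabla^{\tau}\,^{*}T_{\lambda\rho}{}^{\sigma}$, but since $T_{abc}$ is now algebraic each of these contributes at most first derivatives of the unknowns --- in particular first derivatives of $\phi_{ab}$, never third derivatives of $\phi$. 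For the matter sector, with $\square$ replaced by $\blacksquare$ as in Remark \ref{Remark:BoxOperatorScalarField} and the Ricci scalar replaced by the conformal gauge source $\mathcal{R}(x)$ as in the geometric equations: \eqref{ConformallyInvariantWaveEquation} is a scalar wave equation with polynomial source; the right-hand side of \eqref{WaveEqnDPhi} involves only $\phi_a$, $L_{ab}$ and derivatives of $\mathcal{R}(x)$; and the right-hand side of \eqref{WaveEqnDDPhi} involves, besides algebraic terms, only $\nabla_{(a}L_{b)c}$ (a first derivative of an unknown), the algebraic $T_{(a|c|b)}$, and derivatives of $\mathcal{R}(x)$. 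Collecting everything, the enlarged collection satisfies a system of the schematic shape $g^{\sigma\tau}\partial_\sigma\partial_\tau u=\mathcal{G}(u,\partial u)$ with $\mathcal{G}$ polynomial in its arguments apart from the single rational factor $(1-\tfrac14\Xi^{2}\phi^{2})^{-1}$, hence smooth where that factor is defined, and with common principal symbol; whenever $g_{\mu\nu}$ is Lorentzian this is the claimed proper quasilinear system of wave equations. The step I expect to be the real work --- and which is best delegated to the computer-algebra verification already used elsewhere in the paper --- is the first one: checking that after the commutator identity and the promotion of $\phi_a$, $\phi_{ab}$, no derivative of $\phi$ of order higher than that already carried by $\phi_{ab}$ survives anywhere, so that the single derivative of $T_{abc}$ entering \eqref{ReducedWaveCFE3}-\eqref{ReducedWaveCFE4} and the derivative of $L_{ab}$ in \eqref{WaveEqnDDPhi} do not reintroduce second derivatives of the new unknowns into a right-hand side; this is precisely the content of \eqref{RescaledCottonScalarFieldGood} being genuinely algebraic.
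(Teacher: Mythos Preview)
Your proposal is correct and matches the paper's approach. The paper itself offers no separate proof of this proposition --- it is a summary statement of the analysis in the preceding subsections, namely the derivation of the algebraic expression \eqref{RescaledCottonScalarFieldGood} for $T_{abc}$ and of the auxiliary wave equations \eqref{WaveEqnDPhi}--\eqref{WaveEqnDDPhi}, together with Remark~\ref{Remark:BoxOperatorScalarField}. Your write-up spells out explicitly what the paper leaves implicit: that once the third derivatives of $\phi$ are removed by the commutator identity and the fields $\phi_a,\phi_{ab}$ are promoted to unknowns, the rescaled Cotton tensor becomes algebraic, so that the single derivative of $T_{abc}$ entering \eqref{ReducedWaveCFE3}--\eqref{ReducedWaveCFE4} contributes only first-order terms; and that the passage from $\square$ to $\blacksquare$ eliminates the residual second derivatives of $g_{\mu\nu}$ from the principal part. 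You also make explicit the domain restriction $1-\tfrac14\Xi^{2}\phi^{2}\neq 0$ implied by the rational prefactor in \eqref{RescaledCottonScalarFieldGood}, which the paper does not comment on.
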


\subsection{The Maxwell field}
\label{Subsection:MaxwellField}

The next example under consideration is the electromagnetic field. The physical
Maxwell equations expressed in terms of the antisymmetric Faraday tensor
$\tilde{F}_{ab}$ are given by
\begin{eqnarray*}
&& \tilde{\nabla}^{a}\tilde{F}_{ab}=0,\\
&& \tilde{\nabla}_{[a}\tilde{F}_{bc]}=0.
\end{eqnarray*}
It is well-known that the Maxwell equations are conformally invariant by
defining the \emph{unphysical Faraday tensor} $F_{ab}$ as
\[
F_{ab} \equiv \tilde{F}_{ab}.
\]
From here it follows that the physical Maxwell equations imply
\begin{subequations}
\begin{eqnarray}
&& \nabla^{a}F_{ab}=0, \label{UnphysicalMaxwell1} \\
&& \nabla_{[a}F_{bc]}=0, \label{UnphysicalMaxwell2}
\end{eqnarray}
\end{subequations}
with the associated \emph{unphysical Maxwell energy-momentum tensor} given by
\begin{equation}
T_{ab}=F_{ac}F_{b}{}^{c}-\frac{1}{4}g_{ab}F_{cd}F^{cd}.
\label{UmphysicalEMTensorMaxwell}
\end{equation}
Alternatively, defining the Hodge dual $F^*_{ab}$ of the Faraday tensor as
\begin{equation}
F^*_{ab} \equiv -\frac12 \epsilon_{ab}{}^{cd}F_{cd},
\end{equation}
the second unphysical Maxwell equation \eqref{UnphysicalMaxwell2} can be
written as:
\begin{equation}
\nabla^a F^*_{ab}=0 \label{UnphysicalMaxwell2Alt}.
\end{equation}

\subsubsection{Auxiliary field and the evolution equations}

A geometric wave equation for the Faraday tensor can be obtained from
differentiation of the Maxwell equation \eqref{UnphysicalMaxwell2}, which
represents a natural integrability condition for this field. Commuting
covariant derivatives and applying equation \eqref{UnphysicalMaxwell1}, a
calculation yields
\begin{equation}
\square F_{bc}=\tfrac{1}{3}F_{bc}R-2\Xi F^{ad}d_{bacd}.
\label{WaveEquationFaraday}
\end{equation}

\medskip
From equation \eqref{RescaledCottonTensor} it follows that the rescaled
Cotton tensor contains first derivatives of $F_{ab}$. This puts at risk the
hyperbolicity of the system \eqref{ReducedWaveCFE1}-\eqref{ReducedWaveCFE4}.
In order to deal with this problem we introduce the auxiliary variable
\begin{equation}
F_{abc} \equiv \nabla_a F_{bc},
\label{DefinitionFabc}
\end{equation}
satisfying $F_{abc} = F_{a[bc]}$.  By virtue of equation
\eqref{UnphysicalMaxwell2} it also follows that $F_{[abc]}=0$.  In terms of
this quantity, it can be readily checked that the rescaled Cotton tensor for the
Maxwell field takes the form
\begin{align}
T_{abc} = & \ \Xi F_{[b}{}^d F_{a]cd} - \tfrac12 \Xi F_c{}^d F_{dab} 
+ \tfrac12 \Xi g_{c[a}F^{de}F_{b]de} - 3 F_{cd} F_{[a}{}^d \nabla_{b]}\Xi
+ F_{de}F^{de}g_{c[a}\nabla_{b]}\Xi \nonumber \\
& - g_{c[a}F_{b]}{}^e F_{de}\nabla^d\Xi.
\label{RescaledCottonMaxwell}
\end{align}
From definition \eqref{DefinitionFabc} it follows that $F_{abc}$ possesses two
independent divergences: $\nabla^a F_{abc}$ is simply the right-hand side of
wave equation \eqref{WaveEquationFaraday} whilst the other is given by
\begin{equation}
\nabla_c F_{ab}{}^c = \Xi F^{cd}d_{acbd} - \frac16 RF_{ab} + 2F_{[a}{}^c L_{b]c}, \label{DivergenceFabc2}
\end{equation}
as a direct calculation confirms.  In order to obtain an integrability
condition for $F_{abc}$, consider the expression $3\nabla_{[d}F_{|a|bc]}$.
Commuting covariant derivatives and using the first Bianchi identity for the
Weyl tensor, a straightforward calculation results in:
\begin{equation}
3\nabla_{[d}F_{|a|bc]} = -3 \Xi F_{[d}{}^e d_{|ae|bc]} + 6 F_{[db}L_{c]a}
+ 6 g_{a[d}F_b{}^e L_{c]e} \label{ICFabc}.
\end{equation}
A geometric wave equation can be obtained by applying $\nabla^d$ to the last expression and
commuting derivatives. Using equations \eqref{TraceCFE3}, \eqref{TraceCFE4},
\eqref{DefinitionCottonDual}, \eqref{TraceCFE4Alt},
\eqref{DivergenceFabc2} as well as the symmetries of $d_a{}_{bcd}$ and
$T_{abc}$ to simplify it, a long but direct calculation yields
\begin{align}
\square F_{abc} = & -2\Xi F_{a}{}^{d} T_{bcd} + 4\Xi F_{[b}{}^{d} T_{|ad|c]} 
- 2 \Xi F_{a}{}^{de} d_{bdce} - 4 \Xi F^{d}{}_{[b}{}^{e} d_{c]ead} 
+ \tfrac{1}{2} F_{abc} R + 4 F^{d}{}_{bc} L_{ad} \nonumber \\
&  - 4 F^{d}{}_{a[b} L_{c]d}
- 4 F^{d}{}_{[b}{}^{e} g_{c]a} L_{de} + \tfrac{1}{3} F_{bc} \nabla_{a}R 
- 2 F^{de} d_{ade[b}\nabla_{c]}\Xi - 4 \Xi F^{de} \nabla_{[b}d_{c]ead} \nonumber \\
&  - \tfrac{1}{3} F_{a[b} \nabla_{c]}R 
- 2 F_{[b}{}^{e} d_{c]ead} \nabla^{d}\Xi -  F_{d}{}^{e} d_{aebc} \nabla^{d}\Xi 
- 4 F_{[b}{}^{e} d_{c]dae} \nabla^{d}\Xi - F_{a}{}^{e} d_{bcde} \nabla^{d}\Xi \nonumber \\
& + 2 F^{ef} g_{a[b} d_{c]edf}\nabla^{d}\Xi + \tfrac{1}{3} g_{a[b} F_{c]d} \nabla^{d}R.
\end{align}
This equation can be further simplified via a pair of observations. Firstly,
by multiplying equation \eqref{PaetzIdentityInter} by $F^{dg}$ the following
auxiliary identity is found:
\begin{equation}
2 F_{[a}{}^{e} d_{b]ecd} \nabla^{d}\Xi - 2 F_{[c}{}^{e} d_{d]eab} \nabla^{d}\Xi 
+ 2 F^{de} d_{ced[a} \nabla_{b]}\Xi - 2 F^{eg} g_{c[a} d_{b]edg} \nabla^{d}\Xi =0.
\label{Aux1}
\end{equation}
Secondly, from equation \eqref{TraceCFE4Alt} we have the following relations:
\begin{eqnarray}
&& 4 \Xi F^{de} \nabla_{[b}d_{c]ead} = -2\Xi\epsilon_{bcef}F^{de} \,^*T_{ad}{}^f
+ 2\Xi F^{de}\nabla_e d_{adbc}, \nonumber \\
&& \Xi F^{de} \nabla_{e}d_{adbc} = 
 - \tfrac{1}{2} \Xi \epsilon_{adef} F^{de} \,^*T_{bc}{}^{f} -  \tfrac{1}{2}
  \Xi F^{de} \nabla_{a}d_{bcde}. \nonumber
\end{eqnarray}
Combining them we readily obtain the identity
\begin{equation}
4 \Xi F^{de} \nabla_{[b}d_{c]ead} =  4 \Xi F_{[b}{}^{d} T_{|a|c]d} - 2 \Xi
F_{a}{}^{d} T_{bcd} + \Xi F^{de} \nabla_{a}d_{bcde}. \label{Aux2}
\end{equation}
Making use of \eqref{Aux1} and \eqref{Aux2}, the wave equation for $F_{abc}$
takes a simpler form:
\begin{align}
\square F_{abc} = & \ 4\Xi F_{[b}{}^d T_{c]da} 
- 2 \Xi F_{a}{}^{de} d_{bdce} - 4 F^d{}_{[b}{}^e d_{c]ead}
+ \tfrac{1}{2} F_{abc} R + 4 F^{d}{}_{bc} L_{ad} - 4 F^{d}{}_{a[b} L_{c]d} \nonumber \\
& - 4 F^{d}{}_{[b}{}^{e} g_{c]a} L_{de} + \tfrac{1}{3} F_{bc} \nabla_{a}R 
- \tfrac{1}{3} F_{a[b} \nabla_{c]}R + \tfrac{1}{3} g_{a[b} F_{c]d} \nabla^{d}R
- 4 F^{de} d_{ade[b} \nabla_{c]}\Xi \nonumber \\
& - 4 F_{[b}{}^{e} d_{c]dae} \nabla^{d}\Xi
-2 F_a{}^e d_{bcde}\nabla^d\Xi - \Xi F^{de}\nabla_a d_{bcde}. \label{WaveEquationDerFaraday}
\end{align}
As remarked in the case of the conformally invariant scalar field, the
geometric operator $\square$ is to be replaced by $\blacksquare$ when equations
\eqref{WaveEquationFaraday} and \eqref{WaveEquationDerFaraday} are coupled to
the system \eqref{ReducedWaveCFE1}-\eqref{ReducedWaveCFE5}.

\subsubsection{Subsidiary equations}
In order to complete the discussion of the Maxwell field it is necessary to
construct suitable evolution equations for the zero-quantities
\begin{subequations}
\begin{eqnarray}
&& M_{b} \equiv \nabla^{a}F_{ab}, \label{DefinitionZQMaxwell1}\\
&& M_{abc}\equiv \nabla_{[a}F_{bc]}, \label{DefinitionZQMaxwell2}\\
&& Q_{abc} \equiv F_{abc} - \nabla_a F_{bc}. \label{DefinitionZQMaxwellAuxiliary}
\end{eqnarray}
\end{subequations}
Here, $M_{abc}$ possesses the symmetries
\begin{subequations}
\begin{eqnarray}
M_{abc} = M_{a[bc]} = M_{[ab]c} = M_{[abc]}.
\end{eqnarray}
\end{subequations}
Also, one can verify the following identities:
\begin{subequations}
\begin{eqnarray}
&& \nabla^a M_a = 0, \label{DivergenceMa}\\
&& \nabla^c M_{abc} = -\tfrac23 \nabla_{[a}M_{b]}. \label{DivergenceMabc}
\end{eqnarray}
\end{subequations}

\begin{remark}
{\em Following the spirit of the discussion in the previous section, the
zero-quantities $M_a$ and $M_{abc}$ encode Maxwell equations
\eqref{UnphysicalMaxwell1} and \eqref{UnphysicalMaxwell2}, respectively, while
$Q_{abc}$ does so for the auxiliary field $F_{abc}$.}
\end{remark}

\medskip

\noindent
\textbf{Equation for $M_a$.} Observe that equation \eqref{DivergenceMabc} works
as an integrability condition for $M_a$. Applying $\nabla^b$, using
\eqref{DivergenceMa} and exploiting the various symmetries of $M_{abc}$, one
obtains 
\begin{equation}
\square M_{a} = \tfrac{1}{6} M_{a}R + 2 M^{b} L_{ab}.
\label{WEZQMaxwell1}
\end{equation}

\medskip
\noindent
\textbf{Equation for $M_{abc}$.} In order to avoid lengthy expressions it is
simpler to consider the Hodge dual of $M_{abc}$ defined as
\begin{equation}
M^*_a \equiv \nabla^b F^*_{ba} = \frac12 \epsilon_a{}^{bcd} M_{bcd}.
\end{equation}
Here, the second equality is a consequence of equations \eqref{DefinitionFabc}
and \eqref{DefinitionZQMaxwell2}. From this definition it can be easily checked
that $M^*_a$ is divergencefree which, in turn, implies an integrability
condition. More explicitly:
\begin{equation}
\nabla^a M^*_a = 0 \iff \nabla_{[d}M_{abc]} = 0.
\label{ICMabc}
\end{equation}
Applying $\nabla^d$ to \eqref{ICMabc} and commuting derivatives, a straightforward
calculation leads to
\begin{eqnarray}
&& \square M_{abc} = \tfrac{1}{2} R M_{abc} - 6 \Xi
d_{[a}{}^{d}{}_{b}{}^{e}M_{c]de} - 6 L_{[a}{}^{d}M_{bc]d},
\label{WEZQMaxwell2}
\end{eqnarray}
where it has been used that $\nabla_{[a}\nabla_{|d|}M_{bc]}{}^d$ vanishes by
virtue of equation \eqref{DivergenceMabc}.

\medskip
\noindent
\textbf{Equation for $Q_{abc}$.} A wave equation for the field $Q_{abc}$ can
be obtained by direct application of the $\square$ operator. Employing
definitions \eqref{DefinitionZQMaxwell1} and \eqref{DefinitionZQMaxwellAuxiliary},
along with equations \eqref{ZQ3}, \eqref{ZQ4},
\eqref{WaveEquationFaraday} and \eqref{WaveEquationDerFaraday}, one obtains the
expression
\begin{align}
\square Q_{abc} = & \ 4 \Xi F_{[b}{}^{d} \Lambda_{|a|c]d} - 2 \Xi Q_{a}{}^{de} d_{bdce} 
- 2 \Xi Q^{d}{}_{[b}{}^{e} d_{c]ead} + \tfrac{1}{2} Q_{abc} R 
- 4 M_{[b} L_{c]a} + 4 Q^{d}{}_{bc} L_{ad} \nonumber \\
& - 4 Q^{d}{}_{a[b} L_{c]d} 
+ 6 L_{a}{}^{d} M_{bcd} - 4 Q^{d}{}_{[b}{}^{e} g_{c]a} L_{de}
+ 2 F^{de} d_{bdce} \nabla_{a}\Xi - 4 F^{de} d_{ade[b} \nabla_{c]}\Xi \nonumber \\
& - 6 F_{[a}{}^{e} d_{bc]de} \nabla^{d}\Xi.
\end{align}
In order to show that the terms not containing zero-quantities vanish,
observe that the first Bianchi identity implies that
\[
2 F^{de} d_{bdce} \nabla_{a}\Xi - 4 F^{de} d_{ade[b}\nabla_{c]}\Xi = 3 F^{de} d_{de[ab}\nabla_{c]}\Xi.
\]
On the other hand, multiplying definition $\eqref{Lambda_abcde}$ by
$F^{de}$, a short calculation yields the auxiliary identity
\begin{equation*}
3 F^{de} d_{de[ab}\nabla_{c]}\Xi - 6 F_{[a}{}^e d_{bc]de}\nabla^d\Xi = 0.
\end{equation*}
From the last two expressions it follows then that
\begin{align}
\square Q_{abc} = & \ 4 \Xi F_{[b}{}^{d} \Lambda_{|a|c]d} - 2 \Xi Q_{a}{}^{de} d_{bdce} 
- 2 \Xi Q^{d}{}_{[b}{}^{e} d_{c]ead} + \tfrac{1}{2} Q_{abc} R 
- 4 M_{[b} L_{c]a} + 4 Q^{d}{}_{bc} L_{ad} \nonumber \\
& - 4 Q^{d}{}_{a[b} L_{c]d} 
+ 6 L_{a}{}^{d} M_{bcd} - 4 Q^{d}{}_{[b}{}^{e} g_{c]a} L_{de}.
\label{WEZQMaxwellAuxiliary}
\end{align}

\begin{remark}
\em{ Geometric wave equations \eqref{WEZQMaxwell1}, \eqref{WEZQMaxwell2} and
\eqref{WEZQMaxwellAuxiliary} are crucially homogeneous in $M_a$, $M_{abc}$,
$Q_{abc}$ and $\Lambda_{abc}$.  Thus, if these quantities and their first
covariant derivatives vanish on an initial hypersurface $\mathcal{S}_\star$, it
can be guaranteed that there exists a unique solution on a small enough slab
of $\mathcal{S}_\star$, and it corresponds to $M_a = 0$, $M_{abc} = 0$ and
$Q_{abc} = 0$.}
\end{remark}

\subsubsection{Summary}

The previous discussion about the coupling of the Maxwell field to the
metric tracefree conformal Einstein field equations can be summarised as
follows:

\begin{proposition}
The system of wave equations \eqref{ReducedWaveCFE1}-\eqref{ReducedWaveCFE5}
with rescaled Cotton tensor given by \eqref{RescaledCottonMaxwell} together
with the wave equations \eqref{WaveEquationFaraday} and
\eqref{WaveEquationDerFaraday} written in terms of the wave operator
$\blacksquare$ is a proper quasilinear system of wave equations for the
Einstein-Maxwell system.
\end{proposition}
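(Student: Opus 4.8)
The plan is to assemble the ingredients already in place and then inspect the principal part of the resulting coupled system. By Lemma~\ref{Lemma:EvolutionSystem} together with the gauge analysis of Section~\ref{Section:GaugeConsiderations} --- in particular Remark~\ref{Remark:DefinitionProper} and Remark~\ref{Remark:WaveEquationsExplicit} --- the geometric equations \eqref{ReducedWaveCFE1}-\eqref{ReducedWaveCFE5}, with $\square$ replaced by the reduced operator $\blacksquare$ of Definition~\ref{Definition:DefinitionBlackBoxOperator}, already form a proper quasilinear system of wave equations for $\Xi$, $s$, $L_{\mu\nu}$, $d_{\mu\nu\lambda\rho}$ and $g_{\mu\nu}$, \emph{provided} the matter contributions to the right-hand sides --- the tensor $T_{\mu\nu}$ wherever it occurs in \eqref{ReducedWaveCFE2}-\eqref{ReducedWaveCFE4}, together with the rescaled Cotton tensor $T_{\mu\nu\lambda}$ and the differentiated terms $\nabla_{\lambda}T_{\mu}{}^{\lambda}{}_{\nu}$, $\nabla_{[\mu}T_{|\lambda\rho|\nu]}$ and $\epsilon_{\mu\nu\sigma\tau}\nabla^{\tau}\,{}^{*}T_{\lambda\rho}{}^{\sigma}$ in \eqref{ReducedWaveCFE3}-\eqref{ReducedWaveCFE4} --- are, after all admissible substitutions, polynomial in the unknowns, their first derivatives, and the gauge source functions together with their coordinate derivatives. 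So it suffices to check this for the Maxwell field and, in addition, to verify that the two supplementary equations \eqref{WaveEquationFaraday} and \eqref{WaveEquationDerFaraday} have the same structure.

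First I would observe that the unphysical Maxwell energy-momentum tensor \eqref{UmphysicalEMTensorMaxwell} is algebraic in $F_{ab}$ and $g_{ab}$, carrying no derivatives of the Faraday tensor; hence every occurrence of $T_{\mu\nu}$ in \eqref{ReducedWaveCFE2}-\eqref{ReducedWaveCFE4} is of zeroth order in the matter unknown. The delicate point is the rescaled Cotton tensor: in its defining form \eqref{RescaledCottonTensor} it contains $\nabla_{[a}T_{b]c}$, hence first derivatives of $F_{ab}$, and if left like this it would feed second derivatives of $F_{ab}$ into \eqref{ReducedWaveCFE3}-\eqref{ReducedWaveCFE4} through the differentiated Cotton terms, breaking hyperbolicity. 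The remedy --- already carried out above --- is to promote $F_{abc}\equiv\nabla_{a}F_{bc}$ from \eqref{DefinitionFabc} to an independent unknown and to use the Maxwell equations \eqref{UnphysicalMaxwell1}-\eqref{UnphysicalMaxwell2} to bring the rescaled Cotton tensor to the form \eqref{RescaledCottonMaxwell}, which is algebraic in $F_{ab}$, $F_{abc}$, $d_{abcd}$, $g_{ab}$ and $\nabla_{a}\Xi$. With $T_{abc}$ written this way, the differentiated Cotton terms in \eqref{ReducedWaveCFE3}-\eqref{ReducedWaveCFE4} carry at most one derivative of the unknowns, the only Hessian appearing --- that of $\Xi$, coming from the $\nabla_{a}\Xi$ in \eqref{RescaledCottonMaxwell} --- being removed algebraically via \eqref{TraceCFE1}. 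The geometric block then retains exactly the structure displayed in Remark~\ref{Remark:WaveEquationsExplicit}.

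Next I would turn to the supplementary equations. Equation \eqref{WaveEquationFaraday} is a wave equation for $F_{bc}$ whose right-hand side $\tfrac{1}{3}F_{bc}R-2\Xi F^{ad}d_{bacd}$ is algebraic in $F_{ab}$, $d_{abcd}$, $\Xi$ and the conformal gauge source $\mathcal{R}(x)$. The wave equation \eqref{WaveEquationDerFaraday} is obtained from the integrability condition \eqref{ICFabc} by applying $\nabla^{d}$, commuting derivatives, and simplifying with \eqref{DivergenceFabc2} and the auxiliary identities \eqref{Aux1} and \eqref{Aux2} (themselves consequences of \eqref{PaetzIdentityInter} and \eqref{TraceCFE4Alt}); its right-hand side contains only terms algebraic in $F_{ab}$, $F_{abc}$, $L_{ab}$, $d_{abcd}$, $\Xi$, $\nabla_{a}\Xi$ and $T_{abc}$ (the last algebraic via \eqref{RescaledCottonMaxwell}), coordinate gradients of $\mathcal{R}(x)$, and the single first-derivative term $\Xi F^{de}\nabla_{a}d_{bcde}$ in the unknown $d_{abcd}$. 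Replacing $\square$ by $\blacksquare$ in these two equations, as prescribed in Section~\ref{Section:ReducedWaveOperator}, and adjoining them to \eqref{ReducedWaveCFE1}-\eqref{ReducedWaveCFE5}, the principal part of each equation is the wave operator $g^{\mu\nu}\partial_{\mu}\partial_{\nu}$ (up to a non-zero scalar factor for the metric equation) acting on the corresponding unknown --- the reduced operator having purged, by Definition~\ref{Definition:DefinitionBlackBoxOperator} and Remark~\ref{Remark:DefinitionProper}, the second derivatives of the metric --- while every right-hand side is a polynomial in the unknowns $(g_{\mu\nu},\Xi,s,L_{\mu\nu},d_{\mu\nu\lambda\rho},F_{\mu\nu},F_{\mu\nu\lambda})$, their first derivatives, and the gauge functions $\mathcal{R}(x)$, $\mathcal{F}^{\mu}(x)$ with their coordinate derivatives. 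This is precisely the notion of a proper quasilinear system of wave equations from Remark~\ref{Remark:WaveEquationsExplicit}, which is what the statement asserts.

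The main obstacle, and the only step beyond bookkeeping, is the reduction leading to \eqref{RescaledCottonMaxwell} and \eqref{WaveEquationDerFaraday}: one must verify that, after substituting \eqref{UmphysicalEMTensorMaxwell} into \eqref{RescaledCottonTensor} and eliminating $\nabla_{a}F_{bc}$ in favour of $F_{abc}$ and the Maxwell constraints, no residual derivative of $F_{ab}$ survives, and that the identities \eqref{Aux1}-\eqref{Aux2} are sufficient to bring the right-hand side of the $F_{abc}$ wave equation down to one whose highest-order unknown-derivative is $\nabla d_{abcd}$. Once these computations --- the ones displayed above --- are granted, the Proposition follows from the inspection of the principal part just outlined.
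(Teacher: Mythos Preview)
Your argument is correct and mirrors the paper's own approach: the Proposition is stated there as a summary of the preceding Maxwell subsection, with no separate proof, and you have simply made explicit the bookkeeping that the paper leaves implicit --- checking that \eqref{UmphysicalEMTensorMaxwell} is algebraic in $F_{ab}$, that promoting $F_{abc}$ to an unknown and using \eqref{RescaledCottonMaxwell} keeps the Cotton contributions at first order, and that \eqref{WaveEquationFaraday}--\eqref{WaveEquationDerFaraday} have the required structure once $\square$ is replaced by $\blacksquare$.

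One small point deserves a caveat. When you invoke \eqref{TraceCFE1} to eliminate the Hessian $\nabla_a\nabla_b\Xi$ arising from the differentiated Cotton terms, you are substituting a \emph{constraint} (equivalently, adding a multiple of the zero-quantity $\Upsilon_{ab}$) into the evolution equations. This is a legitimate and standard device, and it does not affect the propagation-of-constraints argument since the subsidiary system of Lemma~\ref{Lemma:SubsidiarySystem} is homogeneous in the zero-quantities; but it is a step the paper itself does not make explicit. The paper's working definition of ``proper'' in Remark~\ref{Remark:DefinitionProper} concerns only the removal of second derivatives of the \emph{metric} via $\blacksquare$, and the schematic form in Remark~\ref{Remark:WaveEquationsExplicit} is written for generic tracefree matter before any such substitution. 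Your observation sharpens the statement by ensuring the right-hand sides depend at most on first derivatives of \emph{all} unknowns, which is what one ultimately wants for the Hughes--Kato--Marsden theory.
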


\subsection{The Yang-Mills field}
\label{Subsection:YangMillsField}

The Yang-Mills field is the last example of a tracefree matter model we study
in this paper. Due to its similarities with the Faraday field, some of the
calculations will result analogous to the ones performed in the previous
subsection. However, one of the distinctive features of the Yang-Mills field is
the fact that, in order to obtain a hyperbolic reduction of the equations, one
needs to introduce a set of gauge source functions fixing the divergence of the
gauge potential.  The consistency of this gauge choice will be analysed towards
the end of the section.

\subsubsection{Basic equations}

The \emph{Yang-Mills field} consists of a set of \emph{gauge potentials}
$\tilde{A}^\fraka{}_\mu$ and \emph{gauge fields} $\tilde{F}^\fraka{}_{\mu\nu}$
where the indices $\fraka,\;\frakb,\dots$ take values in a Lie algebra
$\mathfrak{g}$ of a Lie group $\mathfrak{G}$. The equations satisfied by the
fields $\tilde{A}^\fraka{}_\mu$ and $\tilde{F}^\fraka{}_{\mu\nu}$ are:
\begin{eqnarray*}
&& \tilde{\nabla}_a \tilde{A}^\fraka{}_b - \tilde{\nabla}_a \tilde{A}^\fraka{}_b +
C^\fraka{}_{\frakb\frakc} \tilde{A}^\frakb{}_a \tilde{A}^\frakc{}_b -\tilde{F}^\fraka{}_{ab} =0, \\
&& \tilde{\nabla}^a \tilde{F}^\fraka{}_{ab} + C^\fraka{}_{\frakb\frakc}
\tilde{A}^{\frakb a} \tilde{F}^\frakc{}_{ab}=0, \\
&& \tilde{\nabla}_{[a} \tilde{F}{}^\fraka{}_{bc]} + C^\fraka{}_{\frakb\frakc}
\tilde{A}^\frakb{}_{[a} \tilde{F}^\frakc{}_{bc]} =0.
\end{eqnarray*}
Here $C^\fraka{}_{\frakb\frakc}=C^\fraka{}_{[\frakb\frakc]}$ denote the
structure constants of the Lie algebra $\mathfrak{g}$ which satisfy the
\emph{Jacobi identity}
\begin{equation}
C^\fraka{}_{\frakd\frake} C^\frakb{}_{\fraka\frakc}
+C^\fraka{}_{\frake\frakc}C^\frakb{}_{\fraka\frakd}+C^\fraka{}_{\frakc\frakd}C^\frakb{}_{\fraka\frake}=0.
\label{JacobiIdentityStructureConstants}
\end{equation}
Also, the energy-momentum tensor associated to the Yang-Mills field is given by
\[
\tilde{T}_{ab} = \tfrac{1}{4} \delta_{\fraka\frakb}\tilde{F}^\fraka{}_{cd}
\tilde{F}^{\frakb cd} \tilde{g}_{ab}
-\delta_{\fraka\frakb}\tilde{F}^\fraka{}_{ac}\tilde{F}^\frakb{}_b{}^c.
\]

\medskip
\noindent
\textbf{Conformal invariance.} The Yang-Mills equations are well-known
to be conformally invariant. More precisely, defining the
\emph{unphysical fields}:
\[
F^\fraka{}_{ab} \equiv \tilde{F}^\fraka{}_{ab}, \qquad A^\fraka{}_{a}
\equiv \tilde{A}^\fraka{}_{a},
\]
a direct computation under the conformal rescaling \eqref{Definition:ConformalRescaling} renders
the \emph{unphysical Yang-Mills equations}
\begin{subequations}
\begin{eqnarray}
&& \nabla_a A^\fraka{}_b - \nabla_b A^\fraka{}_a +
C^\fraka{}_{\frakb\frakc} A^\frakb{}_a A^\frakc{}_b -F^\fraka{}_{ab} =0, \label{UnphysicalYM1}\\
&& \nabla^a F^\fraka{}_{ab} + C^\fraka{}_{\frakb\frakc}
A^{\frakb a} F^\frakc{}_{ab}=0, \label{UnphysicalYM2}\\
&& \nabla_{[a} F{}^\fraka{}_{bc]} + C^\fraka{}_{\frakb\frakc}
A^\frakb{}_{[a} F^\frakc{}_{bc]} =0. \label{UnphysicalYM3}
\end{eqnarray}
\end{subequations}
In addition, the unphysical energy-momentum tensor is
\begin{equation}
T_{ab} = \tfrac{1}{4} \delta_{\fraka\frakb}F^\fraka{}_{cd}
F^{\frakb cd} g_{ab}
-\delta_{\fraka\frakb}F^\fraka{}_{ac}F^\frakb{}_b{}^c. \label{EMTensorYM}
\end{equation}
Finally, it will result useful to introduce the dual of $F^\fraka{}_{ab}$
defined as
\begin{equation}
F^{*\fraka}{}_{ab} \equiv -\tfrac12\epsilon_{ab}{}^{cd}F^\fraka{}_{cd}.
\label{FaradayYMDual}
\end{equation}

\begin{remark}
\em{Due to the form of the energy-momentum tensor given in \eqref{EMTensorYM},
first derivatives of $F^\fraka{}_{ab}$  will appear in the rescaled Cotton tensor,
putting at risk the hyperbolicity of the system
\eqref{ReducedWaveCFE1}-\eqref{ReducedWaveCFE4}.  As in the case of the
Maxwell field, this will make necessary the introduction of an auxiliary quantity.}
\end{remark}

\subsubsection{Evolution equations for the Yang-Mills fields} Suitable wave
equations for the Yang-Mills fields can be obtained by a procedure analogous to
the one used for the Maxwell field. Accordingly, we introduce the auxiliary field
\begin{equation}
F^\fraka{}_{abc} \equiv \nabla_a F^\fraka{}_{bc} + C^\fraka{}_{\frakb\frakc}A^\frakb{}_a
F^\frakc{}_{bc}.
\label{DefinitionFabcYM}
\end{equation}
Moreover, the construction of a geometric wave equation for the Yang-Mills
gauge potentials requires the introduction of \emph{gauge source functions}
$f^\fraka(x)$ depending in a smooth way on the coordinates and fixing the value
of the divergence of the potential. More precisely, in the following we set
\begin{equation} 
\nabla^a A^\fraka{}_a \equiv f^\fraka(x). \label{Definition:YMGaugeSourceFunction}
\end{equation}

\medskip
\noindent
\textbf{Equation for the field strength.} The Yang-Mills Bianchi identity,
equation \eqref{UnphysicalYM3},  represents an integrability condition for the
field strength tensors $F^\fraka{}_{ab}$. Differentiating it and making use of
equations \eqref{JacobiIdentityStructureConstants} and
\eqref{UnphysicalYM1}-\eqref{UnphysicalYM3}, a straightforward calculation
results in
\begin{align}
\square F^{\fraka}{}_{ab} = & -2 \Xi F^{\fraka cd}
d_{acbd} + \tfrac{1}{3} F^{\fraka}{}_{ab} R +
2C^\fraka{}_{\frakb\frakc} F^\frakb{}_a{}^c F^\frakc{}_{bc} 
- 2C^\fraka{}_{\frakb\frakc} F^\frakc{}_{cab}A^{\frakb c}
- C^\fraka{}_{\frakb\frake} C^\frake{}_{\frakc\frakd}
 F^\frakd{}_{ab}A^{\frakb c} A^\frakc{}_c \nonumber \\
&  + C^\fraka{}_{\frakb\frakc} f^\frakb(x) F^\frakc{}_{ab}.
\label{CWEFaradayYM}
\end{align}

\medskip
\noindent
\textbf{Equation for the gauge potential.} Equation \eqref{UnphysicalYM1}
provides a natural integrability condition for the gauge potential field. After
applying $\nabla^b$, commuting derivatives and using equation
\eqref{UnphysicalYM2}, one arrives to:
\begin{equation}
\square A^{\fraka}{}_{a} = \tfrac{1}{6} A^{\fraka}{}_{a} R + 2 A^{\fraka b} L_{ab} 
+ C^{\fraka}{}_{\frakb \frakc} F^{\frakc}{}_{ab} A^{\frakb b}  
+ C^{\fraka}{}_{\frakb \frakc} f^\frakc(x) A^{\frakb}{}_{a}  
- C^{\fraka}{}_{\frakb \frakc} A^{\frakb b} \nabla_{b}A^{\frakc}{}_{a} + \nabla_{a}f^{\fraka}(x).
\label{CWEPotentialYM}
\end{equation}

\medskip
\noindent \textbf{Equation for the auxiliary field.} A suitable integrability
condition for the field $F^\fraka{}_{abc}$ can be obtained from its definition.
Using this and equation \eqref{UnphysicalYM3}, some manipulations yield
\[
3\nabla_{[d}F^\fraka{}_{|a|bc]} = -3 \Xi F^\fraka{}_{[b}{}^e d_{cd]ae} + 6 F^\fraka{}_{[bc}L_{d]a}
+ 6 g_{a[b}F^\fraka{}_c{}^e L_{d]e} - 3 C^\fraka{}_{\frakb\frakc}F^\frakc{}_{a[bc}A^\frakb{}_{d]} 
- 3 C^\fraka{}_{\frakb\frakc}\nabla_a A^\frakb{}_{[b} F^\frakc{}_{cd]}.
\]
Proceeding as in the case of the wave equation for $F^\fraka{}_{abc}$, as well
as using the Jacobi identity and definitions
\eqref{DefinitionZQYM1}-\eqref{DefinitionDerFaradayZQYM}, a lengthy calculation
results in
\begin{align}
\square F^\fraka{}_{abc} = & \ \tfrac{1}{2} F^{\mathfrak{a}}{}_{abc} R + 4 F^{\mathfrak{a} d}{}_{bc} L_{ad} + 2
F^{\mathfrak{b} d}{}_{bc} F^{\mathfrak{c}}{}_{ad} C^{\mathfrak{a}}{}_{\mathfrak{b}
\mathfrak{c}} -  F^{\mathfrak{c}}{}_{abc} f^{\mathfrak{b}}(x)
C^{\mathfrak{a}}{}_{\mathfrak{b} \mathfrak{c}}
 - F^{\mathfrak{d}}{}_{abc} A^{\mathfrak{b} d} A^{\mathfrak{c}}{}_{d} C^{\mathfrak{a}}{}_{\mathfrak{b} \frake}
C^{\frake}{}_{\mathfrak{c} \mathfrak{d}} \nonumber \\
& \ + \tfrac{1}{3} F^{\mathfrak{a}}{}_{bc}
\nabla_{a}R - 2 A^{\mathfrak{b} d} C^{\mathfrak{a}}{}_{\mathfrak{b} \mathfrak{c}}
\nabla_{d}F^{\mathfrak{c}}{}_{abc}
 - F^{\mathfrak{a}}{}_{d}{}^{e} d_{aebc}
\nabla^{d}\Xi -  F^{\mathfrak{a}}{}_{a}{}^{e} d_{bcde} \nabla^{d}\Xi + 2
\Xi F^{\mathfrak{a} de} \nabla_{e}d_{adbc} \nonumber \\
& \ - 4 \Xi F^{\mathfrak{a}d}{}_{[b}{}^{e}d_{|ad|c]e} 
- 2 \Xi F^{\mathfrak{a}}{}_{a}{}^{de}d_{[b|d|c]e}
- 4 F^{\mathfrak{a} d}{}_{a[b}L_{c]d} + 4 \Xi F^{\mathfrak{a}}{}_{[b}{}^{d}T_{c]da}
 + 4 \Xi F^{\mathfrak{a}}{}_{[b}{}^{d}T_{|ad|c]} \nonumber \\
& \ - \tfrac{1}{3}F^{\mathfrak{a}}{}_{a[b}\nabla_{c]}R 
+ 4 F^{\mathfrak{b}}{}_{a[b}{}^{d}F^{\mathfrak{c}}{}_{c]d}C^{\mathfrak{a}}{}_{\mathfrak{b}
\mathfrak{c}} - 4 F^{\mathfrak{a} d}{}_{[b}{}^{e}L_{|de}g_{a|c]}
 - 2 \Xi F^{\mathfrak{a} de}T_{[b|de}g_{a|c]} \nonumber \\
& \ - 4 F^{\mathfrak{a}}{}_{[b}{}^{d}d_{|ad|c]}{}^{e}\nabla_{e}\Xi - 2
F^{\mathfrak{a}}{}_{[b}{}^{d}d_{|a|}{}^{e}{}_{c]d}\nabla_{e}\Xi
 + 2 F^{\mathfrak{a} de}d_{ad[b|e|}\nabla_{c]}\Xi -  \tfrac{1}{3}
F^{\mathfrak{a}}{}_{[b}{}^{d}g_{|a|c]}\nabla_{d}R \nonumber \\
& \ - 2 F^{\mathfrak{a}de}g_{a[b}\nabla_{|d|}L_{c]e}.
\label{CWEDerFaradayYM}
\end{align}

In a similar manner to the two previous matter models, when equations
\eqref{CWEFaradayYM}, \eqref{CWEPotentialYM} and \eqref{CWEDerFaradayYM} are
coupled to the system of wave equations for the conformal fields, the $\square$
operator is to be replaced by its counterpart $\blacksquare$.

\subsubsection{Subsidiary equations}

The next step in the analysis of the Yang-Mills field is the introduction of
the corresponding subsidiary quantities and the consequent construction of
suitable geometric wave equations for them. For this purpose define the
following set of zero-quantities:
\begin{subequations}
\begin{eqnarray}
&& M^\fraka{}_a \equiv \nabla^b F^\fraka{}_{ba} + C^\fraka{}_{\frakb\frakc}
A^{\frakb b} F^\frakc{}_{ba} , \label{DefinitionZQYM1}\\
&& M^\fraka{}_{ab} \equiv \nabla_a A^\fraka{}_b - \nabla_b A^\fraka{}_a +
C^\fraka{}_{\frakb\frakc} A^\frakb{}_a A^\frakc{}_b -F^\fraka{}_{ab}, \label{DefinitionZQYMPotential}\\
&& M^\fraka{}_{abc} \equiv \nabla_{[a} F{}^\fraka{}_{bc]} + C^\fraka{}_{\frakb\frakc}
A^\frakb{}_{[a} F^\frakc{}_{bc]}, \label{DefinitionZQYM2} \\
&& Q^\fraka{}_{abc} \equiv F^\fraka{}_{abc} - \nabla_a F^\fraka{}_{bc}
-C^\fraka{}_{\frakb\frakc}A^\frakb{}_a F^\frakc{}_{bc}.
\label{DefinitionDerFaradayZQYM}
\end{eqnarray}
\end{subequations}
Notice that, unlike the Maxwell field analysis, an additional field
$M^\fraka{}_{ab}$ must be considered due to the introduction of the gauge
potential $A^\fraka{}_a$. Combining \eqref{DefinitionZQYM2} and
\eqref{DefinitionDerFaradayZQYM} an auxiliary relation is directly
obtained, namely
\begin{equation}
3 M^\fraka{}_{abc} + 3Q^\fraka{}_{[abc]} - 3F^\fraka{}_{[abc]} = 0.
\label{AdemIdentityYM}
\end{equation}
From these definitions, it follows that $M^\fraka{}_{abc}$ and $M^\fraka{}_{ab}$ possess the symmetries
\begin{equation}
M^\fraka{}_{abc} = M^\fraka{}_{a[bc]} = M^\fraka{}_{[ab]c} = M^\fraka{}_{[abc]}, 
\quad M^\fraka{}_{ab} = -M^\fraka{}_{ba}.
\end{equation}
Furthermore, direct calculations show that the Yang-Mills
zero-quantities satisfy the relations 
\begin{subequations}
\begin{eqnarray}
&& \nabla_a M^{\fraka a} = -C^\fraka{}_{\frakb\frakc} A^{\frakb a} M^\frakc{}_a
+\tfrac12 C^\fraka{}_{\frakb\frakc} F^{\frakb ab} M^\frakc{}_{ab}, 
\label{DivergenceMaYM}\\
&& \nabla^b M^\fraka{}_{ab} = M^\fraka{}_a, \label{DivergenceMabYM} \\
&& \nabla_a M^\fraka{}_{bc}{}^a = - \tfrac23\nabla_{[b}M^\fraka{}_{c]}
- \tfrac23 C^\fraka{}_{\frakb\frakc}A^\frakb{}_{[b}M^\frakc{}_{c]}
- C^\fraka{}_{\frakb\frakc}A^{\frakb a}M^\frakc{}_{abc}
- \tfrac23 C^\fraka{}_{\frakb\frakc}A^{\frakb a}Q^\frakc{}_{abc}
\nonumber \\
&& \hspace{1.9cm} - \tfrac23 C^\fraka{}_{\frakb\frakc} F^\frakb{}_{[b}{}^a M^\frakc{}_{c]a}.
\label{DivergenceMabcYM}
\end{eqnarray}
\end{subequations}

\medskip
\noindent
\textbf{Equation for $M^\fraka{}_{ab}$.} Consider the expression
$3\nabla_{[c}M^\fraka{}_{ab]}$.  Commuting covariant derivatives, substituting
expressions \eqref{DefinitionZQYM2}, \eqref{DefinitionDerFaradayZQYM} and
exploiting the Jacobi identity for the structure constants, the integrability
condition is obtained:
\begin{equation}
3\nabla_{[c}M^\fraka{}_{ab]} = - M^\fraka{}_{abc} - 3 C^\fraka{}_{\frakb\frakc}A^\frakb{}_{[a} M^\frakc{}_{bc]}.
\label{ICPotentialZQYM}
\end{equation}
Applying $\nabla^c$ to the last equation, a short calculation using equations
\eqref{DivergenceMaYM} and \eqref{DivergenceMabcYM} yields
\begin{align}
\square M^\fraka{}_{ab} = & \ 3 C^\fraka{}_{\frakb\frakc} A^\frakb M^\frakc{}_{abc} +
2 C^\fraka{}_{\frakb\frakc} A^{\frakb c} Q^\frakc{}_{cab} +\tfrac13 R M^\fraka{}_{ab}
-2 d_{acbd}M^{\fraka cd} - C^\fraka{}_{\frakb\frakc}f^\frakb M^\frakc{}_{ab} \nonumber \\
& - 2 C^\fraka{}_{\frakb\frakc} F_{[a}{}^c M^\frakc{}_{b]c}  + 2\nabla_c M^\fraka{}_{ab}{}^c
-C^\fraka{}_{\frakb\frakc} A^{\frakb c}\nabla_c M^\frakc{}_{ab}
-2 C^\fraka{}_{\frakb\frakc} M^\frakc{}_{c[a}\nabla^c A^\frakb{}_{b]}.
\label{WEZQYMPotential}
\end{align}

\medskip
\noindent
\textbf{Equation for $M^\fraka{}_a$.} Equation \eqref{DivergenceMabcYM}
constitutes an integrability condition for the field $M^\fraka{}_a$.  A
suitable wave equation can be obtained by first applying $\nabla^c$, commuting
derivatives and observing that $\nabla_c\nabla_a M^\fraka{}_b{}^{ac} =
\nabla_{[c}\nabla_{a]} M^\fraka{}_b{}^{ac}$.  Then, using definitions
\eqref{DefinitionZQYM1}-\eqref{DefinitionDerFaradayZQYM} along with
\eqref{DivergenceMaYM}, \eqref{DivergenceMabYM}, \eqref{ICPotentialZQYM}, the
Jacobi identity, and an appropriate substitution of \eqref{AdemIdentityYM}, a
long but straightforward computation results in:
\begin{align}
\square M^\fraka{}_b = & \ 2 L_{ba} M^{\mathfrak{a} a} + \tfrac{1}{6} R
M^{\mathfrak{a}}{}_{b} + 2 F^{\mathfrak{c}}{}_{ba} C^{\mathfrak{a}}{}_{\mathfrak{b}
\mathfrak{c}} M^{\mathfrak{b} a} - f^{\mathfrak{b}}(x)
C^{\mathfrak{a}}{}_{\mathfrak{b} \mathfrak{c}} M^{\mathfrak{c}}{}_{b}
 - A^{\mathfrak{b} a} A^{\mathfrak{c}}{}_{a} C^{\mathfrak{a}}{}_{\mathfrak{b} \frake}
C^{\frake}{}_{\mathfrak{c} \mathfrak{d}} M^{\mathfrak{d}}{}_{b} \nonumber \\
& -  \tfrac{3}{2} F^{\mathfrak{b} ac}
C^{\mathfrak{a}}{}_{\mathfrak{b} \mathfrak{c}} M^{\mathfrak{c}}{}_{bac} + 3
A^{\mathfrak{b} a} A^{\mathfrak{c} c} C^{\mathfrak{a}}{}_{\mathfrak{b} \mathfrak{d}}
C^{\mathfrak{d}}{}_{\mathfrak{c} \frake} M^{\frake}{}_{bac}
 + 2A^{\mathfrak{b} a} A^{\mathfrak{c} c} C^{\mathfrak{a}}{}_{\mathfrak{b} \mathfrak{d}}
C^{\mathfrak{d}}{}_{\mathfrak{c} \frake} Q^{\frake}{}_{cba} \nonumber \\ 
& - \tfrac{3}{2}
C^{\mathfrak{a}}{}_{\mathfrak{b} \mathfrak{c}} M^{\mathfrak{c}}{}_{bac} M^{\mathfrak{b}
ac} + 2 F^{\mathfrak{b} a}{}_{b}{}^{c} C^{\mathfrak{a}}{}_{\mathfrak{b} \mathfrak{c}}
M^{\mathfrak{c}}{}_{ac} - 2 C^{\mathfrak{a}}{}_{\mathfrak{b} \mathfrak{c}}
Q^{\mathfrak{b} a}{}_{b}{}^{c} M^{\mathfrak{c}}{}_{ac} \nonumber \\
& + F^{\mathfrak{c}}{}_{b}{}^{c} A^{\mathfrak{b} a} C^{\mathfrak{a}}{}_{\mathfrak{c}
\mathfrak{d}} C^{\mathfrak{d}}{}_{\mathfrak{b} \frake} M^{\frake}{}_{ac} - 2
A^{\mathfrak{b} a} C^{\mathfrak{a}}{}_{\mathfrak{b} \mathfrak{c}}
\nabla_{a}M^{\mathfrak{c}}{}_{b} + 2 A^{\mathfrak{b} a} C^{\mathfrak{a}}{}_{\mathfrak{b} 
\mathfrak{c}} \nabla_{c}Q^{\mathfrak{c}}{}_{ab}{}^{c} \nonumber \\
& - 3 C^{\mathfrak{a}}{}_{\mathfrak{b} \mathfrak{c}} M^{\mathfrak{c}}{}_{bac}
\nabla^{c}A^{\mathfrak{b} a} + 2 C^{\mathfrak{a}}{}_{\mathfrak{b} \mathfrak{c}}
Q^{\mathfrak{c}}{}_{abc} \nabla^{c}A^{\mathfrak{b} a}.
\label{WEZQYM1}
\end{align}

\medskip
\noindent
\textbf{Equation for $M^\fraka{}_{abc}$.} In a similar fashion to the
approach adopted for the electromagnetic
zero-quantity $M_{abc}$, and in order to simplify the
calculations, we introduce the Hodge dual of $M^\fraka{}_{abc}$:
\begin{equation}
M^{*\fraka}{}_a \equiv C^\fraka{}_{\frakb\frakc} F^{*\frakc}{}_{ba}A^{\frakb b} +
\nabla^b F^{*\fraka}{}_{ba} = \tfrac12 \epsilon_a{}^{bcd}M^\fraka{}_{bcd}.
\end{equation}
Here, the second equality has been obtained with help of
\eqref{FaradayYMDual} and \eqref{DefinitionZQYM2}. With this expression
we compute the divergence of $M^{*\fraka}{}_a$. Making use of
\eqref{DefinitionZQYMPotential} and the Jacobi identity, a calculation yields
\begin{eqnarray}
&& \nabla_a M^{*\fraka a} = -C^\fraka{}_{\frakb\frake}C^\frake{}_{\frakc\frakd}F^{*\frakd}{}_{ab}
A^{\frakb a}A^{\frakc b} - C^\fraka{}_{\frakb\frake}A^{\frakb a}M^{*\frakc}{}_a
+C^\fraka{}_{\frakb\frake}F^{*\frakc}{}_{ab}\nabla^b A^{\frakb a} \nonumber \\
&& \hspace{1.4cm} = - C^\fraka{}_{\frakb\frake}A^{\frakb a}M^{*\frakc}{}_a 
- \tfrac14 C^\fraka{}_{\frakb\frake} \epsilon_{ab}{}^{cd}F^{\frakb ab}M^\frakc{}_{cd}. \nonumber
\end{eqnarray}
In terms of non-dual objects this takes the form of an integrability condition:
\begin{eqnarray}
&& \epsilon^{abcd} \nabla_d M^{\fraka}{}_{abc} = 
C^{\fraka}{}_{\frakb \frakc}\epsilon^{abcd} A^{\frakb}{}_a M^{\frakc}{}_{bcd} 
+ \tfrac{1}{2} C^{\fraka}{}_{\frakb \frakc}\epsilon^{abcd} F^{\frakb}{}_{ab} M^{\frakc}{}_{cd} \nonumber \\
&& \hspace{-0.55cm} \iff 4\nabla_{[a}M^\fraka{}_{bcd]} = 4 C^{\fraka}{}_{\frakb \frakc} A_{[a}M^\fraka{}_{bcd]}
+ 2C^{\fraka}{}_{\frakb \frakc} F^{\frakb}{}_{[ab}M^{\frakc}{}_{cd]}.
\end{eqnarray}

Then, a suitable wave equation can be obtained applying $\nabla^d$ and
commuting derivatives. After a long calculation in which definitions
\eqref{DefinitionZQYM1}-\eqref{DefinitionDerFaradayZQYM}, equations
\eqref{AdemIdentityYM}-\eqref{ICPotentialZQYM} and the Jacobi identity
are employed, one finds that
\begin{align}
\square M^{\fraka}{}_{abc} = & \ \tfrac{1}{2} R M^{\fraka}{}_{abc} -
A^{\frakb d} A^{\frakc}{}_{d} C^{\fraka}{}_{\frakb \frakd}
C^{\frakd}{}_{\frakc \frake} M^{\frake}{}_{abc} - C^{\fraka}{}_{\frakb \frakc} 
f^\frakb(x) M^{\frakc}{}_{abc} - 2 A^{\frakb d} C^{\fraka}{}_{\frakb \frakc} \nabla_{d}M^{\frakc}{}_{abc} \nonumber \\
& - 6 \Xi d_{[a}{}^{d}{}_{b}{}^{e}M^{\fraka}{}_{c]de} - 6 L_{[a}{}^{d}M^{\fraka}{}_{bc]d}
 + 2 F^{\frakb d}{}_{[ab}C^{\fraka}{}_{|\frakb \frakc |}M^{\frakc}{}_{c]d} - 6
F^{\frakb}{}_{[a}{}^{d}C^{\fraka}{}_{|\frakb \frakc
|}M^{\frakc}{}_{bc]d} \nonumber \\
&  - 2 A^{\frakb d}C^{\fraka}{}_{\frakb \frakc}\nabla_{[a}Q^{\frakc}{}_{|d|bc]} 
 + 2 C^{\fraka}{}_{\frakb \frakc}Q^{\frakb d}{}_{[ab}\nabla_{c]}A^{\frakc}{}_{d} 
- 2 C^{\fraka}{}_{\frakb \frakc}Q^{\frakb d}{}_{[ab}M^{\frakc}{}_{c]d} \nonumber \\
& + F^{\frakb}{}_{[ab}A^{\frakc d}C^{\fraka}{}_{|\frakb\frakd}C^{\frakd}{}_{\frakc \frake |}M^{\frake}{}_{c]d}
 - 2 A^{\frakb}{}_{[a}A^{\frakc d}C^{\fraka}{}_{|\frakb \frakd}C^{\frakd}{}_{\frakc \frake}Q^{\frake}{}_{d|bc]}.
\label{WEZQYM2}
\end{align}

\medskip
\noindent
\textbf{Equation for $Q^\fraka{}_{abc}$.} Similar to the case for the Maxwell
field, a wave equation for $Q^\fraka{}_{abc}$ can be obtained by directly applying the
$\square$ operator to its definition. Since the identity used in the
deduction of equation \eqref{WEZQMaxwellAuxiliary} has the same form for the
Yang-Mills strength field, an analogous procedure can be followed.
A long computation gives:
\begin{align}
\square Q^\fraka{}_{abc} = & \ 6 L_{a}{}^{d} M^{\fraka}{}_{bcd} +
\tfrac{1}{2} R Q^{\fraka}{}_{abc} + 4 L_{a}{}^{d} Q^{\fraka}{}_{dbc}
- f^{\frakb}(x) C^{\fraka}{}_{\frakb \frakc} Q^{\frakc}{}_{abc}  - 2 F^{\frakb}{}_{a}{}^{d}
C^{\fraka}{}_{\frakb \frakc} Q^{\frakc}{}_{dbc} \nonumber \\ 
& - A^{\frakb d} A^{\frakc}{}_{d} C^{\fraka}{}_{\frakb \frakd}
C^{\frakd}{}_{\frakc \frake} Q^{\frake}{}_{abc} + 2
A^{\frakb}{}_{a} A^{\frakc d} C^{\fraka}{}_{\frakb \frakd}
C^{\frakd}{}_{\frakc \frake} Q^{\frake}{}_{dbc} + F^{\frakc}{}_{bc} A^{\frakb d} C^{\fraka}{}_{\frakc \frakd}
C^{\frakd}{}_{\frakb \frake} M^{\frake}{}_{ad} \nonumber \\
& - 2 F^{\frakc}{}_{bc} A^{\frakb d} C^{\fraka}{}_{\frakb \frakd}
C^{\frakd}{}_{\frakc \frake} M^{\frake}{}_{ad} + 2
C^{\fraka}{}_{\frakb \frakc} Q^{\frakc}{}_{dbc}
\nabla_{a}A^{\frakb d} + 4 A^{\frakb d} C^{\fraka}{}_{\frakb
\frakc} \nabla_{[a}Q^{\frakc}{}_{d]bc} \nonumber \\
& + 2 C^{\fraka}{}_{\frakb \frakc} M^{\frakb}{}_{a}{}^{d} \nabla_{d}F^{\frakc}{}_{bc} 
+ 4 \Xi F^{\fraka}{}_{[b}{}^{d}\Lambda_{c]ad}
 + 4 \Xi F^{\fraka}{}_{[b}{}^{d}\Lambda_{|a|c]d} - 2 \Xi
d_{[b}{}^{d}{}_{c]}{}^{e}Q^{\fraka}{}_{ade} \nonumber \\
& + 4 \Xi d_{a}{}^{d}{}_{[b}{}^{e}Q^{\fraka}{}_{|d|c]e} + 4 L_{a[b}M^{\fraka}{}_{c]}
 + 4 L_{[b}{}^{d}Q^{\fraka}{}_{|da|c]} + \Xi F^{\fraka de}\Lambda_{[b|de}g_{a|c]} \nonumber \\
& + 4 F^{\frakb}{}_{[b}{}^{d}C^{\fraka}{}_{|\frakb
\frakc}Q^{\frakc}{}_{a|c]d} + 4 L^{de}g_{a[b}Q^{\fraka}{}_{|d|c]e}.
\label{WEZQYMQ}
\end{align}

\subsubsection{Propagation of the gauge}

In this subsection we show the consistency of the introduction of the gauge
source functions $f^\fraka(x)$ into the analysis of the propagation of the
constraints for the Yang-Mills potential. For this purpose we introduce the
zero-quantity $P^\fraka$ defined as:
\begin{equation}
P^\fraka \equiv \nabla^a A^\fraka{}_a - f^\fraka(x).
\end{equation}
The computation of a wave equation for this field is straightforward:
first, a short calculation employing equations \eqref{CWEPotentialYM},
\eqref{DefinitionZQYM1}, \eqref{DefinitionZQYMPotential} and
\eqref{DivergenceMabYM} gives
\[
\nabla_a P^\fraka = -A^\frakb{}_b C^\fraka{}_{\frakb\frakc}P^\frakc
-M^\fraka{}_b +\nabla_a M^\fraka{}_b{}^a .
\]
From here, application of a further covariant derivative results directly in
\begin{equation}
\square P^\fraka = -f^\frakb C^\fraka{}_{\frakb\frakc}P^\frakc
+A^{\frakb a}C^\fraka{}_{\frakb\frakc}M^\frakc{}_a
-\tfrac12 F^{\frakb ab}C^\fraka{}_{\frakb\frakc}M^\frakc{}_{ab}
- A^{\frakb b}C^\fraka{}_{\frakb\frakc}\nabla_b P^\frakc. \label{WEZQYMGauge}
\end{equation}

\begin{remark}
{\em Geometric wave equations \eqref{WEZQYMPotential}, \eqref{WEZQYM1},
\eqref{WEZQYM2}, \eqref{WEZQYMQ} and \eqref{WEZQYMGauge} are homogeneous in
$M^\fraka{}_a, \ M^\fraka{}_{ab}, \ M^\fraka{}_{abc}, \ Q^\fraka{}_{abc}$,
$P^\fraka$, $\Lambda_{abc}$ and their first covariant derivatives.
Thus, if these fields vanish on an initial hypersurface
$\mathcal{S}_\star$, it can be guaranteed that there exists a unique solution
on a small enough slab of $\mathcal{S}_\star$ and this solution is the trivial
one.}
\end{remark}

\subsubsection{Summary}

The previous discussion about the Yang-Mills field coupled to the conformal
Einstein field equations leads to the following statement:

\begin{proposition}
The system of wave equations \eqref{ReducedWaveCFE1}-\eqref{ReducedWaveCFE5}
with energy-momentum tensor given by \eqref{EMTensorYM} coupled to wave
equations \eqref{CWEFaradayYM}, \eqref{CWEPotentialYM} and
\eqref{CWEDerFaradayYM} written in terms of the operator $\blacksquare$
is a proper quasilinear system of wave equations for the Einstein-Yang-Mills system.
\end{proposition}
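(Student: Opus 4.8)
The plan is to assemble the ingredients already in place and to verify only that no high-order derivative of an unknown survives in any principal part. By Lemma~\ref{Lemma:EvolutionSystem} together with the gauge reduction carried out in Section~\ref{Section:GaugeConsiderations}, the geometric block \eqref{ReducedWaveCFE1}--\eqref{ReducedWaveCFE5} is already a \emph{proper} quasilinear wave system for $(\Xi,s,L_{\mu\nu},d_{\mu\nu\lambda\rho},g_{\mu\nu})$ --- \emph{provided} the terms built from $T_{ab}$ and $T_{abc}$ which enter \eqref{ReducedWaveCFE2}--\eqref{ReducedWaveCFE4} can be written without derivatives of the matter unknowns higher than first order. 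Hence the whole task reduces to controlling the Yang-Mills sector.

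Next I would record the two structural facts about the Yang-Mills matter content. The energy-momentum tensor \eqref{EMTensorYM} is \emph{algebraic} --- in fact quadratic --- in $F^\fraka{}_{ab}$, so the terms $\Xi^3 T_{ab}$, $\Xi^5 T_{ab}T^{ab}$, $\Xi^3 L_{ab}T^{ab}$ and the like enter the geometric equations with no derivatives at all. For the rescaled Cotton tensor \eqref{RescaledCottonTensor} one uses the auxiliary field \eqref{DefinitionFabcYM} to trade $\nabla_a F^\fraka{}_{bc}$ for $F^\fraka{}_{abc}-C^\fraka{}_{\frakb\frakc}A^\frakb{}_a F^\frakc{}_{bc}$; this produces, exactly as in the Maxwell case \eqref{RescaledCottonMaxwell}, an expression for $T_{abc}$ that is algebraic in $(g,\Xi,\nabla\Xi,L_{ab},d_{abcd},A^\fraka{}_a,F^\fraka{}_{ab},F^\fraka{}_{abc})$. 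Consequently $\nabla_\lambda T_\mu{}^\lambda{}_\nu$ in \eqref{ReducedWaveCFE3} and $\nabla_{[\mu}T_{|\lambda\rho|\nu]}$, $\epsilon_{\mu\nu\sigma\tau}\nabla^\tau{}^*T_{\lambda\rho}{}^\sigma$ in \eqref{ReducedWaveCFE4} are at most \emph{first} order in the unknowns, hence harmless lower-order terms.

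For the matter unknowns themselves, equations \eqref{CWEFaradayYM}, \eqref{CWEPotentialYM} and \eqref{CWEDerFaradayYM} furnish the wave equations for $F^\fraka{}_{ab}$, $A^\fraka{}_a$ and $F^\fraka{}_{abc}$, obtained from the Yang-Mills Bianchi identity \eqref{UnphysicalYM3}, from \eqref{UnphysicalYM1} and from the definition \eqref{DefinitionFabcYM} respectively, together with the Jacobi identity \eqref{JacobiIdentityStructureConstants} and the gauge choice \eqref{Definition:YMGaugeSourceFunction}. I would then inspect each right-hand side and confirm it is a polynomial in $g_{\mu\nu}$, $\partial g_{\mu\nu}$, $\Xi$, $\nabla\Xi$, $L_{\mu\nu}$, $d_{\mu\nu\lambda\rho}$, $A^\fraka{}_a$, $\nabla A^\fraka{}_a$, $F^\fraka{}_{ab}$, $F^\fraka{}_{abc}$, $\nabla F^\fraka{}_{abc}$, the structure constants, and the given data $\mathcal{R}(x)$, $\mathcal{F}^\mu(x)$, $f^\fraka(x)$ and their derivatives --- with no second derivative of any unknown except through the principal part. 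Finally, since the geometric wave operator $\square$ acting on these tensor fields contains $\partial\Gamma\sim\partial^2 g$, one replaces $\square$ by the reduced operator $\blacksquare$ of Definition~\ref{Definition:DefinitionBlackBoxOperator}; by Remark~\ref{Remark:DefinitionProper} this removes all second derivatives of the metric from the principal parts, leaving $g^{\sigma\tau}\partial_\sigma\partial_\tau$ acting diagonally on every unknown, so the coupled system takes the form spelled out in Remark~\ref{Remark:WaveEquationsExplicit}; the standard existence and uniqueness theory of \cite{HugKatMar77} then applies and the assertion that this is a \emph{proper} quasilinear system of wave equations for the Einstein--Yang-Mills system follows.

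The main obstacle is the purely computational verification in the third step that \eqref{CWEDerFaradayYM}, and the substituted Cotton tensor feeding \eqref{ReducedWaveCFE3}--\eqref{ReducedWaveCFE4}, are genuinely free of second derivatives of $F^\fraka{}_{ab}$ and of $A^\fraka{}_a$: the derivation of \eqref{CWEDerFaradayYM} starts from $3\nabla_{[d}F^\fraka{}_{|a|bc]}$, and it is only after commuting covariant derivatives and eliminating the resulting curvature and $\nabla\nabla F^\fraka{}_{ab}$ contributions by means of \eqref{UnphysicalYM1}--\eqref{UnphysicalYM3}, the Jacobi identity and the conformal equations \eqref{TraceCFE3}--\eqref{TraceCFE4} that a second-order-free right-hand side emerges. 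One must also check that the non-Abelian $C^\fraka{}_{\frakb\frakc}A^\frakb{}_a F^\frakc{}_{bc}$ contributions, when differentiated, yield only $\nabla A^\fraka{}_a$ (first order in $A^\fraka{}_a$) and never a term that would have to be promoted to a new unknown. These manipulations are lengthy but mechanical and are best carried out with the computer algebra tools already employed elsewhere in the article.
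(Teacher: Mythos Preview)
Your proposal is correct and follows essentially the same approach as the paper: the proposition is stated there as a summary of the preceding construction, and your argument recapitulates exactly those steps --- the algebraic nature of \eqref{EMTensorYM}, the use of the auxiliary field \eqref{DefinitionFabcYM} to render $T_{abc}$ algebraic in the enlarged set of unknowns, the verification that the right-hand sides of \eqref{CWEFaradayYM}, \eqref{CWEPotentialYM} and \eqref{CWEDerFaradayYM} contain at most first derivatives of the unknowns, and the replacement $\square\to\blacksquare$ via Definition~\ref{Definition:DefinitionBlackBoxOperator}. The paper gives no separate proof beyond this discussion, so your write-up is in effect the explicit justification the paper leaves implicit.
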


\section{Applications}
\label{Section:Applications}

The purpose of this section is to provide a direct application of the
analysis of the evolution systems and subsidiary equations associated
to the conformal Einstein field equations coupled to tracefree
matter. Arguably, the simplest applications of our analysis to a
problem of global nature is that of the existence and stability of
de-Sitter like spacetimes. The original stability result of this type,
for vacuum perturbations, was carried in \cite{Fri86b}. For the sake
of conciseness of the presentation and given that the key technical
details have been discussed in the literature --- see
e.g. \cite{CFEBook}, Chapter 15 --- here we pursue a \emph{high-level}
presentation in the spirit of \cite{Fri15}.

\medskip
In order to present the result, it is recalled that one of the key features of
the conformal Einstein field equations is that they are regular up to the
conformal boundary. This property is also satisfied by the conformally invariant
scalar field equation, the Maxwell equations and the Yang-Mills equations.
Thus, they admit initial data prescribed on spacelike hypersurfaces describing
the conformal boundary of spacetime. In an analogous way to the Einstein field
equations, the metric conformal Einstein field equations admit a 3+1
decomposition with respect to a foliation of spacelike hypersurfaces. The
equations in this decomposition which are intrinsic to the spacelike
hypersurfaces are known as the \emph{conformal Einstein constraint equations}
--- see e.g. \cite{CFEBook}, Chapter 11. When evaluated on a spacelike
hypersurface representing the conformal boundary of a de Sitter-like spacetime,
these equations simplify considerably and a procedure to construct the
solutions to these equations is available --- see \cite{CFEBook}, Proposition
11.1 for the vacuum case; this result can be generalised to include tracefree
matter models. From the geometric side, the freely specifiable data in this
construction are given by the intrinsic metric of the conformal boundary and a
TT-tensor prescribing the electric part of the rescaled Weyl tensor. The
initial data obtained by this type of construction will be known as
\emph{asymptotic de Sitter-like initial data}. The component of the conformal
boundary where the asymptotic de Sitter-like data are prescribed can be either
the future or the past one. In the following, for convenience, we restrict the
discussion to the case of the past component of the conformal boundary. 

\medskip
 For asymptotic initial data sets of the type described in the
 previous paragraph one has the following result:

\begin{theorem}
\label{Theorem:Application}
Consider (past) asymptotic de-Sitter initial data for the Einstein field equations with
positive Cosmological constant coupled
to any of the following matter models:
\begin{enumerate}[(i)]
\item the conformally invariant scalar field,
\item the Maxwell field,
\item the Yang-Mills field.
\end{enumerate}
Then one has that: 
\begin{enumerate}[(a)]
\item  The initial data determine a unique, maximal, globally hyperbolic solution to the
Einstein field equations which admits a smooth de Sitter-like
conformal future extension.

\item The set of initial data sets leading to developments which admit
  smooth conformal extensions to both the future and past is an open
  set (in the appropriate Sobolev norm) of the set of asymptotic
  initial data.  

\end{enumerate} 
\end{theorem}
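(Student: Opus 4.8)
The plan is to follow Friedrich's strategy for the nonlinear stability of de Sitter-like spacetimes --- cf. \cite{Fri86b} and \cite{CFEBook}, Chapter 15 --- recast in terms of the system of wave equations derived in the previous sections. The reference solution is the standard conformal representation of the de Sitter spacetime on (a slab of) the Einstein cylinder $\mathbb{R}\times\mathbb{S}^3$: a conformal factor $\Xi$ which is positive on an open region and vanishes, with non-zero gradient, on two disjoint spacelike hypersurfaces representing $\mathscr{I}^{-}$ and $\mathscr{I}^{+}$, together with conformal fields $(g_{ab},s,L_{ab},d_{abcd})$ solving \eqref{TraceCFE1}-\eqref{TraceCFE6} with $T_{ab}=0$ and with vanishing matter fields. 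First I would set up the initial data on a fiduciary hypersurface $\mathcal{S}_\star$ modelling the past component of the conformal boundary: one solves the conformal Einstein constraint equations coupled to the relevant tracefree matter model --- generalising the construction of \cite{CFEBook}, Proposition 11.1, whose freely specifiable pieces are the intrinsic metric of $\mathcal{S}_\star$, a TT-tensor encoding the electric part of the rescaled Weyl tensor, and the appropriate matter seed data. Using the conformal and coordinate gauge source functions $\mathcal{R}(x)$, $\mathcal{F}^\mu(x)$ (and $f^{\fraka}(x)$ in the Yang-Mills case), chosen --- as in the remark after \Cref{Proposition:PropagationOfTheConstraints} --- so that the transversal components of the zero-quantities fix the normal derivatives of the fields, one arranges that all geometric zero-quantities \eqref{ZQ1}-\eqref{ZQ6}, the gauge zero-quantities $Q$, $Q_\mu$, $Q_{\mu\nu}$ and $M_\mu$, and the matter zero-quantities of Section \ref{Section:MatterFields} vanish on $\mathcal{S}_\star$ together with their first derivatives.

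Next I would feed this data into the gauge-reduced evolution system \eqref{ReducedWaveCFE1}-\eqref{ReducedWaveCFE5}, coupled to the corresponding reduced matter wave equations of Section \ref{Section:MatterFields}, and invoke the existence, uniqueness and Cauchy stability theory for quasilinear systems of wave equations of \cite{HugKatMar77} --- applicable since, near the (Lorentzian) de Sitter metric, the system is genuinely second-order hyperbolic in the sense of \Cref{Remark:DefinitionProper}. Because the de Sitter conformal solution extends smoothly slightly beyond both components of its conformal boundary, Cauchy stability provides, for data sufficiently close to the de Sitter data in an appropriate Sobolev norm, a solution of the reduced system on a domain which contains a spacelike hypersurface where the perturbed conformal factor again changes sign; i.e. the perturbed development reaches and crosses a future conformal boundary. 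The matter equations are regular up to the conformal boundary, so they are included in this argument without further work.

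The third step is the propagation of the constraints and of the gauge: since every subsidiary variable vanishes on $\mathcal{S}_\star$ with its first derivatives, \Cref{Proposition:PropagationOfTheConstraints}, \Cref{Lemma:InitialGaugeConditions} and the uniqueness statements for the matter subsidiary systems of Section \ref{Section:MatterFields} imply that the solution of the reduced system is in fact a solution of the full tracefree conformal Einstein field equations with the chosen matter model on the domain of dependence of $\mathcal{S}_\star$. On the open set $\{\Xi>0\}$ the Proposition of Section \ref{Section:BasicPropertiesCFE} then gives a solution $\tilde g_{ab}=\Xi^{-2}g_{ab}$ of \eqref{EFEMatter} with $\tilde T_{ab}=\Xi^2 T_{ab}$ and the physical matter equations; the usual arguments produce the maximal globally hyperbolic development, and the presence of a spacelike future conformal boundary on which $\Xi=0$, $\nabla\Xi\neq0$ shows that this development is future geodesically complete and admits a smooth de Sitter-like conformal future extension, which is assertion (a). Assertion (b) follows because the property of admitting smooth conformal extensions beyond both components of the conformal boundary holds robustly for the de Sitter background and is, by the Cauchy stability of the reduced system, preserved under sufficiently small perturbations of the asymptotic data; hence the corresponding subset of asymptotic data is open.

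I expect the main obstacle to be controlling the length of the existence interval: one must extract from \cite{HugKatMar77} a Cauchy stability statement strong enough to guarantee that the perturbed solution of \eqref{ReducedWaveCFE1}-\eqref{ReducedWaveCFE5} persists across the whole region separating the two conformal boundaries of the de Sitter background --- this relies both on the specific global structure of the de Sitter conformal spacetime and on the hyperbolicity of the reduced system for metrics close to the de Sitter one. A secondary, more bookkeeping-type obstacle is to check that the solvability of the conformal constraints together with tracefree matter of \cite{CFEBook}, Proposition 11.1 genuinely carries over to the conformally invariant scalar, Maxwell and Yang-Mills cases, and that the (rather large) collection of geometric, matter and gauge zero-quantities --- including the third-order condition hidden in $M_\mu$ --- can indeed all be annihilated on $\mathcal{S}_\star$ by the choice of seed data and gauge source functions.
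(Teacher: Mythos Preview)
Your proposal is correct and follows essentially the same route as the paper: both use Friedrich's strategy of perturbing the de Sitter solution in its conformal representation on the Einstein cylinder, apply the existence and Cauchy stability theory of \cite{HugKatMar77} to the gauge-reduced quasilinear wave system to push the solution past a hypersurface where the perturbed conformal factor changes sign, and then invoke the homogeneous subsidiary systems to propagate the constraints and gauge. The paper's proof makes the existence-interval argument slightly more concrete by evolving the background to $t=\pi$ where $\mathring{\Xi}=-1$, so that smallness of $\breve{\Xi}$ forces a sign change of $\Xi$ between $t=0$ and $t=\pi$ --- this is exactly the mechanism you flag as the main obstacle.
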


\begin{proof}
We only provide a sketch of the proof as the strategy is similar to
the one followed in the proof of the stability of the Milne spacetime
in \cite{GasVal15}. A version of the proof which uses first order
symmetric hyperbolic systems can be found in \cite{CFEBook}, Chapter 15. 

\medskip
The first main observation is that the conformal representation of the (vacuum)
de Sitter spacetime in terms of the embedding into the Einstein cylinder gives
rise to a solution to the conformal Einstein field equations. Coordinates
$(x)=(t,\underline{x})$ can be chosen so that the two components of the
conformal boundary are located at $t=\pm \tfrac{1}{2}\pi$. For this
representation the Ricci scalar takes the value $-6$ and the conformal factor
is given by $\mathring{\Xi}=\cos t$. In the following we denote by
$\mathring{\mathbf{u}}$ this solution to the conformal equations and by
$\mathring{\mathbf{u}}_\star$ its restriction to the hypersurface
$t=-\tfrac{1}{2}\pi$ which corresponds to the past conformal boundary
$\mathscr{I}^-$.  We will look for solutions to the conformal evolution
equations of the form $\mathbf{u}= \mathring{\mathbf{u}} + \breve{\mathbf{u}}$
with initial data given by $\mathbf{u}_\star= \mathring{\mathbf{u}}_\star +
\breve{\mathbf{u}}_\star$. The fields $ \breve{\mathbf{u}}$ and
$\breve{\mathbf{u}}_\star$ describe the (non-linear) perturbations. Substituting
this form of the solution into the evolution equations one obtains a system of
quasilinear equations for the components of $\breve{\mathbf{u}}$ which can be
schematically written as
\begin{equation}
\label{WaveEquationProof}
\big( \mathring{g}^{\mu\nu}(x)
+\breve{g}^{\mu\nu}(x;\breve{\mathbf{u}})\big) \partial_\mu\partial_\nu
\breve{\mathbf{u}} = \bmF(x;\breve{\mathbf{u}},\partial\breve{\mathbf{u}}).
\end{equation}
In the above expression $\mathring{g}^{\mu\nu}$ denote the
components of the contravariant metric on the Einstein cylinder. The
above equation is in the form for which the local existence and Cauchy
stability theory of quasilinear wave equations as given in, say,
\cite{HugKatMar77} applies. Initial data for the system
\eqref{WaveEquationProof} are of the form
$(\breve{\mathbf{u}}_\star,\partial_t\breve{\mathbf{u}}_\star)$. The
size of the initial data is encoded in the expression 
\[
\parallel (\breve{\mathbf{u}}_\star,\partial_t\breve{\mathbf{u}}_\star)\parallel_{\mathbb{S}^3,m}\equiv \parallel
\breve{\mathbf{u}}_\star\parallel_{\mathbb{S}^3,m}+ \parallel
\partial_t\breve{\mathbf{u}}_\star\parallel_{\mathbb{S}^3,m}
\]
 where $\parallel \phantom{}\parallel_{\mathbb{S}^3,m}$ denotes the standard
Sobolev norm of order $m\geq 4$ on a manifold which is topologically
$\mathbb{S}^3$. If the initial data
$(\breve{\mathbf{u}}_\star,\partial_t\breve{\mathbf{u}}_\star)$ are
sufficiently small then the contravariant metric on $\mathscr{I}^-$ given by
$\mathring{g}^{\mu\nu}(x_\star)
+\breve{g}^{\mu\nu}(x_\star;\breve{\mathbf{u}}_\star)$ is Lorentzian --- this
property is preserved in the evolution. Now, the background solution
$\mathring{\mathbf{u}}$ is well-defined and smooth on the whole of the Einstein
cylinder; in particular, up to $t=\pi$ for which one has that
$\mathring{\Xi}|_{t=\pi}=-1$. It follows from the Cauchy stability statements
in \cite{HugKatMar77} that if $\parallel
(\breve{\mathbf{u}}_\star,\partial_t\breve{\mathbf{u}}_\star)\parallel_{\mathbb{S}^3,m}$
is sufficiently small then the solution will exists up to $t=\pi$. By
restricting, if necessary, the size of the data one has that 
\[
\Xi|_{t=\pi}= -1 + \breve{\Xi}_{t=\pi} <0. 
\]
From the above observation it can be argued that the function
$\Xi=\mathring{\Xi}+\breve{\Xi}$ over the Einstein cylinder becomes zero on a
spacelike hypersurface which lies between the times $t=0$ and $t=\pi$. This
hypersurface corresponds to the future conformal boundary ($\mathscr{I}^+$)
arising from the data
$(\breve{\mathbf{u}}_\star,\partial_t\breve{\mathbf{u}}_\star)$ on
$\mathscr{I}^-$. 

\smallskip
Once the existence of a global solution to the evolution system has been
established, one makes use of the uniqueness of solutions to systems of
quasilinear wave equations to prove the propagation of the constraints. To this
end one observes that if the initial data satisfies the conformal constraints
at the past conformal boundary $\mathscr{I}^-$, then a calculation shows that
the  zero-quantities and their normal derivatives also vanish on
$\mathscr{I}^-$. As the subsidiary evolution system is homogeneous in the
zero-quantities, it follows that its unique solution must be the trivial (i.e.
vanishing) one. Thus, one has obtained a global solution to the conformal
Einstein field equations. From the general theory of the conformal Einstein
field equations --- see e.g. Proposition 8.1 in Chapter 8 of \cite{CFEBook} ---
this solution implies, in turn, a solution to the Einstein field equations with
positive Cosmological constant having de Sitter-like asymptotics.
\end{proof}

\begin{remark}
{\em The above theorem is a global stability result for the de Sitter spacetime
under perturbations involving a conformally invariant scalar field, a Maxwell
field or a Yang-Mills field as (trivially) the de Sitter spacetime can be
constructed from asymptotic initial data. Thus, for a suitably small
neighbourhood of asymptotic de Sitter data, all data in the neighbourhood give
rise to global solutions.}
\end{remark}

\begin{remark}
{\em The cases (ii) and (iii) --- the Maxwell and Yang-Mills fields, have been
studied using first order symmetric hyperbolic systems in \cite{Fri91}.
However, the case (i) --- the conformally invariant scalar field --- has, hitherto,
not been considered in the literature.}
\end{remark}

\begin{remark}
{\em The theory in \cite{HugKatMar77} is the analogue for systems wave
  equations of the theory for symmetric hyperbolic systems developed
  in \cite{Kat75}. A version of the key existence and Cauchy stability
result in \cite{HugKatMar77} given in the form used in Theorem
\ref{Theorem:Application} can be found in the Appendix of \cite{GasVal15}. }
\end{remark}

\section{Concluding remarks}
The global existence and stability result presented in Theorem
\ref{Theorem:Application} is the simplest application of the analysis of the
second order conformal evolution equations developed in this article. A further
application is to the construction of anti-de Sitter-like spacetimes with
tracefree matter models following the strategy implemented in \cite{CarVal18b}
--- this construction will be presented elsewhere \cite{CarVal19b}. The theory
developed in this article should also allow to obtain matter generalisation of
the existence results for characteristic initial value problems considered in
\cite{ChrPae13b}.

More crucially, the analysis in this article should also pave the road for
numerical simulations of spacetimes with tracefree matter in the conformal
setting. The use of the metric conformal Einstein equations in conjunction with
a coordinate gauge prescribed in terms of generalised wave condition provides a
formulation of the evolution equations for the conformal fields which can be
regarded as a (unphysical) \emph{reduced Einstein equation} with (unphysical)
matter described by the conformal factor, Friedrich scalar, Schouten tensor and
the rescaled Weyl tensor. Viewed in this way, one can readily adapt the
plethora of numerical know-how that has been developed in the numerical
simulations of the Einstein field equations. A further discussion of this idea
can be found in \cite{Fri03a}. 

\section*{Acknowledgements}
We thank Alfonso Garc\'{\i}a-Parrado for his help with certain aspects of the
computer algebra implementation of the calculations carried out in this
article. DAC thanks support granted by CONACyT (480147). We thank the anonymous
referee for comments and suggestions which have improved the presentation and
results of the article.

\end{document}